\newtheorem{theorem}{Theorem}
\newtheorem{corollary}[theorem]{Corollary}
\newtheorem{definition}[theorem]{Definition}
\newtheorem{example}[theorem]{Example}
\newtheorem{lemma}[theorem]{Lemma}
\newtheorem{proposition}[theorem]{Proposition}
\newtheorem{remark}[theorem]{Remark}
\newtheorem{remarks}[theorem]{Remarks}
\newenvironment{proof}[1][Proof]{\textbf{#1.} }{\ \rule{0.5em}{0.5em}\\}
\newcommand{\N}{{\mathbb N}}
\newcommand{\Z}{{\mathbb Z}}
\newcommand{\F}{\mathbb F_q}
\newcommand{\B}{{\mathbb B}}
\renewcommand{\bf}[1]{\mathbf{#1}}
\newcommand{\Le}{\mathbb L}
\newcommand{\Q}{{\mathcal Q}}
\newcommand{\D}{{\mathcal D}}
\newcommand{\mad}{\mathrm{-mad}}
\newcommand{\tq}{\, \mid \,}
\newcommand{\supp}{{\rm supp}}   
\begin{document}
\title{From ds-bounds for cyclic codes to true distance for abelian codes.}

 \author{ J.J. Bernal, M. Guerreiro\footnote{M. Guerreiro is with Departamento de Matem\'atica, Universidade Federal de Vi\c cosa, 36570-900 Vi\c cosa-MG, Brazil. Supported by CNPq-Brazil, Processo PDE 233497/2014-5.
E-mail: marines@ufv.br}~
and J. J. Sim\'on
\footnote{J. J. Bernal and J. J. Sim\'on are with the Departamento de Matem\'aticas, Universidad de Murcia, 30100 Murcia, Spain. Partially supported by MINECO project MTM2016-77445-P and Fundaci\'{o}n S\'{e}neca of Murcia.
E-mail: \{josejoaquin.bernal, jsimon\}@um.es}
\footnote{This work was partially presented at the ISITA 2016, Monterey, CA, USA.}
}

\maketitle


%
\begin{abstract}
In this paper we develop a technique to extend any bound for the minimum distance of cyclic codes constructed from its defining sets (ds-bounds) to abelian (or multivariate) codes through the notion of $\B$-apparent distance. We use this technique to improve the searching for new bounds for the minimum distance of abelian codes. We also study conditions for an abelian code to verify that its $\B$-apparent distance reaches its (true) minimum distance. Then we construct some tables of such codes as an application. 
\end{abstract}
\textbf{Keywords:}
Abelian code, bounds for minimum distance, true minimum distance, algorithm.

\section{Introduction}

\noindent 
The study of abelian codes is an important topic in Coding Theory, having an extensive literature, because they have good algebraic properties that allow one to construct good codes with efficient encoding and decoding algorithms. More precisely, regarding decoding, the two most known general techniques are permutation decoding \cite{BSpermdec} and the so-called locator decoding~\cite{Blah} that uses the Berlekamp-Massey algorithm~\cite{Sakata} (see also~\cite{KZ}).

Even though the mentioned decoding methods require to know the minimum distance, or a bound for it, there are not much literature or studies on its computation and properties, or it does exist only for specific families of abelian codes (see~\cite{Blah}). Concerning BCH bound, in \cite{Camion}, P. Camion introduces an extension from cyclic to abelian codes which is computed through the apparent distance of such codes.  Since then there have appeared some papers improving the original computation and giving a notion of multivariate BCH bound and codes (see, for example,~\cite{BBCS2,  Evans}).

These advances lead us to some natural questions about the extension to the multivariate case of all generalizations and improvements of the BCH bound known for cyclic codes; specifically, those bounds on the minimum distance for cyclic codes which are defined from defining sets. 

There are dozens of papers on this topic regarding approaches from matrix methods (\cite{BettiSala, ZK1}) through split codes techniques (\cite{HTY,Jensen}) until arriving at the most classical generalizations based on computations over the 
defining set, as the Hartmann-Tzeng (HT)  bound~\cite{HT}, the Ross (R) bound~\cite{Roos} and the improvements by Van Lint and Wilson, as the shifting bound (SB) \cite{vLW}.

Having so many references on the subject, it seems very necessary to find a general method that allows one to extend any bound for the minimum distance of cyclic codes based on the defining set to the multivariate case. This is our first goal. We shall show a method to extend to the multivariate case any bound of the mentioned type via associating an apparent distance to such bound. 

The second target of our study is to improve the searching for new bounds for abelian codes. At this point, we must  honestly say that these searches may only have interest for codes whose minimum distances are not known (in fact, if one knows the minimum distance of a code one does not need a bound for it), so that, our examples consider codes of lengths necessarily large. 
In our opinion, long abelian codes are not so bad (see \cite{BJ}) in terms of performance.

As we work with long codes, certainly it seems impossible at the moment to compute their minimum distance, and so, it is natural to ask for conditions on them to ensure that a founded bound is in fact the minimum distance. This is the last goal of this paper. We found conditions for a bivariate abelian code to reach the mentioned equality and we write these conditions in terms of its defining set  from the notion of composed polynomial matrices (CP-matrices, for short). We comment the extension of this results to several variables. We illustrate with some examples (of large codes) how this technique works. 

In order to achieve our goals, we give in Section~\ref{sec3} a notion of defining set bound (ds-bound) for the minimum distance of cyclic codes. In Section~\ref{sec4}, we revisit the relation between the weight of codewords of abelian codes and the apparent distance of their discrete Fourier transforms. In Section~\ref{sec5}, we use this technique
to define the apparent distance of an abelian code with respect to a set of ds-bounds. In Section~\ref{sec6}, we adapt a known algorithm given in \cite{BBCS2} (of linear complexity by Remark \ref{complejidad}) to compute the $\B$-apparent distance of an abelian code. Finally, we study the abelian codes which verify the equality between its BCH bound and its minimum distance. For two variables, we find some  sufficient conditions that are easy to extend to several variables.
 
\section{Preliminaries} \label{sec2}

 Let $\F$  be a finite field with $q$ elements, with $q$ a power of a prime $p$, $r_i$ be positive integers, for all $i\in \{ 1,\ldots, s\}$, and $n=r_1\cdots r_s$.  We denote by $\Z_{r_i}$ the ring of integers modulo $r_i$ and we shall always write its elements as canonical representatives.
 
An \textbf{abelian code} of length $n$ is an ideal in the algebra   $\F(r_1,\ldots, r_s)=\F[X_1,\ldots, X_s]/\langle X_1^{r_1}-1,\ldots, X_s^{r_s}-1\rangle $ and throughout this work  we assume that this algebra is semisimple; that is, $\gcd (r_i,q)=1$, for all $i\in \{ 1,\dots,s\}$.  Abelian codes are also called multidimensional cyclic codes (see, for example, \cite{Imai}). 

The codewords are identified with polynomials $f(X_1,\dots,X_s)$ in which, for each monomial, the degree of the indeterminate $X_k$ belongs to $\Z_{r_k}$. We denote by $I$ the set $\Z_{r_1}\times\cdots\times \Z_{r_s}$ and we  write the elements $f \in  \F(r_1,\dots,r_s)$ as $f=f(X_1,\dots,X_s)=\sum a_\bf{i} \bf{X}^\bf{i}$, where $\bf{i}=(i_1,\dots, i_s)\in I$ and $\bf{X}^\bf{i}=X_1^{i_1}\cdots X_s^{i_s}$. Given a polynomial $f \in \F[X_1,\dots,X_s]$ we denote by $\overline{f}$ its image under the canonical projection onto $\F(r_1,\dots,r_s)$. 

For each $i\in \{ 1,\ldots, s\}$, we denote by $R_{r_i}$ (resp. $U_{r_i}$) the set of all $r_i$-th roots of unity (resp. all $r_i$-th primitive roots of unity) and define $R=\prod_{i=1}^s R_{r_i}$ ($U=\prod_{i=1}^s U_{r_i}$).

For $f=f(X_1,\dots,X_s) \in \F[X_1,\dots,X_s]$ and $\bar{\alpha}\in R$, we write $f(\bar{\alpha})=f(\alpha_1,\dots,\alpha_s)$. For $\bf{i}=(i_1,\ldots, i_s)\in I$, we write $\bar{\alpha}^{\bf{i}} = (\alpha_1^{i_1}, \dots ,\alpha_s^{i_s})$.

It is a known fact that every abelian code $C$ in $\F(r_1,\dots,r_s)$ is totally determined by its \textbf{root set} or \textbf{set of zeros}, namely
$$
Z(C)=\left\{\bar{\alpha}\in  R \tq f(\bar{\alpha})=0,\;\; \mbox{ for all }\; f\in C \right\}.
$$ 
The set of non zeros is denoted by $\overline{Z(C)}=R\setminus Z(C)$.  For a fixed $\bar{\alpha}\in U$, the code $C$ is  determined by its \textbf{defining set}, with respect to $\bar{\alpha}$, which is defined as 

$$
\D_{\bar{\alpha}}(C) = \left\{ \bf{i}\in I \tq f(\bar{\alpha}^{\bf{i}})=0, 
\text{ for all } f\in C\right\}.
$$

Given an element $a=(a_1,\dots,a_s)\in I$, we shall define its \textbf{$q$-orbit} modulo  $\left(r_1,\ldots,r_s \right)$ as $ Q(a)=\left\{\left(a_1\cdot q^{i} ,\dots, a_s\cdot q^{i}  \right) \in I \tq i\in \N\right\}$. In the case of a semisimple algebra, it is known that any defining set  $\D_{\bar{\alpha}}\left(C\right)$ is a disjoint union of $q$-orbits modulo $(r_1,\dots,r_s)$. Conversely, every union of $q$-orbits modulo $(r_1,\dots,r_s)$ determines an abelian code (an ideal) in $ \F(r_1,\dots,r_s)$  (see, for example, \cite{BBCS2} for details). We note that the notions of root set and defining set also apply to polynomials. Moreover, if $C$ is the ideal generated by the polynomial $f$ in $\F(r_1,\dots,r_n)$, then $\D_{\bar{\alpha}}\left(C\right)=\D_{\bar{\alpha}}\left(f\right)$.
 
We recall that the notion of defining set also applies to cyclic codes. For $s=1$ and  $r_1=n$, a $q$-orbit is called a \textbf{ $q$-cyclotomic coset} of a positive integer $b$ modulo $n$ and  it is the set $C_{q}(b) = \{b\cdot q^{i}\in \Z_{n} \tq i\in \N\}$.\\

Throughout this paper, we fix the notation $\Le|\F$ for an extension field containing $U_{r_i}$, for all  $i\in\{1,\dots,s\}$. The \textbf{discrete Fourier transform of a polynomial $f\in \F(r_1,\dots,r_s)$ with respect to $\bar{\alpha} \in U$} (also called Mattson-Solomon polynomial in \cite{Evans}) is the polynomial  
$\varphi_{\bar{\alpha},f}(\bf{X})=\sum_{\bf{j}\in I} f(\bar{\alpha}^{\bf{j}})\bf{X}^{\bf{j}}\in \Le(r_1,\dots,r_s).$
It is known that the discrete Fourier transform may be viewed as an isomorphism of algebras 
$\varphi_{\bar{\alpha}}:\Le(r_1,\dots,r_s)\longrightarrow (\Le^{|I|},\star),$
 where the multiplication ``$\star$'' in $\Le^{|I|}$ is defined coordinatewise. Thus, we may see $\varphi_{\bar{\alpha},f}$ as a vector in $\Le^{|I|}$ or as a polynomial in $\Le(r_1,\dots,r_s)$ (see \cite[Section 2.2]{Camion}).
 The inverse of the discrete Fourier transform is $\varphi^{-1}_{\bar{\alpha},g}(\bf{X})= \frac{1}{r_1r_2\cdots r_s} 
 \sum_{\bf{j}\in I} g(\bar{\alpha}^{-\bf{j}}) \bf{X}^\bf{j}$.

\section{Defining set bounds for cyclic codes} \label{sec3}

In this section we deal with cyclic codes; that is, $r_1=n$.  By $\mathcal{P}(\Z_n )$ we denote the power set of  $\Z_n $. We take an arbitrary $\alpha\in U_n$.

\begin{definition} \label{boundg}
A \textbf{defining set bound} (or \textbf{ds-bound}, for short) for the minimum distance  of cyclic codes  is a family of relations $\delta = \{\delta_n\}_{n\in \N} $ such that, for each $n\in \N $, $\delta_n \subseteq \mathcal{P}(\Z_n ) \times \N$ and it satisfies the following conditions:
\begin{enumerate}
\item If $C$ is a cyclic code in $\mathbb{F}(n)$ such that $ N \subseteq \D_{\alpha}(C)$, then $1\leq a\leq d(C)$, for all $(N, a) \in \delta_n$.
\item If $ N \subseteq M$ are subsets of  $\Z_n $ then
$(N, a) \in \delta_n$ implies $(M, a) \in \delta_n$.
\item For all $N \in  \mathcal{P}(\Z_n )$, $(N,1)\in \delta_n$. 
\end{enumerate}
\end{definition}

From now on, sometimes we write simply $\delta$ to denote a ds-bound or any of its elements independently on the length $n$ of the code. It will be clear from the context which one is being used.

\begin{remarks}\rm{
\textbf{(1)} For example, the BCH bound states that, for any cyclic code in $\mathbb{F}_q (n)$ which in its set of zeros has a string of $t -1$ consecutive powers of some $\alpha  \in U_n$, the minimum distance of the code is at least $t$ \cite[Theorem 7.8]{MWS}. 

Now, define $\delta \subset \mathcal{P}(\Z_n ) \times \N$ as follows: for any $a\geq 2$, $(N, a) \in \delta$ if and only if there exist $i_0, i_1,\ldots,i_{a-2}$  in $N$ which are consecutive integers modulo $n$. Then the  BCH bound says that $\delta$ is a ds-bound for any cyclic code (we only have to state Condition 3 as a convention; so that $(\emptyset,1)\in \delta_{BCH}$).

\textbf{(2)} It is easy to check that all extensions of the BCH bound, all new bounds  from the defining set of a cyclic code as in \cite{BettiSala,HT,Roos1,Roos,ZK1} and the new bounds and improvements arising from Corollary 1, Theorem 5 and results in Section 4 and Section 5 in \cite{vLW}, also verify Definition~\ref{boundg}.
}\end{remarks}

In general, for any bound for the minimum distance of a cyclic code, say $b$, we denote the corresponding ds-bound by $\delta_{b}$. In order to relate the idea of ds-bound with the Camion's apparent distance, which will be defined later, we consider the following family of maps.

\begin{definition}
Let $\delta$ be a ds-bound for the minimum distance of cyclic codes. The  \textbf{optimal ds-bound associated to $\delta$} is the family $\overline{\delta}= \{\overline{\delta}_n\}_{n\in \N} $ of maps $\overline{\delta}_n :  \mathcal{P}(\Z_n ) \longrightarrow  \N$ defined as 
$\overline{\delta}_n(N) = \max \{ b\in \N \,|\, (N,b) \in \delta_n \}$.
\end{definition}

The following result is immediate.
\begin{lemma} \label{lemma1} 
Let $\delta$ be a ds-bound for the minimum distance of cyclic codes.   Then, for each $n\in \Z$:
\begin{enumerate}
\item If $C$ is a cyclic code in   $\mathbb{F}(n)$ such that $ N \subseteq \D_{\alpha}(C)$, then $1\leq \overline{\delta}_n (N)\leq d(C)$.

\item If $ N \subseteq M\subseteq \Z_n $, then
$\overline{\delta}_n (N) \leq \overline{\delta}_n (M)$.
\end{enumerate}
\end{lemma}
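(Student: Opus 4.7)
The plan is to unwind the definition $\overline{\delta}_n(N)=\max\{b\in\N\,|\,(N,b)\in\delta_n\}$ and read the two claims directly off Conditions 1--3 of Definition \ref{boundg}. The proof should be almost purely symbolic; I expect no obstacle beyond a small well-definedness check.

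First, to justify that the maximum makes sense, I would observe that by Condition 3 the pair $(N,1)$ always lies in $\delta_n$, so the set $\{b\in\N\,|\,(N,b)\in\delta_n\}$ is nonempty and the maximum is taken over a nonempty collection that contains $1$. This simultaneously gives the lower inequality $1\leq\overline{\delta}_n(N)$ in part (1). For boundedness, as soon as there exists a cyclic code $C$ with $N\subseteq\D_\alpha(C)$ (for instance the code whose defining set is the union of $q$-cyclotomic cosets generated by $N$, provided $N\neq\Z_n$), Condition 1 bounds every admissible $b$ by $d(C)$, so the maximum is attained by some integer.

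For the upper bound in part (1), let $a=\overline{\delta}_n(N)$. By definition of the maximum, $(N,a)\in\delta_n$. Since by hypothesis $N\subseteq\D_\alpha(C)$, Condition 1 immediately yields $a\leq d(C)$, i.e.\ $\overline{\delta}_n(N)\leq d(C)$. Combined with the previous paragraph this finishes (1).

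Part (2) is the monotonicity statement and follows from Condition 2. Setting $a=\overline{\delta}_n(N)$, we have $(N,a)\in\delta_n$, and the hypothesis $N\subseteq M$ together with Condition 2 gives $(M,a)\in\delta_n$. Therefore $a$ belongs to $\{b\in\N\,|\,(M,b)\in\delta_n\}$, which forces $a\leq\overline{\delta}_n(M)$. Since $a=\overline{\delta}_n(N)$, the inequality $\overline{\delta}_n(N)\leq\overline{\delta}_n(M)$ follows. As anticipated, the only mildly delicate point is the well-definedness of the maximum, and even that is resolved by choosing any single code whose defining set contains $N$; everything else is a one-line application of the axioms.
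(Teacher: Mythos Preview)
Your proof is correct and matches the paper's approach: the paper simply states that the lemma is ``immediate'' and gives no argument, and your unwinding of the definition via Conditions 1--3 of Definition~\ref{boundg} is exactly the intended one-line verification. The extra care you take with well-definedness of the maximum is not addressed in the paper but is a reasonable (and harmless) addition.
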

As we noted above, we may omit the index of the map $\bar{\delta}_n$, because it will be clear from the context 
for which value it is being taken.

\section{Apparent distance of matrices} \label{sec4}

We begin this section recalling the notion and notation of a hypermatrix that will be used hereby, as it is described in \cite{BBCS2}. For any $\bf{i}\in I= \Z_{r_1}\times\cdots\times \Z_{r_s}$, we write its $k$-th coordinate as $\bf{i}(k)$. A \textbf{hypermatrix with entries in a field $S$ indexed by $I$ (or an $I$-hypermatrix over $S$)} is an $s$-dimensional $I$-array, denoted by $M=\left(a_{\bf{i}}\right)_{\bf{i}\in I}$, with $a_{\bf{i}}\in S$ \cite{Yamada}. The set of indices, the dimension and the ground field will be omitted if they are clear from the context. For $s=2$, $M$ is a matrix and when $s=1$, $M$ is a vector. We write $M=0$ when all its entries are $0$ and   $M\neq 0$, otherwise.  As usual, a \textbf{hypercolumn} is defined as $H_M(j,k)=\left\{ a_{\bf{i}}\in M\tq \bf{i}(j)=k\right\}$, with $1\leq j\leq s$ and $0\leq k<r_j$, where $a_{\bf{i}}\in M$ means that $a_{\bf{i}}$ is an entry of $M$. A hypercolumn can be seen as an $(s-1)$-dimensional hypermatrix. In the case $s=2$, we refer to hypercolumns as rows or columns and, when $s=1$, we say entries.

 For any $I$-hypermatrix $M$ with entries in a field, we define the support of $M$ as the set $\supp(M)=\left\{ \bf{i}\in I \tq a_{\bf{i}}\neq 0\right\}$. Its complement with respect to $I$ will be denoted by $\D(M)$. When $\D(M)$ (or $\supp(M)$) is an union of $q$-orbits we say that $M$ is a \textbf{$q$-orbits hypermatrix}.
Let $D\subseteq I$. The \textbf{hypermatrix afforded by $D$} is defined as $M=\left(a_{\bf{i}}\right)_{\bf{i}\in I}$, where $a_{\bf{i}}=1$ if $\bf{i}\not\in D$ and $a_{\bf{i}}=0$ otherwise; it will be denoted by $M=M(D)$. Note that if $D$ is union of $q$-orbits then $M(D)$ is a $q$-orbits hypermatrix. To define and compute the apparent distance of an abelian code we will use the hypermatrix afforded by its defining set, with respect to $\bar{\alpha} \in U$.

 We define a partial ordering on the set $\left\{M(D)\tq D\text{ is union of $q$-orbits in $I$}\right\}$  as follows:
\begin{equation}\label{matrixordering}
M(D)\leq M(D') \Leftrightarrow \supp\left(M(D)\right)\subseteq \supp\left(M(D')\right).
\end{equation}
Clearly, this condition is equivalent to $D'\subseteq D$. 

We begin with the apparent distance of a vector in $\Le^{n}$.

\begin{definition} 
Let $\delta$ be a ds-bound for the minimum distance of cyclic codes and $v\in \mathbb{L}^n$ a vector. The \textbf{apparent distance of $v$ with respect to $\delta$} (or \textbf{$\delta$-apparent distance of $v$}, for short), denoted by $\delta^*(v)$, is defined as 
\begin{enumerate}
\item If $v=0$, then $\delta^*(v)=0$.
\item If $v\neq 0$, then $\delta^*(v)= \overline{\delta}(\mathbb{Z}_n\setminus \supp(v))$. 
\end{enumerate}
\end{definition}
From now on we denote by $\B$ a set of ds-bounds which are used to proceed a  computation of the apparent distances of matrices, hypermatrices or abelian codes. 

\begin{definition} \label{appdistbound}
Let $v\in \mathbb{L}^n$. The \textbf{apparent distance of $v$ with respect to $\B$}  denoted by $\Delta_\B(v)$, is: 
\begin{enumerate}
\item If $v=0$, then $\Delta_\B(v)=0$.
\item If $v\neq 0$, then $\Delta_\B(v)= \max\{\delta^*(v)\mid \delta\in\B\}$. 
\end{enumerate}
\end{definition}

\begin{remarks}\label{propiedades dist apar en 1 var}
 The following properties arise straightforward from the definition above, for any $v\in \mathbb{L}^n$.
 \begin{enumerate}
  \item If $v\neq 0$ then $\Delta_{\B}(v)\geq 1$.
  \item If $\supp (v)\subseteq \supp(w)$ then $\Delta_{\B}(v)\geq \Delta_{\B}(w)$. \\
 \end{enumerate}
\end{remarks}

\begin{proposition}\label{prop1}
Let $f\in \Le(n)$ and $v$ be the vector of its coefficients. Fix any $\alpha\in U_n$. Then $ \Delta_{\B}(v) \leq \omega(\varphi_{\alpha,f}^{-1})=|\overline{Z(f)}|$.
\end{proposition}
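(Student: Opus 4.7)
The plan is to engineer a cyclic code in $\Le(n)$ whose defining set with respect to $\alpha$ coincides exactly with $\Z_n\setminus\supp(v)$, and then to apply Property~1 of Definition~\ref{boundg} together with the trivial bound $d(C)\leq \omega(c)$ for any nonzero $c\in C$. The right-hand side of the inequality will appear naturally as the weight of the inverse DFT of $f$.

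If $v=0$ then $f=0$ and both sides are $0$ by Definition~\ref{appdistbound}, so I assume $v\neq 0$ and set $h=\varphi_{\alpha,f}^{-1}\in \Le(n)$. The $j$-th coefficient of $h$ is $n^{-1} f(\alpha^{-j})$, so $\omega(h)=|\{j\in \Z_n:\alpha^{-j}\notin Z(f)\}|$. Since $j\mapsto \alpha^{-j}$ is a bijection between $\Z_n$ and the set $R_n$ of all $n$-th roots of unity, this cardinality equals $|R_n\setminus Z(f)|=|\overline{Z(f)}|$, which already establishes the rightmost equality in the statement.

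The key step is to identify $\D_\alpha(h)$. Because $\varphi_\alpha$ and $\varphi_\alpha^{-1}$ are mutually inverse algebra isomorphisms (Section~\ref{sec2}), one has $\varphi_\alpha(h)=f$. Writing $f=\sum_j a_j X^j$ and $\varphi_{\alpha,h}(X)=\sum_j h(\alpha^j) X^j$ and matching coefficients yields $h(\alpha^j)=a_j$ for every $j\in\Z_n$, so $\D_\alpha(h)=\{j:h(\alpha^j)=0\}=\Z_n\setminus\supp(v)$. The cyclic code $C=\langle h\rangle \subseteq \Le(n)$ then has $\D_\alpha(C)=\D_\alpha(h)=\Z_n\setminus\supp(v)$ and contains the nonzero word $h$ of weight $|\overline{Z(f)}|$. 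For every $\delta\in\B$, Property~1 of Definition~\ref{boundg} together with the definition of $\overline{\delta}$ gives
$$\overline{\delta}(\Z_n\setminus\supp(v))\leq d(C)\leq \omega(h)=|\overline{Z(f)}|,$$
and taking maxima over $\delta\in\B$ yields $\Delta_\B(v)\leq |\overline{Z(f)}|$.

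The only obstacle that requires care is that Definition~\ref{boundg} formally refers to cyclic codes in $\F(n)$, while the code $C$ I produce lives in the extension $\Le(n)$. This is the point I expect to pause on: all ds-bounds considered in the paper (BCH, HT, Roos, shifting, and so on) are ultimately weight arguments that depend only on the defining set and any evaluation field containing $U_n$, so Property~1 transfers verbatim to $\Le$-cyclic codes. Once that is accepted, the result is a direct consequence of the inverse Fourier transform and the defining-set formalism.
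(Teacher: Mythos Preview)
Your proof is correct and follows essentially the same route as the paper: both construct the cyclic code $C=\langle \varphi_{\alpha,f}^{-1}\rangle$ in $\Le(n)$, identify its defining set with $\Z_n\setminus\supp(v)$ via the DFT, and then bound $\Delta_\B(v)$ by $d(C)\leq\omega(\varphi_{\alpha,f}^{-1})$. The paper invokes Lemma~\ref{lemma1} rather than Definition~\ref{boundg} directly, but this is the same content; your version is in fact more careful in that you explicitly flag the $\F(n)$ versus $\Le(n)$ issue, which the paper's proof silently passes over while nonetheless working in $\Le(n)$.
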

\begin{proof}
  Set $N=\mathbb{Z}_n\setminus \supp(v)$ and let  $C$ be the abelian code generated by $\varphi_{\alpha,f}^{-1} $ in $\mathbb{L}(n)$. Then $d(C)\leq \omega(\varphi_{\alpha,f}^{-1})$.  By properties of the discrete Fourier transform, we have $N=\D_{\alpha}(\varphi_{\alpha,f}^{-1} )=\D_{\alpha}(C)$ hence, by Lemma~\ref{lemma1} and the definition of apparent distance, $\Delta_{\B}(v)\leq d(C)$. This gives the desired inequality. The last equality is obviuos.
\end{proof}

The notion of apparent distance appeared for the first time in \cite{Camion} and originally it was  defined  for polynomials. Its computation reflects a bound of the nonzeros (in the sense given in the preliminaries) of a \textit{multivariate} polynomial. The aim of the apparent distance was to extend the notion of BCH bound, from cyclic to abelian codes as we will comment in the following paragraphs. The first algorithm for its computation was made in terms of coefficients of polynomials. Later, in \cite{Evans}, R. E. Sabin gave an algorithm in terms of matrices. The notion of strong apparent distance, that appeared in \cite{BBCS2}, is a slight but powerful modification of the original one, defined for multivariate polynomials and hypermatrices, and it is the predecessor of the current apparent distance defined with respect to a list of ds-bounds.

\begin{remark}\label{d* y DeltaBCH}\rm{
To identify notations from previous works with the ones used here, given a polynomial $f\in \F(r_1,\dots,r_s)$, if we denote by $M(f)$ its hypermatrix of coefficients (in the obvious sense), then the strong apparent distance of $f$ in \cite{BBCS2} is $sd^*(f)=\Delta_{\delta_{BCH}}(M(f))$; that is, $\B=\{\delta_{BCH}\}$, together with the convention that $\delta_{BCH}(\emptyset)=1$.
}\end{remark}

Now let us show how the notion of apparent distance for abelian codes works as the BCH bound for cyclic codes. All results in the following corollary are proved in \cite{Camion} and \cite{Evans}.

\begin{corollary} \label{calculo dis aparent codigo ciclico}
 Let $C$ be a cyclic code in $\F(n)$ and $\alpha\in U_n$. Then
 \begin{enumerate} 
 \item If $g,e\in C$ are the generating polynomial and the idempotent generator of $C$, respectively, then $\Delta_{\B}\left[M\left(\varphi_{\alpha,g}\right)\right]=\Delta_{\B}\left[M\left(\varphi_{\alpha,e}\right)\right]\leq \Delta_{\B}\left[M\left(\varphi_{\alpha,c}\right)\right]$, for all $c\in C$.
  \item If $c\in C$ is a codeword with $\varphi_{\alpha,c}=f\in \Le(n)$, then $\omega(c)\geq \Delta_{\B}(M(f))$ and consequently
  \item $\Delta_{\B}\left[M\left(\varphi_{\alpha,g}\right)\right]=\Delta_{\B}\left[M\left(\varphi_{\alpha,e}\right)\right]=\min\left\{\Delta_{\B}\left[M\left(\varphi_{\alpha,c}\right)\right]\tq c\in C\right\}\leq d(C)$.
 \end{enumerate} 
 
 The number on the left of the last inequality is known as the \textbf{apparent distance of the cyclic code $C$ with respect to the set $\B$ and $\alpha \in U_n$} or the $\B$-apparent distance of $C$ with respect to $\alpha \in U_n$.
\end{corollary}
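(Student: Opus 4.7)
The strategy is to package together the multiplicative behaviour of the discrete Fourier transform with Proposition~\ref{prop1}, which already gives the key inequality between the apparent distance of a transform and the weight of its inverse image. The crucial input is that $\varphi_\alpha : \Le(n) \to (\Le^{n},\star)$ is an algebra isomorphism, so multiplication of polynomials becomes coordinatewise multiplication of transforms, and in particular supports of transforms can only shrink when one passes to a multiple.

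To prove (1), I would fix an arbitrary $c\in C$ and write $c=g\,h$ for some $h\in\Le(n)$, which is possible because $g$ is the generator polynomial of $C$. Applying $\varphi_\alpha$ yields $\varphi_{\alpha,c} = \varphi_{\alpha,g}\star\varphi_{\alpha,h}$, and hence $\supp(\varphi_{\alpha,c})\subseteq\supp(\varphi_{\alpha,g})$. Item (2) of Remarks~\ref{propiedades dist apar en 1 var} then gives
\[
\Delta_{\B}\!\left[M(\varphi_{\alpha,c})\right]\;\geq\;\Delta_{\B}\!\left[M(\varphi_{\alpha,g})\right].
\]
For the equality with the idempotent I would use that $g$ and $e$ generate the same ideal $C$, so, writing $Z(\cdot)$ for the set of indices at which the polynomial vanishes at the corresponding power of $\alpha$,
\[
\supp(\varphi_{\alpha,g})=\Z_{n}\setminus Z(g)=\overline{\D_{\alpha}(C)}=\Z_{n}\setminus Z(e)=\supp(\varphi_{\alpha,e}),
\]
and the apparent distance only depends on the complement of the support.

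Part (2) follows by a direct application of Proposition~\ref{prop1} to the polynomial $f=\varphi_{\alpha,c}$. Since $\varphi_\alpha$ is an isomorphism, $\varphi_{\alpha,f}^{-1}=c$, and the proposition gives $\Delta_{\B}(M(f))\leq\omega(\varphi_{\alpha,f}^{-1})=\omega(c)$, exactly the bound required.

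Part (3) is then immediate. The middle equality is just part (1) combined with the observation that $g\in C$ itself realises the value $\Delta_{\B}\!\left[M(\varphi_{\alpha,g})\right]$, so this value is the minimum over all $c\in C$. Chaining parts (1) and (2) gives, for every nonzero $c\in C$,
\[
\omega(c)\;\geq\;\Delta_{\B}\!\left[M(\varphi_{\alpha,c})\right]\;\geq\;\Delta_{\B}\!\left[M(\varphi_{\alpha,g})\right],
\]
and taking the minimum on the left yields $d(C)\geq\Delta_{\B}\!\left[M(\varphi_{\alpha,g})\right]$. I do not foresee any significant obstacle; the only subtle point worth verifying carefully is that $\supp(\varphi_{\alpha,g})$ equals $\overline{\D_{\alpha}(C)}$ (rather than being merely contained in it), which holds because $g\in C$ forces the inclusion $\D_{\alpha}(C)\subseteq Z(g)$, while $g\mid c$ for every $c\in C$ forces the reverse inclusion.
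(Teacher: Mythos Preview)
Your proof is correct and follows essentially the same route as the paper: the paper's argument for (1) is simply the support inclusion $\supp\left(M\left(\varphi_{\alpha,c}\right)\right)\subseteq \supp\left(M\left(\varphi_{\alpha,g}\right)\right)=\supp\left(M\left(\varphi_{\alpha,e}\right)\right)$ combined with Remark~\ref{propiedades dist apar en 1 var}, for (2) it invokes Proposition~\ref{prop1} directly, and (3) is declared immediate from (1) and (2). Your write-up merely supplies the details behind these steps (the factorisation $c=gh$ and the multiplicativity of $\varphi_\alpha$ for the support inclusion, the common defining set for the equality of supports), so there is no substantive difference.
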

\begin{proof}
\textit{(1)} comes from the fact that, for all $c\in C$, we have  $\supp\left(M\left(\varphi_{\alpha,c}\right)\right)\subseteq \supp\left(M\left(\varphi_{\alpha,g}\right)\right)=\supp\left(M\left(\varphi_{\alpha,e}\right)\right)$, together with Remark~\ref{propiedades dist apar en 1 var}. \textit{(2)} is Proposition~\ref{prop1}. \textit{(3)} is immediate from \textit{(1)} and \textit{(2)}.
\end{proof}

Now we shall define the apparent distance of matrices and hypermatrices with respect to a set $\B$ of ds-bounds.

\begin{definition} \label{Bappdistmatrix}
Let $M$ be an $s$-dimensional $I$-hypermatrix over a field $\mathbb{L}$. The \textbf{apparent distance of $M$ with respect to $\B$}, denoted by $\Delta_{\B}(M)$,  is defined as follows:

\begin{enumerate}
 \item $\Delta_{\B}(0)=0$ and, for $s=1$,  Definition~\ref{appdistbound} applies.
\item For $s=2$ and a nonzero matrix $M$, note that $H_M(1,i)$ is the $i$-th row  and $H_M(2,j)$ is the $j$-th column of $M$. Define the \textbf{row support of $M$} as $\supp_1(M)= \{ i\in \{0, \ldots r_1-1\}  \,|\, H_M(1,i)\neq 0 \}$ and the \textbf{column support of $M$} as $\supp_2(M)= \{ k\in \{0, \ldots r_2-1\}  \,|\, H_M(2,k)\neq 0 \} $.

Then put
\begin{eqnarray*}
 \omega_{1}(M)&=& \max\{\overline{\delta}(\mathbb{Z}_{r_1}\setminus \supp_1(M))\mid \delta\in\B\},\\
 \epsilon_{1}(M)&=&\max \{ \Delta_\B(H_M(1,j)) \tq j\in \supp_1(M)\}
\end{eqnarray*}
and set $\Delta_{1}(M)=\omega_{1}(M)\cdot \epsilon_{1}(M)$.

Analogously, we compute the apparent distance $\Delta_2(M)$ for the other  variable and finally we define the \textbf{apparent distance of $M$ with respect to $\B$} by
\[
\Delta_{\B}(M) = \max \{\Delta_1(M),\Delta_{2}(M)\}.
\]

\item For $s>2$, proceed as follows: suppose that one knows how  to compute the apparent distance 
$\Delta_{\B}(N)$, for all non zero  hypermatrices $N$ of dimension $s-1$. Then first compute the ``hypermatrix support'' of $M\neq 0$ with respect to the $j$-th hypercolumn, that is, $$\supp_j(M) = \{ i\in 
\{0, \ldots r_j-1\} \,|\, H_M(j,i)\neq 0\}.$$ 

Now put
\begin{eqnarray*}
 \omega_{j}(M)&=& \max\{\overline{\delta}(\mathbb{Z}_{r_j}\setminus \supp_j(M))\mid \delta\in \B\}, \\
 \epsilon_{j}(M)&=&\max \{ \Delta_{\B}(H_M(j,k)) \,|\,k\in \supp_j(M)\}
\end{eqnarray*}

and set $\Delta_j(M)=\omega_{j}(M)\cdot \epsilon_{j}(M)$. 

Finally, define the \textbf{apparent distance of $M$ with respect to $\B$} (or the $\B$-apparent distance) as:
$$\Delta_{\B}(M)= \max \left\{ \Delta_j(M)\tq j\in\{1,\dots,s\}\right\}.$$
 \end{enumerate}
\end{definition}

As we have already commented in Remark~\ref{d* y DeltaBCH}, by taking $\B=\{\delta_{BCH}\}$, $\Delta_{\B}(M)$ is the strong apparent distance in \cite{BBCS2}.

Now, as in the case of cyclic codes, we relate the apparent distance to the weight of codewords. For each multivariate polynomial $f=\sum_{\mathbf{i}\in I} a_{\mathbf{i}}\mathbf{X}^{\mathbf{i}}$,  consider the hypermatrix  of the coefficients of $f$, denoted by $M(f) = (a_{\mathbf{i}})_{\mathbf{i}\in I}$. For any $j \in \{ 0, \ldots, s\}$, if we write $f=\sum_{k=0}^{r_j-1}f_{j,k}X_j^k$, where $f_{j,k}=f_{j,k}(\mathbf{X}_j)$ and $\mathbf{X}_j=X_1\cdots X_{j-1}\cdot X_{j+1} \cdots X_s$, then $M(f_{j,k})= H_M(j,k)$. This means that ``fixed" the variable $X_j$ in $f$, for each power $k$ of $X_j$, the coefficient $f_{j,k}$ is a polynomial in $\mathbf{X}_j$, and $H_M(j,k)$ is the hypermatrix obtained from its coefficients. Now we extend Proposition~\ref{prop1} to several variables.

\begin{theorem}\label{boundDFTnvar} Let $f\in \mathbb{L}(r_1,\ldots, r_s)$ and $M=M(f)$ be the hypermatrix of its coefficients. Fix any $\bar{\alpha}\in U$.  Then  $\Delta_{\B}(M(f)) \leq \omega\left(\varphi^{-1}_{\overline{\alpha},f}\right)= |\overline{Z(f)} | $.
 \end{theorem}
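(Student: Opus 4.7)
My plan is to induct on the number of variables $s$, with the base case $s=1$ being Proposition~\ref{prop1}. In the inductive step I fix an index $j \in \{1,\dots,s\}$ and aim to prove $\Delta_j(M(f)) \leq \omega(\varphi^{-1}_{\bar{\alpha},f})$; taking the maximum over $j$ in Definition~\ref{Bappdistmatrix} then yields the stated bound. By relabelling the variables I may assume $j = s$, so the target inequality becomes $\omega_s(M(f)) \cdot \epsilon_s(M(f)) \leq \omega(\varphi^{-1}_{\bar{\alpha},f})$.

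I set $c := \varphi^{-1}_{\bar{\alpha},f}$ and slice both $f$ and $c$ along the last variable, writing $f = \sum_{k=0}^{r_s-1} f_{s,k}(\bf{X}_s) X_s^k$ and $c = \sum_{l=0}^{r_s-1} c_{s,l}(\bf{X}_s) X_s^l$. Setting $\bar{\alpha}_s := (\alpha_1,\dots,\alpha_{s-1})$ and $g_k := \varphi^{-1}_{\bar{\alpha}_s,\,f_{s,k}}$, the fact that the multivariate inverse DFT factorizes as the composition of the one-variable inverse DFTs in each coordinate gives
$$
c_{s,l}(\bf{X}_s) \;=\; \frac{1}{r_s}\sum_{k=0}^{r_s-1} \alpha_s^{-lk}\, g_k(\bf{X}_s).
$$
Reading off the coefficient of $\bf{X}_s^{\bf{j}_s}$ for any $\bf{j}_s \in I_s := \Z_{r_1}\times\cdots\times\Z_{r_{s-1}}$ identifies the sequence $\bigl(c_{s,l}^{(\bf{j}_s)}\bigr)_l$ as the one-variable inverse DFT, with root $\alpha_s$, of the vector $v^{(\bf{j}_s)} := \bigl(g_k^{(\bf{j}_s)}\bigr)_{k}$.

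Now I control the two factors of $\Delta_s(M(f))$ in a coordinated way. Let $k^*$ attain $\epsilon_s(M(f)) = \Delta_{\B}(M(f_{s,k^*}))$; applying the inductive hypothesis to the $(s-1)$-variable polynomial $f_{s,k^*}$ yields $\epsilon_s(M(f)) \leq \omega(g_{k^*}) = |\supp(g_{k^*})|$. For each $\bf{j}_s \in \supp(g_{k^*})$ the $k^*$-entry of $v^{(\bf{j}_s)}$ is nonzero, so $v^{(\bf{j}_s)} \neq 0$; moreover, since $g_k^{(\bf{j}_s)}$ is proportional to $f_{s,k}(\bar{\alpha}_s^{-\bf{j}_s})$, vanishing of $f_{s,k}$ forces vanishing of $g_k^{(\bf{j}_s)}$, so $\supp(v^{(\bf{j}_s)}) \subseteq \supp_s(M(f))$. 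Monotonicity (Remark~\ref{propiedades dist apar en 1 var}) then gives $\omega_s(M(f)) \leq \Delta_{\B}(v^{(\bf{j}_s)})$, and Proposition~\ref{prop1} applied to $v^{(\bf{j}_s)}$ bounds the right-hand side by $|\{l : c_{s,l}^{(\bf{j}_s)} \neq 0\}|$, the weight of its inverse DFT.

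Summing these inequalities over $\bf{j}_s \in \supp(g_{k^*})$ gives
$$
\omega_s(M(f)) \cdot \epsilon_s(M(f)) \;\leq\; \omega_s(M(f)) \cdot |\supp(g_{k^*})| \;\leq\; \sum_{\bf{j}_s \in I_s} \bigl|\{l : c_{s,l}^{(\bf{j}_s)} \neq 0\}\bigr| \;=\; \omega(c),
$$
which is the required bound. The final equality $\omega(\varphi^{-1}_{\bar{\alpha},f}) = |\overline{Z(f)}|$ is immediate, since the coefficient of $\bf{X}^{\bf{j}}$ in $\varphi^{-1}_{\bar{\alpha},f}$ is a nonzero scalar multiple of $f(\bar{\alpha}^{-\bf{j}})$ and $\bf{j} \mapsto \bar{\alpha}^{-\bf{j}}$ is a bijection $I \to R$. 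The main obstacle I anticipate is precisely the coordinated control in the third paragraph: both factors $\omega_s$ and $\epsilon_s$ must be simultaneously bounded by quantities that package back into $\omega(c)$, and the choice of a ``distinguished slice'' $k^*$ whose Hamming weight supplies enough indices $\bf{j}_s$ is what makes the product inequality close.
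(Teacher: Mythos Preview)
Your proof is correct and follows essentially the same approach as the paper. The paper slices $f$ along one variable, picks a slice $f_{2,k}$ achieving $\epsilon_2(M)$, and for each nonzero $\alpha_1^t$ of that slice bounds $|\overline{Z(f(\alpha_1^t,X_2))}|$ below by $\omega_2(M)$ via the support inclusion, then sums; you do the identical argument but phrased in the dual language of coefficients of $c=\varphi^{-1}_{\bar\alpha,f}$ rather than evaluations of $f$, the two being in bijection via $c^{(\mathbf{j})}\propto f(\bar\alpha^{-\mathbf{j}})$.
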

 
 \begin{proof}
 For $f=0$, the result is obvious. Consider $f\neq 0$. The case $n=1$ is Proposition~\ref{prop1}. We prove the theorem for matrices; that is, for $s=2$. The general case follows directly by induction.
  
  Set $M=M(f)\neq 0$, $\bar{\alpha}=(\alpha_1,\alpha_2)$ and write $f=\sum_{k=0}^{r_2-1}f_{2,k}X_2^k$. Then $H_M(2,k)$ is the vector of coefficients of $f_{2,k}$. Clearly, $\supp_2(M)=\{k\in\{0,\dots,r_2-1\}\tq f_{2,k}\neq 0\}$ and, for any $k\in \supp_2(M)$, we have $\Delta_{\B}\left(H_M(2,k)\right)\geq 1$. Then $\omega\left(\varphi^{-1}_{\alpha_1,f_{2,k}}\right)\geq 1$, by Proposition~\ref{prop1}.
  
 Now, for each fixed $k\in\Z_{r_2}$, by the definition of discrete Fourier transform, 
 $ |\overline{Z(f_{2,k})} | = \omega(\varphi_{\alpha_1,f_{2,k}}^{-1})$, hence, if $k\in \supp_2(M)$ there exists $t\in \Z_{r_1}$ (at least one) such that $\alpha_1^t$ is a non zero of $f_{2,k}$ .
  
 Set $ g(X_2)=f(\alpha_1^t,X_2)=\sum_{k=0}^{r_2-1}f_{2,k}(\alpha_1^t)X_2^k$ and note that it is a non zero polynomial. Let $v(g)$ be the vector of coefficients of $g(X_2)$. Then $\supp\left(v(g)\right)\subseteq \supp_2(M)$ and so $\Delta_{\B}(v(g))\geq \max\{\bar{\delta}\left(\Z_{r_2}\setminus \supp_2(M)\right)\tq \delta\in \B\}=\omega_2(M)$.
  
 As $|\overline{Z(g)}|=\omega(\varphi_{\alpha_2,g}^{-1})\geq \Delta_{\B}(v(g))$ then, for any $k\in\Z_{r_2}$, 
 $$|\overline{Z(f)}|\geq |\overline{Z(g)}|\cdot |\overline{Z(f_{2,k})}| \geq \Delta_{\B}(v(g))\cdot \omega(\varphi_{\alpha_1,f_{2,k}}^{-1})  \geq \omega_2(M)\cdot\Delta_{\B}\left(H_M(2,k)\right).$$ Finally, as in the univariate case, it is clear that $ |\overline{Z(f)} | = \omega(\varphi_{\bar{\alpha},f}^{-1})$. The extension to more variables is clear and this completes the proof.
 \end{proof}

\begin{example}\label{Ex 4por24} \rm{ 
Set $n=96=4\times 24$ and $q=5$. Fix $\alpha_1\in U_{4}$ and $\alpha_2\in U_{24}$ and consider the $5$-orbits matrix $M$ afforded by $D=Q(0,0)\cup Q(0,1)\cup Q(0,2)\cup Q(0,3) \cup  Q(0,6)\cup  Q(0,7) \cup Q(0,9) \cup  Q(1,1) \cup  Q(1,2) \cup  Q(1,3) \cup  Q(2,1) \cup  Q(2,2) \cup  Q(3,6).$ Choose $\B=\{\delta_{BS},\delta_{BCH}\}$, where  
$\delta_{BS}$ is the Betti-Sala bound in \cite{BettiSala}. 

One may check that $\supp_1(M)=\{0,1,2,3\}$ and then $\omega_1(M)=1$. On the other hand, $\supp_2(M)=\Z_{24}$ so $\omega_2(M)=1$. Now $\Delta_{\B}(H_M(1,0))=8$, by using $\delta_{BS}$ (see \cite[Example 4.2]{BettiSala}) which is the maximum of the values of the two bounds considered, hence $\epsilon_1(M)=8$. It is clear that $\epsilon_2(M)=4$, so that $\Delta_1(M)=8$  and $\Delta_2(M)=4$. Hence $\Delta_{\{\delta_{BS},\delta_{BCH}\}}(M)=8$.
} \end{example}

The computation of the apparent distance in several variables is a natural extension of that of one variable, and, moreover, the relationship between apparent distance and weight of a codeword is essentially the same in any case. However, condition (2) of Remark~\ref{propiedades dist apar en 1 var}\textit{(2)} does not necessarily hold in two or more variables and so we cannot extend directly the results of Corollary~\ref{calculo dis aparent codigo ciclico}. Let us show the situation in the following example.

\begin{example}\rm{
 Let $M$ be the $2$-orbits matrix of order $5\times 7$  such that $\supp(M)= Q(0,0)\cup Q(1,0)\cup Q(1,3)$ and $N$ be the $2$-orbits matrix such that $\supp(N)= Q(1,0)\cup Q(1,3)$. Then $N<M$, however, one may check that $\Delta_{\delta_{BCH}}(N)=6$ and $\Delta_{\delta_{BCH}}(M)=7$. So Remark~\ref{propiedades dist apar en 1 var}\textit{(2)} does not hold in this case.
}\end{example}

As we will see  in the next section, the notion of $\B$-apparent distance of an abelian code with respect to some roots of unity will be a natural extension of that of cyclic codes, while its computation  will be an interesting problem to solve.

\section{The $\B$-apparent distance of an abelian code} \label{sec5}
The following definition changes a little the usual way to present the notions of apparent distance from \cite{Camion} and the strong apparent distance from \cite{BBCS2} (see also \cite{Evans}). We recall that $\B$ denotes a set of ds-bounds, which are used to proceed a concrete computation of the apparent distances.

\begin{definition} \label{appdistcodenvar}
Let $C$ be an abelian code in $\F(r_1,\dots,r_s)$.

\textit{1)} The \textbf{apparent distance of $C$ with respect to  $\bar{\alpha}\in U$ and $\B$} (or the $(\B,\bar{\alpha})$-apparent distance) is
$$  
\Delta_{\B,\overline{\alpha}}(C)= \min \{ \Delta_{\B} (M\left(\varphi_{\bar{\alpha},c})\right) \,|\, c\in C\}.
$$

\textit{2)} The \textbf{apparent distance of $C$ with respect to  $\mathbb{B}$} is
 $$\Delta_{\mathbb{B}}(C)= \max \{ \Delta_{\B,\bar{\alpha}}(C) \,|\, \bar{\alpha}\in U\}.$$
\end{definition}

The following result is a consequence of Theorem~\ref{boundDFTnvar}.\\

\begin{corollary} \label{corTheo}
For any abelian code $C$ in $\F(r_1,\dots,r_s)$ and any $\B$ as above, $\Delta_{\mathbb{B}}(C) \leq d(C)$.
\end{corollary}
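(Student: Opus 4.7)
The plan is to show that $\Delta_{\mathbb{B}}(C)\le d(C)$ as a direct consequence of Theorem~\ref{boundDFTnvar}, combined with the fact that the discrete Fourier transform $\varphi_{\bar\alpha}$ is an $\Le$-algebra isomorphism. The strategy is: for each fixed $\bar\alpha\in U$, exhibit a single codeword whose weight controls the minimum in the definition of $\Delta_{\mathbb{B},\bar\alpha}(C)$, and then take the maximum over $\bar\alpha$.

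First I would fix an arbitrary $\bar\alpha\in U$ and pick a nonzero codeword $c_0\in C$ of minimum weight, so that $\omega(c_0)=d(C)$. Setting $f=\varphi_{\bar\alpha,c_0}\in\Le(r_1,\dots,r_s)$, Theorem~\ref{boundDFTnvar} applied to $f$ and $\bar\alpha$ gives
\[
\Delta_{\B}\bigl(M(\varphi_{\bar\alpha,c_0})\bigr) \;=\; \Delta_{\B}(M(f)) \;\le\; \omega\bigl(\varphi^{-1}_{\bar\alpha,f}\bigr).
\]
Because $\varphi_{\bar\alpha}$ is an isomorphism, $\varphi^{-1}_{\bar\alpha,f}=\varphi^{-1}_{\bar\alpha,\varphi_{\bar\alpha,c_0}}=c_0$, so the right-hand side is exactly $\omega(c_0)=d(C)$. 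From the definition of $\Delta_{\B,\bar\alpha}(C)$ as a minimum over $c\in C$, taking the particular witness $c=c_0$ yields
\[
\Delta_{\B,\bar\alpha}(C)\;\le\;\Delta_{\B}\bigl(M(\varphi_{\bar\alpha,c_0})\bigr)\;\le\;d(C).
\]
Since this bound is independent of $\bar\alpha$, taking the maximum over $\bar\alpha\in U$ in the definition of $\Delta_{\mathbb{B}}(C)$ preserves the inequality, giving $\Delta_{\mathbb{B}}(C)\le d(C)$.

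There is no real obstacle here: the whole content sits in Theorem~\ref{boundDFTnvar}, and the only thing to check is that one chooses the right minimum-weight codeword and applies the DFT correctly. The one point worth a sentence of care is that the minimum in Definition~\ref{appdistcodenvar}(1) should be understood over nonzero $c$ (as is the convention already used in Corollary~\ref{calculo dis aparent codigo ciclico}(3)); otherwise $M(\varphi_{\bar\alpha,0})=0$ would trivially force $\Delta_{\B,\bar\alpha}(C)=0$, making the statement vacuous. With this convention the argument above is complete in a few lines.
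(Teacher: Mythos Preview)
Your proof is correct and follows essentially the same approach as the paper: choose a minimum-weight codeword $g\in C$, apply Theorem~\ref{boundDFTnvar} to obtain $\Delta_{\B}(M(\varphi_{\bar\alpha,g}))\le\omega(g)=d(C)$ for every $\bar\alpha\in U$, and conclude by the definitions of $\Delta_{\B,\bar\alpha}(C)$ and $\Delta_{\B}(C)$. Your version spells out the inversion $\varphi^{-1}_{\bar\alpha,\varphi_{\bar\alpha,c_0}}=c_0$ and the convention on nonzero codewords a bit more explicitly, but the argument is the same.
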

\begin{proof}
 Let $g\in C$ such that $\omega(g)=d(C)$. By Theorem~\ref{boundDFTnvar}, $\Delta_{\B}\left(M\left(\varphi_{\bar{\alpha},g}\right)\right)\leq \omega(g)$, for any $\bar{\alpha}\in U$. From this, the result follows directly.
\end{proof}\\

It is certain that to compute the apparent distance for each element of a code in order to obtain its apparent distance can be as hard work as to compute the minimum distance of such a code. Thus, to improve the efficiency of the computation the following result tells us that we may restrict our attention to the idempotents of the code. It also allows us to reformulate Definition \ref{appdistcodenvar} as it is presented in the previously mentioned papers.

 \begin{proposition} \label{prop2}
Let $C$ be an abelian code in $\F(r_1,\dots,r_s)$. The apparent distance of $C$ with respect to  $\bar{\alpha}\in U$ and $\B$ verifies the equality
$$ \Delta_{\B, \bar{\alpha}}(C) = \min \{ \Delta_{\B} \left(M(\varphi_{\bar{\alpha},e})\right) \,|\, e^2=e\in C\}.$$
\end{proposition}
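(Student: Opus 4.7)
The inequality $\Delta_{\B,\bar{\alpha}}(C) \leq \min\{\Delta_\B(M(\varphi_{\bar{\alpha},e})) \mid e^2=e \in C\}$ is automatic, since the set of idempotents of $C$ is a subset of $C$ and a minimum over a subset is never smaller than a minimum over the whole set. My plan is therefore to prove the reverse inequality by attaching to each $c \in C$ an idempotent $e_c \in C$ with $\Delta_\B(M(\varphi_{\bar{\alpha},e_c})) \leq \Delta_\B(M(\varphi_{\bar{\alpha},c}))$.

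The natural candidate is the unique idempotent generator of the principal ideal $\langle c \rangle \subseteq C$, which exists because $\F(r_1,\ldots,r_s)$ is a semisimple commutative algebra; evidently $e_c \in \langle c \rangle \subseteq C$. To compare supports I transport everything via the algebra isomorphism $\varphi_{\bar{\alpha}} \colon \Le(r_1,\ldots,r_s) \longrightarrow (\Le^{|I|}, \star)$. In the product algebra on the right, the principal ideal generated by a vector $v$ is exactly the set of vectors supported on $\supp(v)$; hence two elements generate the same principal ideal if and only if they have the same support. Since $\langle c \rangle = \langle e_c \rangle$ in $\F(r_1,\ldots,r_s)$, extending scalars gives the same equality in $\Le(r_1,\ldots,r_s)$, and transporting via $\varphi_{\bar{\alpha}}$ yields $\supp(M(\varphi_{\bar{\alpha},c})) = \supp(M(\varphi_{\bar{\alpha},e_c}))$. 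A side-check that I would include is that $e_c$ really does lie in the $\F$-subalgebra rather than merely in $\Le(r_1,\ldots,r_s)$; this follows because $\supp(M(\varphi_{\bar{\alpha},c}))$ is a union of $q$-orbits in $I$ (as $c \in \F(r_1,\ldots,r_s)$ implies $c(\bar{\alpha}^{q\bf{j}}) = c(\bar{\alpha}^{\bf{j}})^q$ for all $\bf{j}$), and the characteristic function of such a set is the DFT image of an element of $\F(r_1,\ldots,r_s)$.

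The final step is to show that $\Delta_\B(M)$ depends on $M$ only through $\supp(M)$, which I would prove by induction on $s$. For $s=1$, Definition~\ref{appdistbound} gives $\Delta_\B(v) = \max_{\delta \in \B}\overline{\delta}(\Z_n \setminus \supp(v))$, visibly a function of $\supp(v)$ alone. For $s \geq 2$, the number $\omega_j(M)$ in Definition~\ref{Bappdistmatrix} depends only on $\supp_j(M)$ (itself determined by $\supp(M)$), while $\epsilon_j(M)$ is a maximum of $\Delta_\B$ values on hypercolumns whose individual supports are read off from $\supp(M)$, so by the inductive hypothesis it too is a function of $\supp(M)$; taking the maximum over $j$ preserves the property. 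Combined with the preceding paragraph, $\Delta_\B(M(\varphi_{\bar{\alpha},c})) = \Delta_\B(M(\varphi_{\bar{\alpha},e_c}))$, which closes the argument. The most delicate point is the descent of $e_c$ to the $\F$-subalgebra; once it is secured, the rest is bookkeeping in the semisimple structure together with the routine induction on $s$.
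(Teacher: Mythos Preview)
Your proof is correct and follows essentially the same approach as the paper: associate to each $c\in C$ the idempotent generator $e_c$ of $\langle c\rangle$ in the semisimple algebra $\F(r_1,\dots,r_s)$, observe that $\supp(M(\varphi_{\bar\alpha,c}))=\supp(M(\varphi_{\bar\alpha,e_c}))$, and use that $\Delta_\B$ depends only on the support. The paper's version is terser---it simply asserts the support equality via $\D_{\bar\alpha}(c)=\D_{\bar\alpha}(e)$ and the support-dependence of $\Delta_\B$---whereas you spell out the transport through the DFT isomorphism, the descent of $e_c$ (which is in fact automatic once you work in the $\F$-algebra from the start), and the induction on $s$; these are welcome clarifications but not a different route.
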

\begin{proof}
 Consider any $c\in C$. Since $\F(r_1,\dots,r_s)$ is semisimple, then there exists an idempotent $e\in C$ such that the ideals generated by $c$ and $e$ in $\F(r_1,\dots,r_s)$ coincide; that is, $\langle c\rangle =\langle e \rangle$, and so $\D_{\bar{\alpha}}(c)=\D_{\bar{\alpha}}(e)$. This means that $\supp\left(M(\varphi_{\bar{\alpha},c})\right) =\supp\left(M(\varphi_{\bar{\alpha},e})\right)$. Note that the computation of the apparent distance is based on the fact that the entries (of the matrices) are zero or not; that is, once an entry is non zero, its specific value is irrelevant. From this fact, it is easy to see that $ \Delta_{\B} \left(M(\varphi_{\bar{\alpha},c})\right)= \Delta_{\B} \left(M(\varphi_{\bar{\alpha},e})\right)$ and so we get the desired equality.
\end{proof}

Let $e\in\F(r_1,\dots,r_s)$ be an idempotent and $E$ be the ideal generated by $e$. Then $\varphi_{\bar{\alpha}, e}\star\varphi_{\bar{\alpha}, e}=\varphi_{\bar{\alpha}, e}$, for any $\bar{\alpha} \in U$ and thus, if $\varphi_{\bar{\alpha}, e}=\sum_{\bf{i}\in I} a_{\bf{i}}X^{\bf{i}}$, we have $a_{\bf{i}}\in \{1,0\}\subseteq \F$ and $a_{\bf{i}}=0$ if and only if $\bf{i}\in \D_{\bar{\alpha}}(E)$. Hence $M(\varphi_{\bar{\alpha},e})=M(\D_{\bar{\alpha}}(E))$. Conversely, let $M$ be a  hypermatrix afforded by a set $D$ which is a union of $q$-orbits. We know that $D$ determines a unique ideal $C$ in $\F(r_1,\dots,r_s)$ such that $\D_{\bar{\alpha}}(C)=D$. Let $e\in C$ be its generating idempotent. Clearly,  $M(\varphi_{\bar{\alpha}, e})=M(D)$.

Now let $C$ be an abelian code, $\bar{\alpha} \in U$ and let $M$ be the hypermatrix afforded by $\D_{\bar{\alpha}}(C)$. For any $q$-orbits hypermatrix $P\leq M$ [see the ordering (\ref{matrixordering})] there exists a unique idempotent $e'\in C$ such that $P=M(\varphi_{\bar{\alpha}, e'})$ and, for any codeword $f\in C$, there is a unique idempotent $e(f)$ such that $\Delta_{\B}\left(M(\varphi_{\bar{\alpha},f})\right)= \Delta_{\B} \left(M(\varphi_{\bar{\alpha},e(f)})\right)$. Therefore,
\begin{eqnarray*}
  \min\{\Delta_{\B}(P)\tq 0\neq P\leq M\}= \\ \min\{\Delta_{\B}(M(\varphi_{\bar{\alpha}, e}))\tq
  0\neq e^2=e\in C\}= \Delta_{\B,\bar{\alpha}}(C).
\end{eqnarray*}
This fact drives us to give the following definition.
\begin{definition}
 For a $q$-orbits hypermatrix $M$, its \textbf{minimum $\mathbb{B}$-apparent distance} is
	\[\mathbb{B}\mad(M)=\min\{\Delta_{\B}(P)\tq 0\neq P\leq M\}.\]
\end{definition}

Finally, in the next theorem we set the relationship between the apparent distance of an abelian code and the minimum apparent distance of hypermatrices.

\begin{theorem}
Let $C$ be an abelian code in $\F(r_1,\dots,r_s)$ and let $e$ be its generating idempotent. For any $\bar{\alpha}\in U$, we have $\Delta_{\B,\bar{\alpha}}(C)= \mathbb{B}\mad\left(M(\varphi_{\bar{\alpha},e})\right)$. Therefore, $\Delta_{\mathbb{B}} (C)= \max\{\mathbb{B}\mad\left(M(\varphi_{\bar{\alpha},e})\right)\tq \bar{\alpha}\in U\}$.
\end{theorem}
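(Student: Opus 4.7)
The statement is essentially a repackaging of Proposition~\ref{prop2} together with the bijection outlined in the paragraph immediately preceding the theorem. My plan is therefore to make that bijection precise and then string the three equalities together.

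\textbf{Step 1: Bijection between subidempotents and submatrices.} Set $M=M(\varphi_{\bar{\alpha},e})$, and recall that $\supp(M)= I\setminus \D_{\bar{\alpha}}(C)$, so $M$ is a $q$-orbits hypermatrix. I will first show that the assignment $e'\mapsto M(\varphi_{\bar{\alpha},e'})$ is a bijection between the nonzero idempotents $e'$ of $C$ and the nonzero $q$-orbits hypermatrices $P$ with $P\leq M$. Injectivity is clear, since $e'$ is determined by $\D_{\bar{\alpha}}(\langle e'\rangle)=\supp(M(\varphi_{\bar{\alpha},e'}))^{c}$. For surjectivity, given a $q$-orbits hypermatrix $0\neq P\leq M$, the complement $\D(P)\supseteq \D(M)=\D_{\bar{\alpha}}(C)$ is a union of $q$-orbits, hence determines a subideal $C'\subseteq C$ whose generating idempotent $e'\in C$ satisfies $M(\varphi_{\bar{\alpha},e'})=M(\D(P))=P$, as recalled just before Definition of $\mathbb{B}\mad$.

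\textbf{Step 2: First equality.} Using this bijection,
\[
\mathbb{B}\mad(M) \;=\; \min\{\Delta_{\B}(P)\mid 0\neq P\leq M\}\;=\;\min\{\Delta_{\B}(M(\varphi_{\bar{\alpha},e'}))\mid 0\neq (e')^2=e'\in C\}.
\]
By Proposition~\ref{prop2}, the right-hand side equals $\Delta_{\B,\bar{\alpha}}(C)$, giving the first claimed identity.

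\textbf{Step 3: Second equality.} The second equality is a direct consequence of taking $\max$ over $\bar{\alpha}\in U$ in the first equality, combined with part~\textit{2)} of Definition~\ref{appdistcodenvar}:
\[
\Delta_{\B}(C)=\max\{\Delta_{\B,\bar{\alpha}}(C)\mid \bar{\alpha}\in U\}=\max\{\mathbb{B}\mad(M(\varphi_{\bar{\alpha},e}))\mid \bar{\alpha}\in U\}.
\]

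\textbf{Main obstacle.} There is no genuine obstacle; the content of the theorem is almost entirely packaged into Proposition~\ref{prop2} and the bijection in Step~1. The only point requiring a little care is checking that the correspondence $P\leftrightarrow e'$ really preserves the relation $P\leq M\Leftrightarrow e'\in C$, and here one must use that the algebra $\F(r_1,\dots,r_s)$ is semisimple so that each union of $q$-orbits containing $\D_{\bar{\alpha}}(C)$ corresponds to an ideal of $C$ generated by a unique idempotent. Once this is spelled out, the rest is bookkeeping.
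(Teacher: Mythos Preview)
Your proposal is correct and follows essentially the same approach as the paper: the paper's own proof simply says ``It follows directly from the preceding paragraphs,'' and those paragraphs establish precisely the bijection you spell out in Step~1 together with the displayed equality $\min\{\Delta_{\B}(P)\mid 0\neq P\leq M\}=\min\{\Delta_{\B}(M(\varphi_{\bar{\alpha},e'}))\mid 0\neq (e')^2=e'\in C\}=\Delta_{\B,\bar{\alpha}}(C)$, after which the second claim is immediate from Definition~\ref{appdistcodenvar}. Your write-up just makes explicit what the paper leaves to the reader.
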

\begin{proof}
It follows  directly from the preceding  paragraphs.
\end{proof}

\section{Computing minimum apparent distance} \label{sec6}

In \cite{BBCS2} it is presented an algorithm to find, for any abelian code, a list of matrices (or hypermatrices in case of more than 2 variables) representing some of its idempotents  whose apparent distances based on the BCH bound (called the strong apparent distance) go decreasing until the minimum value is reached. It is a kind of ``suitable idempotents chase through hypermatrices'' \cite[p. 2]{BBCS2}. This algorithm is based on certain manipulations of the ($q$-orbits) hypermatrix afforded by the defining set of the abelian code. It is not so hard to see that it is possible to obtain an analogous algorithm in our case.

We reproduce here the result and the algorithm in the case of two variables under our notation. Then we will use the mentioned algorithm to improve the searching for new bounds for abelian codes.

\begin{definition}\label{indices involucradas}
 With the notation of the previous sections, let $D$ be a union of $q$-orbits and $M=M(D)$ the hypermatrix afforded by $D$. We say that $H_M(j,k)$ is an \textbf{involved hypercolumn (row or column for two variables) in the computation of $\Delta_{\B}(M)$}, if $\Delta_{\B}(H_M(j,k))=\epsilon_{j}(M)$ and $\Delta_{j}(M)=\Delta_{\B}(M)$.
\end{definition}

 We denote the set of indices of involved hypercolumns by $I_p(M)$.
Note that the involved hypercolumns are those which contribute in the computation of the $\B$-apparent distance.

The next result shows a sufficient condition to get at once the minimum $\mathbb{B}$-apparent distance of a hypermatrix.

\begin{proposition}\label{matrizdamvarias}
With the notation as above, let $D$ be a union of $q$-orbits and $M=M(D)$ the hypermatrix afforded by $D$. If $\Delta_{\B}(H_M(j,k))=1$, for some $(j,k)\in I_p(M)$, then $\mathbb{B}\mad(M)=\Delta_{\B}(M)$. 
\end{proposition}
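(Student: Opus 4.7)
The plan is to establish the two inequalities $\mathbb{B}\mad(M)\le \Delta_{\B}(M)$ and $\mathbb{B}\mad(M)\ge \Delta_{\B}(M)$ separately. The first is essentially a tautology: since $M\le M$ and $M\neq 0$, the value $\Delta_{\B}(M)$ is one of the terms being minimized in the definition of $\mathbb{B}\mad(M)$, so the minimum is at most $\Delta_{\B}(M)$.

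For the reverse inequality, I would first unpack the hypothesis. Because $(j,k)\in I_p(M)$, we have $\epsilon_{j}(M)=\Delta_{\B}(H_M(j,k))=1$ and $\Delta_{j}(M)=\Delta_{\B}(M)$, so
\[
\Delta_{\B}(M) \;=\; \omega_{j}(M)\cdot\epsilon_{j}(M) \;=\; \omega_{j}(M).
\]
Now pick any $q$-orbits hypermatrix $P$ with $0\neq P\le M$. The goal is to show $\Delta_{j}(P)\ge \omega_{j}(M)$, which will force $\Delta_{\B}(P)\ge \Delta_{j}(P)\ge \Delta_{\B}(M)$ and hence $\mathbb{B}\mad(M)\ge \Delta_{\B}(M)$.

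The two factors of $\Delta_{j}(P)=\omega_{j}(P)\cdot \epsilon_{j}(P)$ are handled separately. For $\omega_j$, the ordering (\ref{matrixordering}) gives $\supp(P)\subseteq\supp(M)$, which immediately yields $\supp_j(P)\subseteq \supp_j(M)$, hence $\Z_{r_j}\setminus\supp_j(P)\supseteq \Z_{r_j}\setminus\supp_j(M)$. Applying the monotonicity of each $\bar{\delta}$ from Lemma~\ref{lemma1}(2) for every $\delta\in\B$ and then taking maxima gives $\omega_{j}(P)\ge \omega_{j}(M)$. For $\epsilon_j$, since $P\neq 0$ there is at least one index $k'\in \supp_j(P)$, i.e.\ a nonzero hypercolumn $H_P(j,k')$; by (an easy induction using) Remark~\ref{propiedades dist apar en 1 var}(1) and condition (3) in Definition~\ref{boundg}, any nonzero hypermatrix has $\B$-apparent distance at least $1$, so $\epsilon_{j}(P)\ge 1$. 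Multiplying the two bounds gives $\Delta_{j}(P)\ge \omega_{j}(M)\cdot 1=\Delta_{\B}(M)$, and we are done.

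The only delicate point I anticipate is the clean passage from $\supp(P)\subseteq\supp(M)$ to $\supp_j(P)\subseteq\supp_j(M)$ and the induction needed to justify ``nonzero hypermatrix implies $\B$-apparent distance $\ge 1$'' in arbitrary dimension $s$; both are routine but must be stated carefully because the monotonicity used (Remark~\ref{propiedades dist apar en 1 var}(2)) does not hold for the multivariate $\B$-apparent distance itself, only for the one-dimensional ingredients $\bar{\delta}$.
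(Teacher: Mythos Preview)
Your argument is correct and is precisely the natural proof: the hypothesis forces $\Delta_{\B}(M)=\omega_j(M)$, and then for any nonzero $P\le M$ the containment $\supp_j(P)\subseteq\supp_j(M)$ plus monotonicity of each $\bar{\delta}$ gives $\omega_j(P)\ge\omega_j(M)$, while $\epsilon_j(P)\ge 1$ is immediate from $P\neq 0$. The paper does not spell out its own argument but simply refers to \cite[Proposition~23]{BBCS2}; your proof is exactly the adaptation of that argument to an arbitrary set $\B$ of ds-bounds that the paper has in mind.
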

\begin{proof}
 It is a modification of that in \cite[Proposition 23]{BBCS2} having in mind the use of different ds-bounds.
\end{proof}

\begin{theorem} \label{teodam2}
Let $\Q$ be the set of all $q$-orbits modulo $(r_1,r_2)$, $\mu\in \{1,\dots, |\Q|-1\}$ and $\left\{Q_j\right\}_{j=1}^{\mu} $ a subset of $\Q$. Set $D=\cup_{j=1}^\mu Q_j$ and $M=M(D)$. Then there exist two sequences: the first one is formed by nonzero $q$-orbits matrices, $M=M_0 > \dots > M_l\neq 0$ and the second one is formed by positive integers
$m_0 \geq  \dots \geq m_l, $ with $ l\leq \mu$ and $m_i\leq \Delta_{\B}(M_i)$, verifying the following property: 

If $P$ is a $q$-orbits matrix such that $0 \neq P \leq M$, then  $\Delta_{\B}(P) \geq m_l$ and if $\Delta_{\B}(P) < m_{i-1}$ then $P \leq M_i$, where $0 < i\leq l$.

 Moreover,  if $l' \in \{0, \dots , l\}$ is the first element satisfying $m_{l'}=m_l$ then 
$\Delta_{\B}(M_{l'})=\mathbb{B}\mad(M)$. 
\end{theorem}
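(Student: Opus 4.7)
The plan is to construct the two sequences inductively by adapting the ``idempotents chase'' of \cite{BBCS2} so that the apparent distance is computed against the list $\B$ of ds-bounds rather than only $\delta_{BCH}$. I would set $M_0=M$ and $m_0=\Delta_{\B}(M_0)$, then at each subsequent stage $i$ inspect $I_p(M_i)$: if some involved hypercolumn $H_{M_i}(j,k)$ has $\Delta_{\B}(H_{M_i}(j,k))=1$, I stop with $l=i$, because Proposition~\ref{matrizdamvarias} then yields $\B\mad(M_i)=\Delta_{\B}(M_i)$; otherwise I select a pair $(j_i,k_i)\in I_p(M_i)$ and define $M_{i+1}$ as the $q$-orbits matrix obtained from $M_i$ by adjoining to its defining set the $q$-orbits that intersect the hypercolumn $H_{M_i}(j_i,k_i)$. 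Setting $m_{i+1}=\min\{m_i,\Delta_{\B}(M_{i+1})\}$ keeps the inequalities $m_{i+1}\leq m_i$ and $m_{i+1}\leq\Delta_{\B}(M_{i+1})$ in force, and since each reduction strictly shrinks $\supp(M_i)$ while $D$ consists of $\mu$ orbits, the chain must terminate at some $l\leq\mu$.

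The substantive step is the filtering property ``$\Delta_{\B}(P)<m_{i-1}\Rightarrow P\leq M_i$'' for every $0\neq P\leq M$ and $0<i\leq l$. I would prove it by induction on $i$; the key point is that if $P\leq M_{i-1}$ but $P\not\leq M_i$, then $P$ must carry enough of the involved hypercolumn $H_{M_{i-1}}(j_{i-1},k_{i-1})$ to force $\omega_{j_{i-1}}(P)\cdot\epsilon_{j_{i-1}}(P)\geq \omega_{j_{i-1}}(M_{i-1})\cdot\epsilon_{j_{i-1}}(M_{i-1})=\Delta_{\B}(M_{i-1})$. The $\omega$ factor is controlled by the monotonicity of $\bar\delta$ given in Lemma~\ref{lemma1}\textit{(2)}, applied to the inclusion of supports in direction $j_{i-1}$; the $\epsilon$ factor is controlled by a recursive descent into the lower-dimensional hypercolumn $H_P(j_{i-1},k_{i-1})\leq H_{M_{i-1}}(j_{i-1},k_{i-1})$, where one-variable monotonicity from Remark~\ref{propiedades dist apar en 1 var}\textit{(2)} can be invoked. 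The global lower bound $\Delta_{\B}(P)\geq m_l$ then follows: if $\Delta_{\B}(P)<m_l\leq m_{l-1}$, filtering forces $P\leq M_l$, contradicting $\B\mad(M_l)=\Delta_{\B}(M_l)\geq m_l$ that comes from the stopping criterion at step $l$.

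For the ``moreover'' clause, let $l'$ be the first index with $m_{l'}=m_l$. An analysis of how $m_i$ evolved along the chain shows that the stopping criterion of Proposition~\ref{matrizdamvarias} already holds at $M_{l'}$, so $\Delta_{\B}(M_{l'})=\B\mad(M_{l'})$; combined with filtering (every $0\neq P\leq M$ with $\Delta_{\B}(P)<m_{l'-1}$ satisfies $P\leq M_{l'}$, and $m_{l'-1}>m_{l'}=m_l$), this yields $\B\mad(M)=\B\mad(M_{l'})=\Delta_{\B}(M_{l'})$.

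The main obstacle will be the descent step inside the filtering proof, precisely because, as the example following Corollary~\ref{calculo dis aparent codigo ciclico} shows, the naive hypermatrix monotonicity $N\leq M\Rightarrow\Delta_{\B}(N)\geq\Delta_{\B}(M)$ breaks down in dimension $s\geq 2$. The workaround is that reductions in the chase remove \emph{whole} involved hypercolumns, so monotonicity only needs to be invoked one variable at a time along the fixed hypercolumn position, where Remark~\ref{propiedades dist apar en 1 var}\textit{(2)} does apply. Isolating and formalising this controlled descent is the delicate technical core inherited from \cite[Proposition~23]{BBCS2}.
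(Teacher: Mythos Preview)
Your overall scheme—the inductive construction of the chain, the use of Proposition~\ref{matrizdamvarias} as a stopping criterion, and the filtering argument based on one-variable monotonicity from Remark~\ref{propiedades dist apar en 1 var}\textit{(2)}—is exactly the adaptation of \cite[Proposition~25]{BBCS2} that the paper invokes. One harmless difference: you peel off a single involved hypercolumn $(j_i,k_i)$ per step, whereas the algorithm following the theorem removes \emph{all} involved hypercolumns at once (taking $S=\bigcup_{(j,k)\in I_p(M_i)}\supp(H_{M_i}(j,k))$). Both constructions yield valid sequences for the statement; yours may produce a longer chain but the filtering argument goes through unchanged.

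There is, however, a genuine gap in your treatment of the ``moreover'' clause. You assert that ``the stopping criterion of Proposition~\ref{matrizdamvarias} already holds at $M_{l'}$'', but this is false in general: nothing prevents the chain from running past $l'$ with $m_i$ constant (each subsequent $\Delta_{\B}(M_i)\geq m_{l'}$) while no involved hypercolumn of $M_{l'}$ has $\B$-apparent distance~$1$. The correct argument is much more direct and does not use Proposition~\ref{matrizdamvarias} at all. If $l'=0$ then $m_0=\Delta_{\B}(M_0)$ by definition. If $l'>0$, the recursion $m_{l'}=\min\{m_{l'-1},\Delta_{\B}(M_{l'})\}$ together with $m_{l'-1}>m_{l'}$ (minimality of $l'$) forces $\Delta_{\B}(M_{l'})=m_{l'}=m_l$. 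Now the global lower bound you already established gives $\B\mad(M)\geq m_l$, while trivially $\B\mad(M)\leq\Delta_{\B}(M_{l'})$ since $0\neq M_{l'}\leq M$; hence $\Delta_{\B}(M_{l'})=\B\mad(M)$. Replace your paragraph on the ``moreover'' clause with this two-line computation and the proof is complete.
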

\begin{proof}
 It follows the same lines of that in \cite[Proposition 25]{BBCS2} having in mind the use of different ds-bounds.
\end{proof}

\noindent\textbf{Algorithm for matrices.}   

Set $I=\Z_{r_1}\times\Z_{r_2}$. Consider the $q$-orbits matrix $M=\left(m_{ij}\right)_{(i,j)\in I}$ and a set $\B$ of ds-bounds.
\begin{itemize}
\item[]Step 1. Compute the apparent distance of $M$ with respect to $\B$ and set $m_0= \Delta_{\B}(M)$.

\item[]Step 2. 
  \begin{itemize}
      \item[a)]  If there exists $(j,k)\in I_p(M)$ (see notation below Definition~\ref{indices involucradas}) such that $\Delta_{\B}(H_M(j,k))=1$, then we finish giving the sequences $M=M_0$ and $m_0=\Delta_{\B}(M)$ (because of Proposition~\ref{matrizdamvarias}).
      
      \item[b)]  If $\Delta_{\B}(H_M(j,k)) \neq 1$, for all $(k,b)\in I_p(M)$, we set $$S=\bigcup_{ (k,b)\in I_p(M)} \supp (H_M(k,b))$$ and construct the $q$-orbits matrix $M_1=\left(a_{ij}\right)_{(i,j)\in I}$ such that 
      \[a_{ij}=
      \begin{cases} 
      0 & \text{if}\; (i,j)\in \cup \{Q(k,b)\tq (k,b)\in S\} \\
      m_{ij} & \text{otherwise.}  
    \end{cases}\]
  \end{itemize}
In other words, $M_1<M$ is the ($q$-orbits) matrix with maximum support such that the involved rows  and columns of $M$ 
are replaced by zero. One may prove that if $0\neq P <M$ and $\Delta_{\B}(P)< m_0$ then $P \leq M_1$.

\item[]Step 3.
  \begin{itemize}
      \item [a)] If $M_1=0$, then we finish giving the sequences $M=M_0$ and $m_0= \Delta_{\B}(M)$.
      \item [b)] If $M_1 \neq 0$, we set $m_1= \min \{m_0, \Delta_{\B}(M_1)\}$, and we get the sequences $M=M_0>M_1$ and $m_0\geq m_1$. Then, we go back to Step 1 with  $M_1$ in the place of $M$ and $m_1$ in the place  of $m_0$. $\blacksquare$\\
  \end{itemize}
 \end{itemize}
\begin{remark}\label{complejidad}
If the $q$-orbits matrix has $\mu$ $q$-orbits, the algorithm has at most $\mu$ steps. $\blacksquare$
\end{remark}


\begin{example}\label{sec7}{\rm
We take the setting of Example~\ref{Ex 4por24} and consider the abelian code $C$ with $\D(C)=D$. In this case, the matrix $M=M(\D(C))$ is the same as that in the mentioned example. Choose again $\B=\{\delta_{BCH},\delta_{BS}\}$. This code has $\dim_{\mathbb{F}_5}(C)=73$ and $\Delta_{\B}(C)=8$. 

 As $I_p(M)=\{(1,0)\}$, the matrix $M_1$ has the first row all zero and the others equal to the ones of $M$; that is $\D(M_1)=\left(\cup_{i\in\Z_{24}}Q(0,i)\right)\cup Q(1,1) \cup  Q(1,2) \cup  Q(1,3) \cup  Q(2,1) \cup  Q(2,2) \cup  Q(3,6).$
 
 Now $\supp_1(M_1)=\{1,2,3\}$, $\omega_1(M_1)=2$ and $\epsilon_1(M_1)=4$ and we also have $\supp_2(M_1)=\{0,\dots,23\}$, $\omega_2(M_1)=1$ and  $\epsilon_2(M_1)=4$. Hence $\Delta_{\mathbb{B}}(M_1)=8$. Here $I_p(M_1)=\{(1,1)\}$ and we get the matrix $M_2$ having its first and second rows all zero and the others equal to the ones of $M$ and $M_1$; that is, $\D(M_2)=\left(\cup_{i\in\Z_{24}}Q(0,i)\right)\cup \left( \cup_{i\in\Z_{24}}Q(1,i)\right) \cup Q(2,1) \cup  Q(2,2) \cup  Q(3,6).$

Here $\supp_1(M_2)=\{2,3\}$, $\omega_1(M_2)=3$ and $\epsilon_1(M_2)=3$ and we also have $\supp_2(M_2)=\{0,\dots,23\}$, $\omega_2(M_2)=1$ and  $\epsilon_2(M_2)=4$. Hence $\Delta_{\mathbb{B}}(M_2)=9$ and $I_p(M_2)=\{(1,2)\}$ and we get the matrix $M_3$ having its first, second and third rows all zero and the others equal to the ones of $M$, $M_1$ and $M_2$; that is, $\D(M_2)=\left(\cup_{i\in\Z_{24}}Q(0,i)\right)\cup\left( \cup_{i\in\Z_{24}}Q(1,i)\right)\cup \left(\cup_{i\in\Z_{24}}Q(2,i)\right) \cup  Q(3,6).$

Finally, $\supp_1(M_3)=\{3\}$, $\omega_1(M_2)=4$ and $\epsilon_1(M_2)=2$; so we also have $\supp_2(M_2)=\Z_{24}\setminus \{6\}$, $\omega_2(M_3)=2$ and  $\epsilon_2(M_3)=4$. Hence $\Delta_{\mathbb{B}}(M_3)=8$ and $I_p(M_3)=\{(1,3)\}$ and we get $M_4=0$. Therefore, $\Delta_{\{\delta_{BCH},\delta_{BS}\}}(M)=8$.\\

The closest code to $C$ we know is a $(105,51,7)$ binary cyclic code in \cite[Table II]{HTY}. The known bounds for linear codes with the same length and dimension are between 10 and 15.
}\end{example}

The reader may find some tables with examples of this kind in \cite{BGSISITA}.

\section{True minimum distance in abelian codes} \label{sec8}

In this section we study the problem of find abelian codes such that its apparent distance or its multivariate BCH bound reaches its minimum distance. We keep all the notation from the preceding sections. In~\cite{BBS1var,BBSAMC} it is presented a characterization of cyclic and BCH codes whose apparent distance reaches their minimum distance. Our aim is to extend those results for multivariate codes. 

\begin{theorem}\label{caracterizacion codigos multi distancia real}
 Let $C$ be an abelian code in $\F(r_1,\dots,r_s)$. The following conditions are equivalent:
 \begin{enumerate}
  \item $\Delta_{\B}(C)=d(C)$.
  \item There exist an element $\overline{\alpha}\in U$ and a codeword $c\in C$  such that its image under the discrete Fourier transform, $g=\varphi_{\overline{\alpha},c}$, verifies:
  \begin{enumerate}
   \item $\Delta_{\B}\left(M(g)\right)=\Delta_{\B,\overline{\alpha}}(C)=\min\left\{\Delta_{\B}\left(M(\varphi_{\overline{\alpha},v})\right)\tq v\in C\right\}$. 
   
   \item $\Delta_{\B}\left(M(g)\right)=\left|\overline{Z(g)}\right|$.
  \end{enumerate}
 \end{enumerate}
\end{theorem}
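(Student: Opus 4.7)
The plan is to run two symmetric sandwich arguments, each anchored by two key inequalities already proved: $\Delta_{\B}(C)\leq d(C)$ from Corollary~\ref{corTheo}, and $\Delta_{\B}(M(f))\leq |\overline{Z(f)}|=\omega(\varphi^{-1}_{\overline{\alpha},f})$ from Theorem~\ref{boundDFTnvar}. The linking identity I would invoke throughout is that, whenever $g=\varphi_{\overline{\alpha},c}$ for some $c\in C$, one has $\varphi^{-1}_{\overline{\alpha},g}=c$, because $\varphi_{\overline{\alpha}}$ is an algebra isomorphism; in particular $|\overline{Z(g)}|=\omega(c)$.

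For the direction $(1)\Rightarrow (2)$, I would first pick $\overline{\alpha}\in U$ attaining the outer maximum in the definition $\Delta_{\B}(C)=\max_{\overline{\alpha}\in U}\Delta_{\B,\overline{\alpha}}(C)$; under the assumption $\Delta_{\B}(C)=d(C)$ this forces $\Delta_{\B,\overline{\alpha}}(C)=d(C)$. Next, I would choose a minimum-weight codeword $c\in C$ with $\omega(c)=d(C)$ and set $g=\varphi_{\overline{\alpha},c}$. The chain
\[
d(C)\;=\;\Delta_{\B,\overline{\alpha}}(C)\;\leq\; \Delta_{\B}(M(g))\;\leq\; |\overline{Z(g)}|\;=\;\omega(c)\;=\;d(C)
\]
then forces every intermediate inequality to be an equality: the first $\leq$ (from the definition of $\Delta_{\B,\overline{\alpha}}(C)$ as a minimum over $C$) becoming an equality is precisely (2a), and the second $\leq$ (from Theorem~\ref{boundDFTnvar}) becoming an equality is precisely (2b).

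Conversely, for $(2)\Rightarrow (1)$, I would start from the given $\overline{\alpha}$ and $c$ and chain (2a) with (2b), together with the identity $|\overline{Z(g)}|=\omega(c)$, to obtain $\Delta_{\B,\overline{\alpha}}(C)=\Delta_{\B}(M(g))=|\overline{Z(g)}|=\omega(c)$. Since $c\in C$ implies $\omega(c)\geq d(C)$, this produces the sandwich
\[
d(C)\;\leq\;\omega(c)\;=\;\Delta_{\B,\overline{\alpha}}(C)\;\leq\;\Delta_{\B}(C)\;\leq\;d(C),
\]
where the second-to-last inequality is the definition of $\Delta_{\B}(C)$ as a maximum over $U$ and the last is Corollary~\ref{corTheo}; the whole chain collapses to $\Delta_{\B}(C)=d(C)$.

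I do not anticipate a real technical obstacle here: the entire argument reduces to carefully tracking the roles of the minimum (inside the definition of $\Delta_{\B,\overline{\alpha}}(C)$) and the maximum (inside the definition of $\Delta_{\B}(C)$), and then propagating the tightness forced by the two fundamental estimates. The only piece of bookkeeping that deserves a line of justification is the identification $|\overline{Z(g)}|=\omega(c)$ for $g=\varphi_{\overline{\alpha},c}$, since it is exactly at that point that the DFT isomorphism converts ``support in the transform domain'' into ``non-zeros of the original codeword,'' and it is what makes the two distinct-looking equalities in (2a) and (2b) cooperate to pin down $d(C)$ in the first place.
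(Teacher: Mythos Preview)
Your proposal is correct and follows essentially the same approach as the paper: in both directions you choose $\overline{\alpha}$ realizing the maximum and a minimum-weight codeword (for $(1)\Rightarrow(2)$), then run the sandwich $d(C)=\Delta_{\B,\overline{\alpha}}(C)\leq\Delta_{\B}(M(g))\leq|\overline{Z(g)}|=\omega(c)=d(C)$, and for $(2)\Rightarrow(1)$ reverse the chain as $d(C)\leq\omega(c)=\Delta_{\B,\overline{\alpha}}(C)\leq\Delta_{\B}(C)\leq d(C)$. Your explicit justification of $|\overline{Z(g)}|=\omega(c)$ via the DFT isomorphism is a nice touch that the paper leaves implicit.
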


\begin{proof}
 $[1\Rightarrow 2]$ Let $c\in C$ a be codeword with $\omega(c)=d(C)$ and  $\overline{\alpha}\in U$  such that 
$\Delta_{\B,\overline{\alpha}}(C)=\Delta_{\B}(C)$. Set $g=\varphi_{\overline{\alpha},c}$. Then 
 \[d(C)=\Delta_{\B}(C)= \Delta_{\B,\overline{\alpha}}(C)\leq\Delta_{\B}\left(M(g)\right)\leq \left|\overline{Z(g)}\right|=\omega(c)=d(C),\]
 where the first equality is given by hypothesis. Thus, the inequality becomes equality.
 
 $[2\Rightarrow 1]$ Suppose that there is a codeword $c\in C$ satisfying the hypotheses. Then
 \[d(C)\leq \omega(c)=\left|\overline{Z(g)}\right|=\Delta_{\B}\left(M(g)\right)=\Delta_{\B,\overline{\alpha}}(C)\leq\Delta_{\B}(C)\leq d(C)\]
 and, again, equalities hold.
\end{proof}

\begin{remarks}\label{remark sobre el teorema de caracterizacion}
\rm{ The conditions in statement \textit{(2)} of Theorem~\ref{caracterizacion codigos multi distancia real} refers only to a single element in $U$. This is an important reduction that will be very useful later.  On the other hand, we recall that if $M$ is the hypermatrix afforded by $\D_{\overline{\alpha}}(C)$ then $\Delta_{\B,\overline{\alpha}}(C)=\B\mad(M)$. 
}\end{remarks}

For a given abelian code, the problem of finding, if any, a codeword verifying Condition \textit{(2)} of Theorem~\ref{caracterizacion codigos multi distancia real} is in general difficult to solve. In the case that the codeword is an idempotent we will be able to find it \textit{through the computation of the minimum apparent distance}.  So far we only know an actual way to find the desired idempotent codeword of the mentioned theorem and it is given in the following result.

\begin{proposition}\label{condicion suficiente para distancia real}
Let $C$ be a code in $\F(r_1,r_2)$, $\overline{\alpha}\in U$ and $M$ the matrix afforded by its defining set 
$\mathcal{D}_{\overline{\alpha}}(C)$. Let $P\leq M$ be a $q$-orbits matrix and $e\in \mathbb{L} (r_1,r_2)$ be the idempotent such that that $P=M(e)$. If $P$ verifies
\begin{enumerate}
 \item  $\Delta(P)_{\B} = \B\mad(M)$ and
 \item  $\Delta(P)_{\B}=\left|\overline{Z(e)}\right|$.
\end{enumerate}
Then $d(C)= \B\mad(M)=\Delta_{\B}(C)$.
\end{proposition}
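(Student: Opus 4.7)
The plan is to sandwich $d(C)$ between two quantities that already coincide. The theorem stated at the end of Section~\ref{sec5} gives $\Delta_{\B,\overline{\alpha}}(C)=\B\mad(M)$, and by definition $\Delta_{\B,\overline{\alpha}}(C)\le \Delta_{\B}(C)$; together with Corollary~\ref{corTheo} this yields the chain
\[
\B\mad(M)\;\le\;\Delta_{\B}(C)\;\le\;d(C).
\]
So the only thing missing is the reverse inequality $d(C)\le \B\mad(M)$, which I would establish by exhibiting a single codeword of weight $\B\mad(M)$.

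The natural candidate is $c=\varphi^{-1}_{\overline{\alpha},e}$. First I would verify that $c$ is a codeword. The hypothesis $P\le M$, together with $P=M(\varphi_{\overline{\alpha},e})$, says that the set $D'$ of indices where the coefficient hypermatrix of $\varphi_{\overline{\alpha},e}$ vanishes is a union of $q$-orbits containing $\mathcal{D}_{\overline{\alpha}}(C)$. Since the coefficients of $\varphi_{\overline{\alpha},c}=e$ are the evaluations $c(\overline{\alpha}^{\mathbf{j}})$, one reads off $\mathcal{D}_{\overline{\alpha}}(c)=D'\supseteq \mathcal{D}_{\overline{\alpha}}(C)$, so $c$ belongs to the scalar extension of $C$ to the splitting field $\Le$. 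Second, Proposition~\ref{prop1} gives $\omega(c)=|\overline{Z(e)}|$, which by hypotheses (1) and (2) equals $\Delta_{\B}(P)=\B\mad(M)$. Since $\B\mad(M)$ is a minimum taken over nonzero $q$-orbits matrices, $P\ne 0$, so $e$ and therefore $c$ are nonzero. Hence
\[
d(C)\le \omega(c)=\B\mad(M),
\]
using that scalar extension to $\Le$ preserves the minimum distance of a linear code. Combined with the chain above, all three of $\B\mad(M)$, $\Delta_{\B}(C)$ and $d(C)$ must coincide.

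The step that deserves the most care is the translation, through the discrete Fourier transform, between three objects that the hypotheses tie together: the $q$-orbits matrix $P$, the idempotent $e$ whose DFT-coefficient matrix is $P$, and the polynomial $c=\varphi^{-1}_{\overline{\alpha},e}$ whose defining set turns out to be exactly the zero set of $P$ and whose weight is $|\overline{Z(e)}|$. Once that DFT bookkeeping is made explicit and Proposition~\ref{prop1} is invoked to convert $|\overline{Z(e)}|$ into $\omega(c)$, the two sandwich inequalities close on themselves and the result follows.
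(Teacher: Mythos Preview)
Your proof is correct and follows essentially the same sandwich argument as the paper: exhibit the codeword $c=\varphi^{-1}_{\overline{\alpha},e}$, use $\omega(c)=|\overline{Z(e)}|=\Delta_{\B}(P)=\B\mad(M)$ to get $d(C)\le \B\mad(M)$, and combine with $\B\mad(M)=\Delta_{\B,\overline{\alpha}}(C)\le\Delta_{\B}(C)\le d(C)$.

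Two minor remarks. First, you cite Proposition~\ref{prop1} for the equality $\omega(c)=|\overline{Z(e)}|$, but that proposition is the univariate statement; the bivariate version you need is Theorem~\ref{boundDFTnvar} (the equality itself is just the basic DFT count in either case). Second, the detour through ``scalar extension to $\Le$ preserves minimum distance'' is unnecessary: since $P$ is a $q$-orbits matrix, the discussion preceding the definition of $\B\mad$ in Section~\ref{sec5} shows that $c=\varphi^{-1}_{\overline{\alpha},e}$ is precisely the generating idempotent in $\F(r_1,r_2)$ of the subcode with defining set $\D(P)\supseteq \D_{\overline{\alpha}}(C)$, so $c\in C$ directly. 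The paper's proof uses this fact without comment.
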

\begin{proof}
Since  $P\leq M$, then $\varphi^{-1}_{\overline{\alpha},e}\in C$, hence $\omega(\varphi^{-1}_{\overline{\alpha},e})\geq d(C) \geq \Delta_{\B}(C)$. On the other hand, by the hypothesis \textit{(2)},  
$$\omega(\varphi^{-1}_{\overline{\alpha},e})=\Delta_{\B}(P) = \B\mad(M)\leq \Delta_{\B}(C)\leq d(C).$$
Therefore, $d(C)= \B\mad(M)=\Delta_{\B}(C)$.
\end{proof}

So, for a given code $C$ with afforded matrix $M$, if we want to know whether $d(C)=\Delta_{\B}(C)$ by using 
Proposition~\ref{condicion suficiente para distancia real}, in the case that the requested codeword is an idempotent, we have to analyze all ($q$-orbits) matrices $P\leq M$. If $\left|\overline{\mathcal{D}_{\overline{\alpha}}(C)}\right|=t$ and such idempotent exists, we have to do at most $2^t$ steps. This is a search on the set of idempotents of $C$. The reader may note that the original computation of the apparent distance in \cite{Camion} and \cite{Evans} requires to compute  the apparent distance of \textit{exactly} the same set of $q$-orbits matrices. This might be an important reduction in some cases.

We wonder if it is possible to simplify the procedure to find a $q$-orbits matrix $P$, as in Proposition~\ref{condicion suficiente para distancia real}, by analyzing the sequence of matrices in the algorithm for the computation of the minimum apparent distance of the matrix $M$ afforded by $\mathcal{D}_{\overline{\alpha}}(C)$; i.e. the computation of $\B\mad(M)$. The algorithm gives us an interesting reduction. 

In the algorithm for the computation of the strong apparent distance as in Theorem~\ref{teodam2}, we consider the sequence of matrices 
\begin{equation} \label{sequencia de matrices}
M=M_0 > M_1> \cdots > M_{j_0-1}>  M_{j_0}> \cdots >M_{\ell}> 0
\end{equation}
and let $j_0$ be the first index such that $\Delta_{\B}\left(M_{j_0}\right)= m_{\ell}=\B\mad(M)$. If $m_0=\Delta_{\B}(M)$ equals $m_{\ell}$, then $P= M$ and we do not have any reduction. However, if $m_0>m_{\ell}$,  then $\Delta_{\B}(P)=m_{\ell} <m_{j_0-1}$ which implies $P\leq M_{j_0}$, hence we can start our search from $M_{j_0}$; 
that is, we have to check only at most $2^{t-j_0}$ matrices in order to find the hypothetical matrix of Theorem~\ref{caracterizacion codigos multi distancia real}

We wonder if the existence of a matrix $P\leq M$ satisfying the conditions of Proposition~\ref{condicion suficiente para distancia real} implies the existence of a matrix in the sequence \eqref{sequencia de matrices} also satisfying those conditions. The answer is negative, as the following very simple example shows.

\begin{example} \label{ex36}
 Set $\Delta=\Delta_{\B}$, with $\B=\{\delta_{BCH}\}$. There exists an abelian code $C$, with matrix $M$ afforded by $\mathcal{D}_{\overline{\alpha}}(C)$ with respect to $\overline{\alpha}\in U$ such that:
 \begin{enumerate}
  \item For every $q$-orbits matrix in the sequence $M=M_0 > \cdots > 0$ we have $\Delta(M_j)\neq \left|\overline{Z(e_j)}\right|$, where $e_j\in \Le(r_1,r_2)$ is the idempotent that verifies $M_j=M(e_j)$.
  \item $d(C)=\Delta(C)$
 \end{enumerate}
\end{example}
\begin{proof}
 Set $q=2$ and $r_1=r_2=7$. Let $C$ be the code such that $\mathcal{D}_{\overline{\alpha}}(C)=Q(0,3)\cup Q(1,3)\cup Q(1,5)\cup Q(1,6) \cup Q(3,0)\cup (3,2)\cup Q(3,3)\cup Q(3,4)\cup Q(3,5)\cup Q(3,6)$ with respect to $\overline{\alpha}\in U$. The matrix afforded by $\mathcal{D}_{\overline{\alpha}}(C)$ is 
 $$
M=\left(
\begin{array}{ccccccc}
1 & 1 & 1 & 0 & 1 & 0 & 0 \\
1 & 1 & 1 & 0 & 1 & 0 & 0 \\
1 & 1 & 1 & 0 & 1 & 0 & 0 \\
0 & 1 & 0 & 0 & 0 & 0 & 0 \\
1 & 1 & 1 & 0 & 1 & 0 & 0 \\
0 & 0 & 0 & 0 & 1 & 0 & 0 \\
0 & 0 & 1 & 0 & 0 & 0 & 0 
\end{array}
\right).
$$

Let  $a(X_1)= (1+X_1)(1+X_1^2+X_1^3)$, $b(X_2)=(1+X_2)(1+X_2^2+X_2^3)$. If $e\in C$ is the idempotent generator then $\varphi_{\overline{\alpha},e}(X_1,X_2)=  a(X_1)\,b(X_2) + X_1^3X_2+ X_1^6X_2^2+X_1^3X_2^4$. One may compute   
$\left|\overline{Z(\varphi_{\overline{\alpha},e})}\right|=25$, by using GAP.
 
 On the other hand, computing $\B\mad(M)$, we obtain the chain $M_0>0$ and $\Delta(M)=\B\mad(M)=9$. Now consider 
the $q$-orbits matrix
 $$
P=M\left(ab\right)=\left(
\begin{array}{ccccccc}
1 & 1 & 1 & 0 & 1 & 0 & 0 \\
1 & 1 & 1 & 0 & 1 & 0 & 0 \\
1 & 1 & 1 & 0 & 1 & 0 & 0 \\
0 & 0 & 0 & 0 & 0 & 0 & 0 \\
1 & 1 & 1 & 0 & 1 & 0 & 0 \\
0 & 0 & 0 & 0 & 0 & 0 & 0 \\
0 & 0 & 0 & 0 & 0 & 0 & 0 
\end{array}
\right).
$$

Note that $P$ does not belong to the sequence $M_0>0$; however, as $a\mid X_1^7-1$ and $b\mid X_2^7-1$, then $g_1=ab$ satisfies the hypothesis of Proposition~\ref{g=ab} which means that $C$ satisfies the condition \textit{(2b)} of Theorem~\ref{caracterizacion codigos multi distancia real}. Now, as $\Delta(P)=9$ then condition \textit{(2a)} of the same theorem is satisfied and thus $d(C)=\Delta(C)$.
\end{proof}

As we have seen, although Proposition~\ref{condicion suficiente para distancia real} gives us a sufficient condition, it does not guarantee that we can find the desired codeword if it is not an idempotent, not even by using the algorithm. Now, in order to construct codes $C$ satisfying $d(C)=\Delta_{\B}(C)$, we try to move forward into a different direction: we firstly characterize those polynomials that verify \textit{(2b)}; that is, $\Delta_{\B}\left(M(g)\right)=|\overline{Z(g)}|$. In the univariate case, those polynomials were characterized in \cite{BBSAMC}. Before to extend the results to multivariate polynomials, we need to put some restrictions on the election of ds-bounds that we may use. In the case of polynomials in one variable, one may see that the condition \textit{(2b)} in Theorem~\ref{caracterizacion codigos multi distancia real} forces us to use exclusively the BCH bound as, to the best of our knowledge, the computation of $|\overline{Z(g)}|$ is only known to be obtained in terms of the degree of $\overline{X^hg}$ (viewed as a polynomial); that is, a list of consecutive exponents of the monomials with coefficient zero of the highest degrees.

In the reminder of this section, consider  $\B=\{\delta_{BCH}\}$ and  denote $\Delta=\Delta_{\delta_{BCH}}$, for the sake of simplicity. Let us recall some facts from univariate polynomials that will be used herein. For $0\neq g\in \mathbb{L} (n)$, let $m_g=\gcd (X^{n}-1, X^hg)$, which does not depend on $h\in \N$. As we pointed out in Remark~\ref{d* y DeltaBCH} about notation, $sd^*=\Delta_{\{\delta_{BCH}\}}=\Delta$.  The proof of the following result is essentially the same as~\cite[Proposition 1]{BBSAMC}; so we ommit it.

\begin{proposition} \label{1varcond}
Let $g\in \Le(n)$. Then 

$\Delta(M(g))=\left|\overline{Z(g)}\right|$  if and only if $\overline{X^hg} \mid X^n-1$,  for some $ h\in \mathbb{N}$.
\end{proposition}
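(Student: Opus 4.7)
The proof hinges on two reformulations. Since $\gcd(n,q)=1$, the polynomial $X^n-1$ is separable over $\mathbb L$, so the $n$-th roots of unity are in bijection with its linear factors. Hence the set of common roots of $g$ and $X^n-1$ has exactly $\deg m_g$ elements, giving $|\overline{Z(g)}| = n - \deg m_g$. On the other hand, with $\mathbb B=\{\delta_{BCH}\}$, unwinding the definitions yields $\Delta(M(g)) = 1 + \ell$, where $\ell$ is the length of the longest cyclic run of consecutive zero coefficients of $g$. Both quantities are invariant under multiplication by $X^h$ in $\mathbb L(n)$: the first because $X^h$ is a unit not vanishing on any element of $R$, and the second because such multiplication cyclically permutes the length-$n$ coefficient vector. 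So I may freely replace $g$ by any of its cyclic shifts.

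For the $(\Leftarrow)$ direction, I assume $\overline{X^h g}\mid X^n-1$ and replace $g$ by $\overline{X^h g}$, so that $g\mid X^n-1$ in $\mathbb L[X]$. Letting $d=\deg g$, separability of $X^n-1$ gives $g$ exactly $d$ distinct roots in $R$, whence $|\overline{Z(g)}| = n-d$. The coefficients of $g$ at positions $d+1,\ldots,n-1$ vanish, producing a cyclic run of zeros of length $n-1-d$, and therefore $\Delta(M(g))\ge n-d$. Combined with the universal bound $\Delta(M(g))\le|\overline{Z(g)}|$ from Proposition~\ref{prop1}, this forces equality.

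For the $(\Rightarrow)$ direction, set $d:=\deg m_g$, so $|\overline{Z(g)}| = n-d$; the hypothesis translates to the longest cyclic run of zero coefficients of $g$ having length exactly $n-1-d$. I choose $h\in\mathbb N$ so that in $\overline{X^h g}$ this run occupies positions $d+1,d+2,\ldots,n-1$; then the canonical representative of $X^h g$ has degree at most $d$. Since $m_g\mid g$ in $\mathbb L[X]$, we have $m_g\mid X^h g$, and together with $m_g\mid X^n-1$ this gives $m_g\mid \overline{X^h g}$ in $\mathbb L[X]$. Because $X^h$ is a unit in $\mathbb L(n)$ we have $\overline{X^h g}\ne 0$, so comparing degrees forces $\overline{X^h g} = c\cdot m_g$ for some $c\in\mathbb L^{\times}$. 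Since $m_g\mid X^n-1$, we conclude $\overline{X^h g}\mid X^n-1$.

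\textbf{Main obstacle.} The delicate point is the converse: I must use the cyclic-shift interpretation of multiplication by $X^h$ to realign the long run of zeros to the high-degree end of the canonical representative. Once this alignment is achieved, the degree bound on $\overline{X^h g}$ together with the divisibility by $m_g$ immediately yields that $\overline{X^h g}$ is a scalar multiple of $m_g$, and hence a divisor of $X^n-1$; no further computation is needed.
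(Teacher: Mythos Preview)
Your proof is correct. The paper does not give its own proof of this proposition, deferring instead to \cite[Proposition~1]{BBSAMC}; your argument---expressing $|\overline{Z(g)}|=n-\deg m_g$ via separability, interpreting $\Delta(M(g))$ as one plus the longest cyclic zero-run, and then using a cyclic shift to align that run with the top coefficients so that a degree comparison forces $\overline{X^h g}=c\,m_g$---is precisely the mechanism the paper itself invokes later (see the proof of Proposition~\ref{gabF}, which uses $\Delta(M(m_k))=r_1-\deg m_k$ and $\Delta(M(g_{2,k}))=r_1-\deg(\overline{X_1^{k'}g_{2,k}})$), so your approach coincides with the intended one.
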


Consider $g=g(X_1,X_2)\in \mathbb{L} (r_1,r_2)$ and write $M=M(g)$. In general, as we have seen in Theorem~\ref{boundDFTnvar},  $\Delta(M(g)) \leq \left|\overline{Z(g)}\right|$. We want to describe the polynomials $g$ in $\mathbb{L} (r_1,r_2)$  such that $\Delta(M(g)) = \left|\overline{Z(g)}\right|$ so we assume that the equality holds; moreover, we impose the following condition
\begin{equation} \label{conditionimposed}
\Delta_1(M)=\Delta_2(M)=\Delta(M)=\left|\overline{Z(g)}\right|.
\end{equation}
where, as in Definition~\ref{Bappdistmatrix}, $\Delta_{i}(M)=\omega_{i}(M)\cdot \epsilon_{i}(M)$, for $i\in\{1,2\}$.

We also write, following the notation of paragraph prior to Theorem~\ref{boundDFTnvar},

\begin{equation} \label{gXYYX}
g=[g(X_1)](X_2)=\sum_{k=0}^{r_2-1}g_{2,k}X_2^k , \;  \; g=[g(X_2)](X_1)=\sum_{k=0}^{r_1-1}g_{1,k}X_1^k .
\end{equation} 

For all $u\in \mathbb{L}$, the polynomials of the form $g(u,X_2)$ and $g(X_1,u)$  have the obvious meaning. For  $j=1,2$, 
denote by $\overline{Z}_j= \pi_j\left(\overline{Z(g)}\right)$ the projection of $\overline{Z(g)}$ 
onto the $j$-th coordinate and we also write ${Z}_j= \pi_j\left({Z(g)}\right)$.

For each $j\in\{1,2\}$, we set $M_j=\left\{k\in\{0,\dots,r_i-1\}\tq (j,k)\in I_p(M)\right\}$; that is, $\Delta\left(M(g_{j,k})\right)=\varepsilon_j$, for any $k\in M_j$. Note that $I_p(M)=(1,M_1)\cup (2,M_2)$, (see Definition~\ref{indices involucradas}).

Now,  for each  $k\in M_1$, define 
\[ D_{1,k}  =\left\{(u,v)\in R \tq v\in \overline{Z(g_{1,k})}\;\text{ and  }\; u\in \overline{Z\left(g(X_1,v)\right)} \right\} \]
and analogously, for each $k\in M_2$, define
\[ D_{2,k}=\left\{(u,v)\in R \tq u\in \overline{Z(g_{2,k})}\;\text{ and  }\; v\in \overline{Z\left(g(u,X_2)\right)} \right\}. \]
So, if we set, for $j\in \{1,2\}$, $\overline{D_{j,k}}=\min\left\{\left|\overline{Z(g(u,X_j))}\right|\tq u \in \overline{Z(g_{j,k})} \right\}$ then $\left|\overline{Z(g_{j,k})}\right|\cdot \left|\overline{D_{j,k}}\right|\leq \left|D_{j,k}\right|$. 

We know that, for each $k\in M_1$, it happens  $\varepsilon_1(M)=\Delta\left(M(g_{1,k})\right)\leq  \left|\overline{Z\left(g_{1,k}\right)}\right|$ and for each $v \in \overline{Z\left(g_{1,k}\right)}$, we have $\omega_1(M)\leq\Delta(M(g(X_1,v)))$ and so $\omega_1(M)\leq  \left|\overline{D_{1,k}}\right|$, hence $\varepsilon_1(M)\omega_1(M)\leq \left|D_{1,k}\right|$. However, by the condition~\eqref{conditionimposed}, $\varepsilon_1(M)\omega_1(M)=\Delta_1(M)= \left|\overline{Z(g)}\right|$ and, by definition, $\left|D_{1,k}\right| \leq \left|\overline{Z(g)}\right|$. Therefore, for all $k\in M_1$,
\begin{equation} \label{DeltaZg}
\left|\overline{Z(g)}\right|=\varepsilon_1(M)\omega_1(M)= \left|D_{1,k}\right|\quad\text{and so}\quad D_{1,k}=\overline{Z(g)}.
\end{equation}
Analogously, $D_{2,k}=\overline{Z(g)}$, for $k\in M_2$.

In fact, for $j\in\{1,2\}$ and any $k\in M_j$, we have $\left|\overline{Z(g_{j,k})}\right|\cdot \left|\overline{D_{j,k}}\right|= \left|D_{j,k}\right|$,  hence
\begin{equation}\label{epsilon es Z y omega D barra}
\varepsilon_1(M)=\Delta\left(M(g_{1,k})\right)=\left|\overline{Z\left(g_{1,k}\right)}\right|  \;\text{ and  }\; \omega_1(M) =  \left|\overline{D_{1,k}}\right|.
\end{equation}

Keeping in mind the equalities obtained in the previous paragraphs, we get the following two results.

\begin{lemma} \label{lema1} 
Let $g =g(X_1,X_2)\in \mathbb{L} (r_1,r_2)$ be a polynomial such that $M=M(g)$ satisfies the 
condition~\eqref{conditionimposed}. Then:
\begin{enumerate} 
 \item \label{distappvar2} For each $k\in M_1$, $\Delta\left(M(g_{1,k})\right)=  \left|\overline{Z\left(g_{1,k}\right)}\right|=\left|\overline{Z_2}\right|$ and $\Delta(M(g(X_1,v)))=  \left|\overline{Z\left(g(X_1,v)\right)}\right|$, for any 
$v\in \overline{Z\left(g_{1,k}\right)}$.

\item \label{distapp1var}  For each $k\in M_2$, $\Delta\left(M(g_{2,k})\right)=  \left|\overline{Z\left(g_{2,k}\right)}\right|=\left|\overline{Z_1}\right|$ and $\Delta(M(g(u,X_2)))=  \left|\overline{Z\left(g(u,X_2)\right)}\right|$, for any 
$u\in \overline{Z\left(g_{2,k}\right)}$.
 
\end{enumerate}
\end{lemma}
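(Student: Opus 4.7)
The plan is to extract the proof essentially from the chain of equalities already derived in the paragraphs preceding the statement, and to fill in the one missing piece (the equality $|\overline{Z(g_{1,k})}|=|\overline{Z_2}|$ and the analogous fact for fibers) by a short sandwich argument. Throughout, the key point to keep in mind is that $\B=\{\delta_{BCH}\}$, so $\Delta=\Delta_{\B}$ is the strong apparent distance, and $H_M(1,k)$ is literally the coefficient vector of $g_{1,k}$, so $\Delta(H_M(1,k))=\Delta(M(g_{1,k}))$.

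First, I would observe that the displayed equation \eqref{epsilon es Z y omega D barra} already delivers $\varepsilon_1(M)=|\overline{Z(g_{1,k})}|$ for $k\in M_1$. Combined with the definition of $M_1$, namely $\Delta(M(g_{1,k}))=\varepsilon_1(M)$, this yields at once the first equality in item~(1): $\Delta(M(g_{1,k}))=|\overline{Z(g_{1,k})}|$. For the second equality $|\overline{Z(g_{1,k})}|=|\overline{Z_2}|$, I would prove the set-theoretic identity $\overline{Z(g_{1,k})}=\overline{Z_2}$. The inclusion ``$\subseteq$'' is immediate: if $v\in\overline{Z(g_{1,k})}$ then the polynomial $g(X_1,v)=\sum_j g_{1,j}(v)X_1^j$ has a nonzero coefficient at $X_1^k$, hence is nonzero, so it admits some nonzero $u$ and $(u,v)\in\overline{Z(g)}$, i.e. $v\in\overline{Z_2}$. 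The reverse inclusion is exactly what the equality $D_{1,k}=\overline{Z(g)}$ from \eqref{DeltaZg} gives us: if $v\in\overline{Z_2}$ pick $u$ with $g(u,v)\neq 0$; then $(u,v)\in\overline{Z(g)}=D_{1,k}$, and by definition of $D_{1,k}$ this forces $g_{1,k}(v)\neq 0$, so $v\in\overline{Z(g_{1,k})}$.

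For the last assertion in item~(1), fix $v\in\overline{Z(g_{1,k})}$ and consider $g(X_1,v)$ as a univariate polynomial in $\Le(r_1)$. Its support (as a coefficient vector in $\Le^{r_1}$) is $\{j:g_{1,j}(v)\neq 0\}\subseteq\supp_1(M)$, so by Lemma~\ref{lemma1} applied to $\delta_{BCH}$,
\[
\omega_1(M)=\bar{\delta}_{BCH}\bigl(\Z_{r_1}\setminus\supp_1(M)\bigr)\leq \bar{\delta}_{BCH}\bigl(\Z_{r_1}\setminus\supp(g(X_1,v))\bigr)=\Delta(M(g(X_1,v))).
\]
On the other hand, Proposition~\ref{prop1} gives $\Delta(M(g(X_1,v)))\leq|\overline{Z(g(X_1,v))}|$. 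Finally, the equalities extracted from \eqref{DeltaZg} (when they become saturated) force $|\overline{Z(g(X_1,v))}|=|\overline{D_{1,k}}|=\omega_1(M)$ for every $v\in\overline{Z(g_{1,k})}$, so the sandwich collapses to the required equality $\Delta(M(g(X_1,v)))=|\overline{Z(g(X_1,v))}|$.

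Item~(2) follows by interchanging the roles of the two variables; condition~\eqref{conditionimposed} is symmetric in the indices $1,2$ so all arguments transpose verbatim. The proof involves no genuine obstacle beyond careful bookkeeping: the substantive work has already been carried out in the paragraphs leading up to the lemma, where the saturation of the product inequality $\varepsilon_1(M)\omega_1(M)\leq|\overline{Z(g_{1,k})}|\cdot|\overline{D_{1,k}}|\leq|D_{1,k}|\leq|\overline{Z(g)}|$ into a chain of equalities was established. The only new ingredient I need to supply is the use of Lemma~\ref{lemma1} to transfer the ds-bound value from $\supp_1(M)$ down to the smaller support $\supp(g(X_1,v))$, which is precisely the monotonicity property that makes the sandwich close.
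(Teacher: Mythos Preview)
Your proof is correct and follows essentially the same approach as the paper's own proof: both derive the first equality directly from \eqref{epsilon es Z y omega D barra}, establish $\overline{Z(g_{1,k})}=\overline{Z_2}$ via the identity $D_{1,k}=\overline{Z(g)}$, and close the remaining equality $\Delta(M(g(X_1,v)))=|\overline{Z(g(X_1,v))}|$ by sandwiching between $\omega_1(M)$ on the left and the saturated product count on the right. Your version is in fact more careful: you supply the easy inclusion $\overline{Z(g_{1,k})}\subseteq\overline{Z_2}$ explicitly, and you justify $\omega_1(M)\leq\Delta(M(g(X_1,v)))$ by invoking the monotonicity of the optimal ds-bound under $\supp(g(X_1,v))\subseteq\supp_1(M)$, whereas the paper treats these steps as understood.
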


\begin{proof} We prove \textit{(1.)} as the the proof of \textit{(2.)} is entirely analogous. As we have already seen, if condition~\eqref{conditionimposed} is satisfied then (\ref{DeltaZg}) and \eqref{epsilon es Z y omega D barra} also hold; so that, $\varepsilon_1(M)=\Delta\left(M(g_{1,k})\right)=\left|\overline{Z\left(g_{1,k}\right)}\right|$.

Once we have the first equality, if $\omega_1(M)=\Delta\left(M(g(X_1,v))\right) < \left|\overline{Z\left(g(X_1,v)\right)}\right|$, for some $v \in \overline{Z\left(g_{1,k}\right)}$, then it must happen $\varepsilon_1(M)\omega_1(M)< \left|D_{1,j}\right|$, a contradiction.  Finally, if $v\in \overline{Z_2}$, then there exists $u\in R_{r_1}$ such that $(u,v)\in \overline{Z(g)}=D_{1,k}$, hence $u\in \overline{Z\left(g_{1,k}\right)}$. This proves the other equalities of this lemma. 
\end{proof}

\begin{proposition} \label{gabF}
Let $g=g(X_1,X_2)\in \mathbb{L} (r_1,r_2)$ be a polynomial such that $M=M(g)$ satisfies the condition~\eqref{conditionimposed}. Then there exist $a=a(X_1) \in \mathbb{L} (r_1), \, b=b(X_2)\in \mathbb{L} (r_2)$ and 
$F=F(X_1,X_2)\in \mathbb{L} (r_1,r_2)$ such that 
$ g=abF$ and
\begin{enumerate}
 \item $\overline{X_1^{h_1}a} \mid (X_1^{r_1}-1)$, for some $h_1\in \Z_1$, with $\Delta(M(a))=\varepsilon_2(M)$.
 \item $\overline{X_2^{h_2}b} \mid (X_2^{r_2}-1)$, for some $h_2\in \Z_2$, with  $\Delta(M(b))=\varepsilon_1(M)$.
\end{enumerate}
\end{proposition}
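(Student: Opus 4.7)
My plan is to extract $a(X_1)$ and $b(X_2)$ as products of linear factors indexed by the common coordinate zeros of $g$, and then verify the required properties using the univariate characterization in Proposition~\ref{1varcond} together with Lemma~\ref{lema1}.

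The key combinatorial observation I would establish first is the description
\[
Z_2 = \bigcap_{k} Z(g_{1,k}), \qquad Z_1 = \bigcap_{k} Z(g_{2,k}).
\]
Indeed, from the expansion $g = \sum_{k} g_{1,k}(X_2) X_1^k$, a point $v \in R_{r_2}$ lies outside $\overline{Z_2}$ precisely when $g(X_1,v) = 0$ identically in $X_1$, i.e.\ when every coefficient $g_{1,k}(v)$ vanishes. I would then define
\[
a(X_1) = \prod_{u \in Z_1}(X_1 - u), \qquad b(X_2) = \prod_{v \in Z_2}(X_2 - v),
\]
both squarefree and manifestly divisors of $X_1^{r_1}-1$ and $X_2^{r_2}-1$ respectively. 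A quick observation justifies that $|Z_1| < r_1$ (otherwise $g$ would have $X_1$-degree strictly less than $r_1$ yet vanish for every $u \in R_{r_1}$, forcing $g = 0$), so $\deg a < r_1$ and $a$ equals its canonical representative in $\mathbb{L}(r_1)$; likewise for $b$.

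From $Z_2 \subseteq Z(g_{1,k})$ for every $k$, I get $b(X_2) \mid g_{1,k}(X_2)$ in $\mathbb{L}[X_2]$, hence $b(X_2) \mid g$ in $\mathbb{L}[X_1,X_2]$; symmetrically $a(X_1) \mid g$. Since $a(X_1)$ and $b(X_2)$ are coprime as univariate polynomials in different variables, their product divides $g$, yielding a factorisation $g = a(X_1)\,b(X_2)\,F(X_1,X_2)$; a degree count in each variable shows that $F$ may be taken in $\mathbb{L}(r_1,r_2)$.

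To close, I take $h_1 = h_2 = 0$, so that $\overline{X_1^{h_1}a} = a$ and $\overline{X_2^{h_2}b} = b$ divide $X_1^{r_1}-1$ and $X_2^{r_2}-1$, respectively. Proposition~\ref{1varcond} then gives $\Delta(M(a)) = |\overline{Z(a)}| = |\overline{Z_1}|$ and $\Delta(M(b)) = |\overline{Z(b)}| = |\overline{Z_2}|$; matching this with Lemma~\ref{lema1}.\textit{(2)}, which yields $\varepsilon_2(M) = |\overline{Z_1}|$, and Lemma~\ref{lema1}.\textit{(1)}, which yields $\varepsilon_1(M) = |\overline{Z_2}|$, gives the two required equalities. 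The step I expect to be most delicate is the passage from divisibility in $\mathbb{L}[X_1,X_2]$ to the quotient $\mathbb{L}(r_1,r_2)$: one needs to confirm that the quotient $F = g/(ab)$ indeed lands in canonical form, but the degree bounds on $a, b, g$ make this straightforward. Everything else reduces to bookkeeping with Proposition~\ref{1varcond} and the two parts of Lemma~\ref{lema1}.
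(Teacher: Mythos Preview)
Your argument is correct, and in fact somewhat cleaner than the paper's, but one notational point deserves care. The paper defines $Z_j=\pi_j(Z(g))$, whereas you are tacitly using $Z_j$ to mean $R_{r_j}\setminus\overline{Z_j}$. These do \emph{not} agree in general (a value $v$ can lie in both $\pi_2(Z(g))$ and $\pi_2(\overline{Z(g)})$). Your displayed identity $Z_2=\bigcap_k Z(g_{1,k})$ holds with your meaning and fails with the paper's; since the quantity $|\overline{Z_j}|$ you use at the end does coincide with the paper's $\overline{Z_j}=\pi_j(\overline{Z(g)})$, the logic survives, but you should rename your sets to avoid the collision.

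On the method: the paper fixes $k\in M_2$, sets $a=m_k=\gcd(g_{2,k},X_1^{r_1}-1)$, and uses condition~\eqref{conditionimposed} (through $D_{2,k}=\overline{Z(g)}$) to show that this particular $m_k$ divides every $g_{2,j}$; it then writes $g=af$ and symmetrically $g=bh$, and runs a second zero-set argument to prove $b\mid f$ before concluding $g=abF$. You instead define $a$ and $b$ globally as the products over the common coordinate zeros, so that $a\mid g$ and $b\mid g$ are immediate and coprimality yields $ab\mid g$ in one stroke; condition~\eqref{conditionimposed} enters only at the very end, via Lemma~\ref{lema1}, to identify $\Delta(M(a))=|\overline{Z_1}|=\varepsilon_2(M)$ and $\Delta(M(b))=|\overline{Z_2}|=\varepsilon_1(M)$. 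Under~\eqref{conditionimposed} the two constructions of $a,b$ coincide, but your route cleanly separates the purely algebraic factorisation (which holds for any $g$) from the distance-theoretic input, and bypasses the intermediate $b\mid f$ step. The paper's route, in exchange, makes more visible why the indices $k\in M_j$ are the distinguished ones.
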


\begin{proof}
By Lemma~\ref{lema1}.1 and by Proposition~\ref{1varcond}, for each $k\in M_2$, if we denote $m_{k}=\gcd\left(g_{2,k},X_1^{r_1}-1\right)$, then 
\begin{eqnarray*}
 \Delta(M(m_k))&=&\left|\overline{Z(m_k)}\right|=r_1-\left|Z(m_k)\right|=r_1-\left|Z\left(g_{2,k}\right)\right|=\\
 &=&\left|\overline{Z\left(g_{2,k}\right)}\right|=\Delta\left(M(g_{2,k})\right).
\end{eqnarray*}

By definition, $\Delta\left(M(g_{2,k})\right) = r_1-\deg (\overline{X_1^{k'} g_{2,k}})$, for some $k'\in \N$. As $\Delta(M(m_k))=\Delta\left(M(g_{2,k})\right)$ and, by \cite[Lemma 2]{BBSAMC},
$\Delta(M(m_k))=r_1-\deg m_k$, then  $\overline{X_1^{k'} g_{2,k}}$ and $m_k$ are associated.

Now we claim that $m_k\mid g_{2,j}$, for all $j\in \{ 0, \ldots, r_2-1 \}$. Indeed, for a fixed 
 $k\in M_2$, by~\eqref{DeltaZg}, we have $D_{2,k}=\overline{Z(g)}$ which implies $\overline{Z\left(g_{2,j}\right)}\subseteq \overline{Z\left(g_{2,k}\right)}$ or rather $Z\left(g_{2,k}\right) 
\subseteq Z\left(g_{2,j}\right)$. Hence, $m_k\mid g_{2,j}$, for all $j\in \{ 0,\ldots,r_2-1\}$. 

Denoting by $g'_{2,j}=\frac{g_{2,j}}{m_k}$, for all $j\in \{ 0,\ldots,r_2-1\}$ and   $a(X_1)=m_k$, we may write
\begin{equation}\label{gXY}
 g(X_1)(X_2)=a(X_1)\sum_{j=0}^{r_2-1}g'_{2,j}X_2^j,
\end{equation}
with $\Delta(M(a(X_1)))=\varepsilon_2(M)$. 

Analogously, we get
\begin{equation}\label{gYX}
 g(X_2)(X_1)=b(X_2)\sum_{i=0}^{r_1-1}g'_{1,i}X_1^i,
\end{equation}
with $\Delta(M(b(X_2)))=\varepsilon_1(M)$ and $b(X_2)g'_{1,i}=g_{1,i}$, for any  $i\in \{ 0,\ldots,r_1-1\}$. 

It is important to note that $1=\gcd \left( X_1^{r_1}-1, g'_{2,0},\ldots, g'_{2,r_2-1} \right)$ and
$1=\gcd \left( X_2^{r_2}-1, g'_{1,0},\ldots,  g'_{1,r_1-1} \right)$.

Now by writing
$$f(X_1,X_2)=\sum_{j=0}^{r_2-1}g'_{2,j}X_2^j\quad\text{ and }\quad h(X_1,X_2)=\sum_{i=0}^{r_1-1}g'_{1,i}X_1^i,$$
we get $g(X_1,X_2)=a(X_1)f(X_1,X_2)=b(X_2)h(X_1,X_2)$.

Recall that ${Z}_1= \pi_1\left({Z(g)}\right)$ and ${Z}_2= \pi_2\left({Z(g)}\right)$.

First note that if $v\in \overline{Z_2}$, then there exists $u\in \overline{Z_1}$ such that
$(u,v)\in \overline{Z(g)}$. This implies $(u,v)\in D_{1,k}$, for $k\in M_1$, 
hence $v\in \overline{Z(g_{2,k})}=\overline{Z(b)}$.
Therefore, $\overline{Z_2} \subseteq \overline{Z(b)}$ and, by Lemma~\ref{lema1}.\ref{distappvar2}, 
$\overline{Z_2} = \overline{Z(b)}$.

Consider $g(X_1,v)= b(v)\,h(X_1,v)= b(v) \sum_{i=0}^{r_1-1}g'_{2,i}(v)X_1^i$, for any $v\in R_{r_2}$. If $b(v)\neq 0$, then $g(X_1,v)\neq 0$, otherwise all $g'_{2,i}$ would have a common zero, which is not possible. Conversely, 
if $g(X_1,v)\neq 0$, then $b(v)\neq 0$. This proves that $v\in \overline{Z(b)}=\overline{Z_2}$ if and only if $g(X_1,v)\neq 0$.

Obviously $g(X_1,v)= 0$ also implies $b(v)= 0$, hence $v\in Z_2$, as $Z(b) \subseteq Z_2$. Now let us write

\[f(X_1,X_2)=\sum_{i=0}^{r_1-1}f_{2,i}X_1^i \quad \text{ and }\quad h(X_1,X_2)= \sum_{j=0}^{r_2-1}h_{1,j}X_2^j.\]

If $v\in Z(b)$, then $g(X_1,v)=0$ and we have $f(X_1,v)=0$. Since $a(X_1)\neq 0$ we must have $f_{2,i}(v)=0$, for all $i\in \{ 0,\ldots,r_1-1\}$. Hence $Z(b) \subset Z(f_i)$ and $b(X_2)\mid f_{2,i}$, for all $i\in \{ 0,\ldots,r_1-1\}$. Now if $f_{2,i}(v)=0$, for all $i\in \{ 0,\ldots,r_1-1\}$, then $f(X_1,v)=0$ and  $g(X_1,v)=0$, which implies $v\in Z(b)$, as we have seen before. Hence $b(X_2)=\gcd (X_2^{r_2}-1, f_{2,i})$, for all $i\in \{ 0,\ldots,r_1-1\}$. Therefore, $f(X_1,X_2)= b(X_2) f'(X_1,X_2)$ and 

\begin{equation}  \label{gabf1}
	g(X_1,X_2)=a(X_1)b(X_2)f'(X_1,X_2).
\end{equation}

Analogously, one may prove that $a(X_1)\mid h_{1,j}$, for all $j\in \{ 0,\ldots,r_2-1\}$, and get $h(X_1,X_2)= a(X_1) h'(X_1,X_2)$, hence 	
	\begin{equation}  \label{gabf2}
 g(X_1,X_2)=a(X_1)b(X_2)h'(X_1,X_2).
	\end{equation}

 Finally, note that the decompositions $g=abf'$ and $g=abh'$ from~\eqref{gabf1} and~\eqref{gabf2}, has been done in $\Le[X_1,X_2]$, which is a domain, and so, we have $f'(X_1,X_2)=h'(X_1,X_2)$. By writing $F(X_1,X_2)=f'(X_1,X_2)=h'(X_1,X_2)$, we get 
\begin{equation*} 
  g(X_1,X_2)=a(X_1)\, b(X_2) \, F(X_1,X_2).
 \end{equation*} 
\end{proof} 
 
It is clear that the condition~\eqref{conditionimposed} plays an important role in all the previous proofs. Recall that a polynomial $g\in \Le(r_1,r_2)$, with coefficient matrix $M=M(g)$, satisfies such condition if 
$$\Delta_1(M)=\Delta_2(M)=\Delta(M)=\left|\overline{Z(g)}\right|.$$

For those polynomials, we have obtained a factorization $g=abF$, which describes them, where $\Delta(M(a))=\varepsilon_2(M)$ and $\Delta(M(b))=\varepsilon_1(M)$. 

At this point, two questions arise for an abelian code satisfying Theorem~\ref{caracterizacion codigos multi distancia real} with a codeword image $g=\varphi_{\overline{\alpha},c}$ as in such theorem.
\begin{enumerate}
 \item Is it always true that $g$ satisfies also condition~\eqref{conditionimposed}?
 \item Suppose a polynomial $g\in \Le(r_1,r_2)$ already satisfies condition~\eqref{conditionimposed} and so we have a decomposition $g=abF$. What can we say about $F$? More specifically, is it true that $M(F)$ is a $q$-orbits matrix? And is it true that $\Delta(M(a))=\Delta(M(b))=\Delta(M(g))$?
\end{enumerate}

We shall answer all these questions in the following examples.

\begin{example} \label{ex32} 
There exists an abelian code $C$ generated by an idempotent $e \in \F(r_1,r_2)$, with image $g=\varphi_{\overline{\alpha},f}$, satisfying the following properties.
\begin{enumerate}
 \item $\Delta_1(M(g))<\Delta_2(M(g))$ (so the condition~\eqref{conditionimposed} is not fully satisfied)
 \item $\Delta(M(g))=\left|\overline{Z(g)}\right|$.
 \item $d(C)=\Delta(C)$.
\end{enumerate}
 \end{example}
\begin{proof}
 Set $q=2$, $r_1=5$, $r_2=9$ and $C$ be the code with $\overline{D(C)}=Q(1,3)$, a minimal code with generator idempotent $e(X_1,X_2)=X_1^4X_2^7+X_1^3X_2^8+X_1^4X_2^6+X_1^2X_2^8+X_1^3X_2^6+X_1^4X_2^4+X_1^3X_2^5+X_1^2X_2^6+X_1X_2^7+X_1^4X_2^3+ X_1^2X_2^5+X_1X_2^6+ X_1^3X_2^3+X_1^4X_2+X_1^3X_2^2+X_1^2X_2^3+ X_1X_2^4+ X_1^4+ X_1^2X_2^2+X_1X_2^3+X_1^3+ X_1^2+X_1X_2+X_1$.  Using the program GAP, we computed 
$d(C)=24$. One may check that
$\varphi_{\alpha,e}=g(X_1,X_2)=X_1X_2^3+X_1^4X_2^3+X_1^2X_2^6+X_1^3X_2^6$. Some direct computations yield  $\Delta_1(M(g))=18$ and $\Delta(M(g))=24$, so assertion \textit{(1)} of this example is satisfied. As $\omega(e)=24$ we also get assertion \textit{(2)}.  Since $C$ is minimal, we have, $24=\Delta(M(g))=\B\mad(M(g))$. On the other hand $\Delta(M(g))\leq \Delta(C)\leq d(C)=24$. Thus $d(C)=\Delta(C)$ and we get assertion \textit{(3)}.
\end{proof}

The previous example gives a negative answer to question 1. The following example answers question 2.

\begin{example} \label{ex33}
Under the same notation of Proposition~\ref{gabF}, there exists an abelian code $C$ generated by an 
idempotent $e \in \F(r_1,r_2)$, with image $g=\varphi_{\overline{\alpha},e}$, 
such that the following properties hold.
\begin{enumerate}
 \item $M(g)$ satisfies the condition~\eqref{conditionimposed} and then $g=abF$ as in the mentioned proposition (see paragraph prior to Example~\ref{ex32}).
 \item $d(C)=\Delta(C)$. 
 \item $F(X_1,X_2)$ has at least two nonzero monomials and $M(F)$ is not a $q$-orbits matrix.
 \item $\Delta(a)=\Delta(b)$,  but $\Delta(a)\Delta(b)\neq\Delta(M(g))$
 \item $\overline{Z(g)} \neq \overline{Z_1}\times \overline{Z_2}$ 
 \end{enumerate}
\end{example}
\begin{proof}
 Let $q=2$, $r_1=r_2=5$ and $C$ be the code with  $\overline{D(C)}=Q(1,1)\cup Q(1,3)$. 
In this case, $g(X_1,X_2)=X_1^4X_2^4+X_1^3X_2^4+X_1^4X_2^2+X_1^3X_2^3+X_1^2X_2^2+X_1X_2^3+X_1^2X_2+X_1X_2$ and $\varphi^{-1}_{\alpha,g}=X_1^3X_2^4+X_1^4X_2^2+X_1^4X_2+X_1^3X_2^2+X_1^2X_2^3+
X_1X_2^4+X_1X_2^3+X_1^2X_2$, so that $|\overline{Z(g)}|=8$.

By using GAP, we computed $d(C)=8$ and one may check that $\Delta_1(M(g))=\Delta_2(M(g))=\Delta(M(g))=8$; so that \textit{(1.)} and \textit{(2.)} hold.

Let us factorize $g$. In the case 
  $$g(X_1)(X_2)=(X_1+X^2_1)X_2+(X^2_1+X_1^4)X_2^2+(X_1+X_1^3)X_2^3+(X_1^3+X_1^4)X_2^4,$$
  $M_2=\{1,4\}$ and $a=a(X_1)=1+X_1$. Note that $(1+X_1)X_1$ is a common factor in all summands of $g(X_1)(X_2)$.
    On the other hand, 
  $$g(X_2)(X_1)=(X_2+X_2^3)X_1+(X_2+X_2^2)X_1^2+(X_2^3+X_2^4)X_1^3+(X_2^2+X_2^4)X_1^4,$$
  $M_1=\{2,3\}$ and $b=b(X_2)=1+X_2$. Here  $(1+X_2)X_2$ is a common factor in all summands of $g(X_2)(X_1)$.
Moreover,
	\begin{eqnarray*}
   f(X_2)(X_1)&=&(X_2+X_2^3)X_1+(X_2^2+X_2^3)X_1^2+(X_2^2+X_2^4)X_1^3\\
   h(X_1)(X_2)&=& (X_1+X_1^2)X_2+(X_1+X_1^4)X_2^2+(X_1^3+X_1^4)X_2^3\quad\text{and so }\\
   F(X_1,X_2)&=& X_1X_2+X_1X_2^2+X_1^2X_2^2+X_1^3X_2^2+X_1^3X_2^3.
  \end{eqnarray*}
This gives us \textit{(3.)}

Now, one may easily check that $\Delta(M(a))=\Delta(M(b))=4$, hence $\Delta(M(a))\cdot\Delta(M(b))\neq \Delta(M(g))$. This gives us \textit{(4)}. 

Finally, by using GAP we compute  $\overline{Z(g)}=Q(1,3)\cup Q(1,4)$ and clearly $\overline{Z(g)} \neq \overline{Z_1}\times \overline{Z_2}$. 
\end{proof}

To finish our argumentation from Theorem~\ref{caracterizacion codigos multi distancia real} we prove that a polynomial that satisfies condition~\eqref{conditionimposed}, and so factorizes  $g=abF$ with $F(X_1,X_2)$ a monomial in $\Le(r_1,r_2)$, verifies that its image under the discrete Fourier transform satisfy condition \textit{(2)} of the mentioned theorem.

\begin{proposition}\label{g=ab}
 Suppose $g\in \mathbb{L}(r_1,r_2)$ is such that $g(X_1,X_2)=a(X_1)\, b(X_2)$, where $a$ and $b$ satisfy Proposition~\ref{1varcond}. Set $M=M(g)$. Then 
 \begin{enumerate}
  \item $\overline{Z(g)}=\overline{Z_1}\times \overline{Z_2}$
  \item $\Delta(M)=\Delta(M(a))\cdot \Delta(M(b))=\left|\overline{Z(g)}\right|$.
  \item $\Delta_1(M)=\Delta_2(M)=\Delta(M)=\left|\overline{Z(g)}\right|$ (the condition~\eqref{conditionimposed}).
  \item $\Delta(M(a))=\varepsilon_2(M)=\omega_1(M)$ and 
 \item $\Delta(M(b))=\varepsilon_1(M)=\omega_2(M)$.
 \end{enumerate}
\end{proposition}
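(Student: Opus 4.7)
The proof is essentially a direct computation once one recognizes that the coefficient matrix $M=M(g)$ is the outer product of the coefficient vectors $v(a)$ and $v(b)$. Explicitly, if $a=\sum_i a_i X_1^i$ and $b=\sum_j b_j X_2^j$, then the $(i,j)$-entry of $M$ is $a_i b_j$, so $\supp(M)=\supp(a)\times\supp(b)$, each row $H_M(1,i)$ equals $a_i\cdot v(b)$, and each column $H_M(2,j)$ equals $b_j\cdot v(a)$. I would open the proof with this observation and build everything from it.

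For \textit{(1)}, since $g(u,v)=a(u)b(v)$, one has $(u,v)\in\overline{Z(g)}$ if and only if $a(u)\neq 0$ and $b(v)\neq 0$. This immediately gives $\overline{Z(g)}=\overline{Z(a)}\times\overline{Z(b)}$ as subsets of $R_{r_1}\times R_{r_2}$, and in particular the projections satisfy $\overline{Z_1}=\overline{Z(a)}$ and $\overline{Z_2}=\overline{Z(b)}$, proving (1). Taking cardinalities yields $|\overline{Z(g)}|=|\overline{Z(a)}|\cdot|\overline{Z(b)}|$, and since $a,b$ satisfy Proposition~\ref{1varcond}, this equals $\Delta(M(a))\cdot\Delta(M(b))$, settling the second equality of~\textit{(2)}.

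For the first equality of \textit{(2)} together with \textit{(3)}, \textit{(4)} and \textit{(5)}, I would compute $\Delta_1(M)$ and $\Delta_2(M)$ using the product structure. From $\supp(M)=\supp(a)\times\supp(b)$, the row support is $\supp_1(M)=\supp(a)$, hence
\[
\omega_1(M)=\overline{\delta}_{BCH}(\mathbb{Z}_{r_1}\setminus\supp(a))=\Delta(M(a)).
\]
For any $i\in\supp_1(M)$, $H_M(1,i)=a_i v(b)$ has the same support as $v(b)$, so $\Delta(H_M(1,i))=\Delta(M(b))$ and hence $\varepsilon_1(M)=\Delta(M(b))$. Therefore
\[
\Delta_1(M)=\omega_1(M)\cdot\varepsilon_1(M)=\Delta(M(a))\cdot\Delta(M(b)).
\]
The symmetric argument gives $\omega_2(M)=\Delta(M(b))$, $\varepsilon_2(M)=\Delta(M(a))$, and $\Delta_2(M)=\Delta(M(a))\cdot\Delta(M(b))$. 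Taking the maximum yields $\Delta(M)=\Delta(M(a))\cdot\Delta(M(b))$, which combined with the second equality of \textit{(2)} and the just-derived $\Delta_1(M)=\Delta_2(M)=\Delta(M)$ simultaneously proves \textit{(2)} and \textit{(3)}. The identifications $\Delta(M(a))=\omega_1(M)=\varepsilon_2(M)$ and $\Delta(M(b))=\omega_2(M)=\varepsilon_1(M)$ obtained along the way are precisely \textit{(4)} and \textit{(5)}.

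There is no real obstacle here; the whole argument is bookkeeping once the outer-product structure of $M(g)$ is made explicit. The one small point worth stating carefully is the link $\Delta(M(a))=\overline{\delta}_{BCH}(\mathbb{Z}_{r_1}\setminus\supp(a))$ (i.e., unwinding that $\omega_1(M)$, which is defined via the complement of the row support, is nothing but the one-variable BCH apparent distance of the coefficient vector of $a$). Everything else reduces to Proposition~\ref{1varcond} and the definitions.
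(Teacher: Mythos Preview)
Your proof is correct and follows essentially the same approach as the paper: both exploit the outer-product structure $M(g)_{ij}=a_ib_j$ to read off $\supp_1(M)=\supp(a)$, $\supp_2(M)=\supp(b)$, and the fact that every nonzero row (resp.\ column) has the same support as $v(b)$ (resp.\ $v(a)$), then invoke Proposition~\ref{1varcond} for the equality $\Delta(M(a))\Delta(M(b))=|\overline{Z(g)}|$. Your write-up is in fact more explicit and careful than the paper's terse version (which contains a small slip in the ``Analogously'' clause), but the underlying argument is identical.
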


\begin{proof}
Assertion \textit{(1)} and the equality $\Delta(M(a))\cdot \Delta(M(b))=\left|\overline{Z(g)}\right|$ come directly from the decomposition of $g$ together with the hypothesis that $a$ and $b$ satisfy Proposition~\ref{1varcond}.

Now set $M=M(g)$. Since $g=ab$ then $H_m(2,j)=M(a)$, for all $j\in\supp_2(M)$ and $\supp_2(M)=\supp(M(b))$; so that $\varepsilon_2(M)=\Delta(M(a))$ and $\omega_2(M)=\Delta(M(b))$. Analogously, $\varepsilon_1(M)=\Delta(M(a))$ and $\omega_1(M)=\Delta(M(b))$. From this, we get all assertions.
\end{proof}

To sum up, from Proposition~\ref{gabF} we have obtained what kind of polynomials we have to use to reach condition \eqref{conditionimposed}. This will be the main idea in order to construct abelian codes $C$, with $d(C)=\Delta(C)$. We address this problem in the following section.

\subsection{Application 1: construction of abelian codes for which its multivariate BCH bound, apparent distance and minimum distance coincide.}

 In this section, we continue considering  $\B=\{\delta_{BCH}\}$ and  denoting $\Delta=\Delta_{\delta_{BCH}}$, for the sake of simplicity by the same reasons given in the paragraphs prior Proposition~\ref{1varcond}. Bearing in mind Proposition~\ref{g=ab} and Proposition~\ref{condicion suficiente para distancia real}, we introduce the following definition.
 
\begin{definition}
 A matrix $P$ of order $r_1 \times r_2$, with entries in $\Le$ is called a \textbf{composed polynomial matrix} (\textbf{CP-matrix}, for short) if there exist polynomials $a=a(X_1)\in \Le(r_1)$ and $b=b(X_2)\in\Le(r_2)$ such that $P=M(ab)$, where $ab\in \Le(r_1,r_2)$.
\end{definition}

Note that, for a CP-matrix $P$, its support is a direct product $\supp(P)=\supp(a(X_1))\times \supp(b(X_2))$. The polynomials $a$ and $b$ are called the \textbf{polynomial factors of $P$}. The reader may see that to check if a matrix is a CP-matrix is a trivial task, because it must happen $\pi_1(\supp(P))=\supp(a)=\supp(M(a))$ and $\pi_2(\supp(P))=\supp(b)=\supp(M(b))$. The following result is an immediate consequence of Proposition~\ref{g=ab}.

\begin{corollary}\label{propiedades matrices rectangulares}
Let $P=M(g)$ be a CP-matrix of order $r_1\times r_2$ with polynomial factors $a$ and $b$; that is, $g=ab$. 
If $\overline{X_1^{h_1}a}\mid X_1^{r_1}-1$ and $\overline{X_2^{h_2}b}\mid X_2^{r_2}-1$, for some $h_1, h_2\in \N$, then
 \begin{enumerate}
  \item $\overline{Z(ab)}=\overline{Z(a)}\times \overline{Z(b)}.$
  \item $\Delta(P)=\Delta(M(a))\cdot \Delta(M(b))=\left|\overline{Z(g)}\right|$.
  \item $\Delta_1(P)=\Delta_2(P)=\Delta(P)=\left|\overline{Z(g)}\right|$ (the condition~\eqref{conditionimposed}).
  \item $\Delta(M(a))=\varepsilon_2(P)=\omega_1(P)$ and 
 \item $\Delta(M(b))=\varepsilon_1(P)=\omega_2(P)$.
 \end{enumerate}
\end{corollary}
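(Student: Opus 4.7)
The plan is to observe that this corollary is essentially a rewording of Proposition~\ref{g=ab} in the language of CP-matrices, where the only real work is translating the divisibility hypotheses on $a$ and $b$ and identifying the projections $\overline{Z_1}$, $\overline{Z_2}$ with $\overline{Z(a)}$, $\overline{Z(b)}$.

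First I would note that by Proposition~\ref{1varcond} applied to each variable separately, the conditions $\overline{X_1^{h_1}a}\mid X_1^{r_1}-1$ and $\overline{X_2^{h_2}b}\mid X_2^{r_2}-1$ are equivalent to the equalities $\Delta(M(a))=\left|\overline{Z(a)}\right|$ and $\Delta(M(b))=\left|\overline{Z(b)}\right|$. These are exactly the hypotheses on $a$ and $b$ required by Proposition~\ref{g=ab}, so that proposition applies to $g=ab$.

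Next I would prove part (1) by a direct factorization argument. Since $\Le$ is a field and $g(X_1,X_2)=a(X_1)b(X_2)$, for any $(\alpha,\beta)\in R=R_{r_1}\times R_{r_2}$ we have $g(\alpha,\beta)=a(\alpha)\,b(\beta)=0$ if and only if $a(\alpha)=0$ or $b(\beta)=0$. Hence
\[
Z(g)=\bigl(Z(a)\times R_{r_2}\bigr)\cup\bigl(R_{r_1}\times Z(b)\bigr),
\]
and taking complements in $R$ yields $\overline{Z(g)}=\overline{Z(a)}\times\overline{Z(b)}$, which is statement (1). Moreover, since both factors are non-empty (otherwise $g=0$ and everything is trivial), projecting onto each coordinate gives $\overline{Z_1}=\overline{Z(a)}$ and $\overline{Z_2}=\overline{Z(b)}$; this is the identification needed to match the conclusions of Proposition~\ref{g=ab} with those of the corollary.

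With this identification, statements (2), (3), (4) and (5) are literally the corresponding statements of Proposition~\ref{g=ab}, and so follow without further work. The main (and essentially only) point is thus the bookkeeping translation of hypotheses via Proposition~\ref{1varcond}; there is no genuine obstacle, which is why the authors describe the result as an immediate consequence.
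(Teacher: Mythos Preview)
Your proposal is correct and matches the paper's approach: the paper simply states that the corollary is an immediate consequence of Proposition~\ref{g=ab}, and your argument makes explicit the two bookkeeping steps this entails---translating the divisibility hypotheses via Proposition~\ref{1varcond}, and identifying $\overline{Z_1},\overline{Z_2}$ with $\overline{Z(a)},\overline{Z(b)}$ so that statement~(1) of Proposition~\ref{g=ab} becomes statement~(1) of the corollary. There is nothing to add.
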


\begin{example}\label{n=7X3 q=2}
\rm{
 Set $q=2$, $r_1=3$ and $r_2=7$; so that $n=21$. Let $P$ be the CP-matrix with polynomial factors $a=X_1+X_1^2$ and $b=X_2+X_2^2+X_2^4$, and $g=ab$. In this case, $\overline{X_1^2a}
\mid X_1^3-1$ and $\overline{X_2^6 b}\mid X_2^7-1$. Now $\Delta(M(a))=\Delta(M(\overline{X_1^2a}))=2$, 
$\Delta(M(b))=\Delta(M(\overline{X_2^6 b}))=4$; hence $\Delta(P)=8=|\overline{Z(g)}|$.
}\end{example}

The next example shows that the hypothesis on the polynomials $a$ and $b$ of Corollary~\ref{propiedades matrices rectangulares} are not superfluous.

\begin{example}\label{n=5X7 q=2}
\rm{
  Set $q=2$, $r_1=5$ and $r_2=7$; so that $n=35$. Let $P$ be the CP-matrix with factors $a=X_1+X_1^2+X_1^3+X_1^4$ and $b=X_2+X_2^2+X_2^4$. In this case, $\overline{X_1^{h_1}a}\nmid X_1^3-1$, for all $h_1\in\Z_5$. On the other hand $\overline{X_2^6 b}\mid X_2^7-1$. Now $\Delta(M(a))=2$, $\Delta(M(b))=4$. Although $\Delta_1(P)=\Delta_2(P)=\Delta(P)=8$, one may check that $\left|\overline{Z(ab)}\right|=16$.
}\end{example}

Now we give a method for constructing the desired abelian codes. First, a technical lemma.

\begin{lemma}\label{en rectangulares B-mad=da}
 Let $D\subset \Z_{r_1}\times\Z_{r_2}$ be union of $q$-orbits and $M=M(D)$, the matrix afforded by $D$. If $\supp(M)=\pi_1\left(\supp(M)\right)\times \pi_2\left(\supp(M)\right)$ then $\B\mad=\Delta_{\B}(M)$, where $\B$ is any set of ds-bounds.
 
 In the case $\B=\{\delta_{BCH}\}$, the above equality coincides with the multivariate BCH bound in \cite[Theorem 30]{BBCS2}.
\end{lemma}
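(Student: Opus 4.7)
The plan is to prove the two inequalities separately. The bound $\B\mad(M)\leq \Delta_{\B}(M)$ is immediate from the definition of $\B\mad(M)$, simply by taking $P=M$ in the minimum. All the work goes into the reverse inequality: for every nonzero $q$-orbits matrix $P\leq M$, showing $\Delta_{\B}(P)\geq \Delta_{\B}(M)$.

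Set $A=\supp_1(M)$ and $B=\supp_2(M)$; by hypothesis $\supp(M)=A\times B$. For brevity write $\omega_A=\max\{\overline{\delta}(\Z_{r_1}\setminus A)\mid \delta\in\B\}$ and $\omega_B=\max\{\overline{\delta}(\Z_{r_2}\setminus B)\mid \delta\in\B\}$. Since every nonzero row $H_M(1,i)$ (for $i\in A$) has support exactly $B$ and every nonzero column $H_M(2,j)$ (for $j\in B$) has support exactly $A$, one computes directly $\omega_1(M)=\omega_A$, $\epsilon_1(M)=\omega_B$ and, symmetrically, $\omega_2(M)=\omega_B$, $\epsilon_2(M)=\omega_A$. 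Hence $\Delta_1(M)=\Delta_2(M)=\omega_A\cdot\omega_B$, and therefore $\Delta_{\B}(M)=\omega_A\omega_B$.

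Now let $0\neq P\leq M$ be a $q$-orbits matrix. Since $\supp(P)\subseteq \supp(M)=A\times B$, we have $\supp_j(P)\subseteq \supp_j(M)$ for $j=1,2$, so $\Z_{r_j}\setminus \supp_j(M)\subseteq \Z_{r_j}\setminus \supp_j(P)$. By the monotonicity of $\overline{\delta}$ (Lemma~\ref{lemma1}(2)), this gives $\omega_j(P)\geq \omega_j(M)$. For $\epsilon_1(P)$: any $i\in \supp_1(P)$ lies in $A$, and $\supp(H_P(1,i))\subseteq \supp(H_M(1,i))=B$, so the same monotonicity yields $\Delta_{\B}(H_P(1,i))\geq \omega_B=\epsilon_1(M)$, whence $\epsilon_1(P)\geq \omega_B$. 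Therefore $\Delta_1(P)=\omega_1(P)\cdot\epsilon_1(P)\geq \omega_A\omega_B=\Delta_{\B}(M)$, and the symmetric argument gives $\Delta_2(P)\geq \Delta_{\B}(M)$. Hence $\Delta_{\B}(P)\geq \Delta_{\B}(M)$, concluding the reverse inequality.

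The only delicate point, and the single place where the rectangular hypothesis is truly used, is the lower bound $\epsilon_1(P)\geq \epsilon_1(M)$: in general a nonzero row of $P$ can be a proper sub-row of the corresponding row of $M$, so its $\B$-apparent distance need not dominate that of the row of $M$. Rectangularity rescues us because the row supports of $M$ are already the largest possible set $B$, making the containment $\supp(H_P(1,i))\subseteq B$ automatic. Finally, when $\B=\{\delta_{BCH}\}$, the value $\omega_A\omega_B$ is exactly the product of the two one-variable BCH bounds applied to the projections of $\overline{\supp(M)}$, which is the multivariate BCH bound of \cite[Theorem 30]{BBCS2}; so the identification claimed in the last sentence follows at once from the formula $\Delta_{\B}(M)=\omega_A\omega_B$ obtained above.
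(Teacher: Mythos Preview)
Your proof is correct. Both you and the paper start from the same key structural observation --- that under the rectangular hypothesis every nonzero row of $M$ has support exactly $B$ and every nonzero column has support exactly $A$ --- but you then proceed differently. The paper's one-line argument notes that consequently \emph{all} nonzero rows and columns are involved (in the sense of Definition~\ref{indices involucradas}) and implicitly appeals to the algorithm of Theorem~\ref{teodam2}: with every row and column involved, Step~2(b) produces $M_1=0$, so Step~3(a) yields $\B\mad(M)=m_0=\Delta_{\B}(M)$. You instead give a direct, algorithm-free argument: you compute $\Delta_{\B}(M)=\omega_A\omega_B$ explicitly and then show, via the monotonicity of each $\overline{\delta}$, that $\Delta_{\B}(P)\geq \omega_A\omega_B$ for every nonzero $q$-orbits matrix $P\leq M$. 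Your route is a bit longer but entirely self-contained, and it makes transparent exactly where rectangularity is used (the bound $\epsilon_1(P)\geq \omega_B$); the paper's route is shorter but leans on the algorithmic machinery built earlier.
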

\begin{proof} 
 Clearly, in this case all rows (columns) have the same support and so if one row or column is involved then all of them are too. The last assertion comes directly from the computation of the multivariate BCH bound.
\end{proof}

\begin{remark}\label{CP=rectang + afforded}\rm{
 We have already mentioned that for any CP-matrix, $M$, one has $\supp(M)=\pi_1\left(\supp(M)\right)\times \pi_2\left(\supp(M)\right)$. The converse is true for those matrices satisfying hypothesis of lemma above; that is, if  $D\subset \Z_{r_1}\times\Z_{r_2}$ is union of $q$-orbits and $M=M(D)$ is the matrix afforded by $D$ with $\supp(M)=\pi_1\left(\supp(M)\right)\times \pi_2\left(\supp(M)\right)$ then $M$ is a CP-matrix.
}\end{remark}

\begin{theorem}\label{construccion con TFD inversa}
Let $\mathbb{K}$ be an intermediate field $\F\subseteq \mathbb{K} \subseteq \Le$, $a=a(X_1)\in \mathbb{K}(r_1)$ and $b=b(X_2)\in \mathbb{K}(r_2)$ be such that $a\mid X_1^{r_1}-1$ and $b\mid X_2^{r_2}-1$. If there exist $(\alpha_1,\alpha_2)\in U$, $h_1\in \Z_{r_1}$ and $h_2\in\Z_{r_2}$  for which $\varphi^{-1}_{\alpha_1,\overline{X_1^{h_1}a}}\in \F(r_1)$ and $\varphi^{-1}_{\alpha_2,\overline{X_2^{h_2}b}}\in \F(r_2)$, then the abelian code 
$C=\left(\varphi^{-1}_{\alpha_1,\overline{X_1^{h_1}a}}\cdot \varphi^{-1}_{\alpha_2,\overline{X_2^{h_2}b}}\right)$ 
in $\F(r_1,r_2)$ verifies $\Delta\left(M(ab)\right)=\Delta(C)=d(C)$.

Moreover, in this case, for any $\beta_1\in U_{r_1}$ and $\beta_2\in U_{r_2}$ the abelian code $C_{(\beta_1,\beta_2)}=\left(\varphi^{-1}_{\beta_1,\overline{X_1^{h_1}a}}\cdot \varphi^{-1}_{\beta_2,\overline{X_2^{h_2}b}}\right)$ is an ideal of $\F(r_1,r_2)$ and verifies $\Delta\left(M(ab)\right)=\Delta(C_{(\beta_1,\beta_2)})=d(C_{(\beta_1,\beta_2)})=d(C)$.
\end{theorem}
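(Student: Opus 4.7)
The plan is to compute the bivariate discrete Fourier transform of the generator $c=\varphi^{-1}_{\alpha_1,\overline{X_1^{h_1}a}}\cdot\varphi^{-1}_{\alpha_2,\overline{X_2^{h_2}b}}$ at $(\alpha_1,\alpha_2)$ and then sandwich $\Delta(M(ab))$, $\Delta(C)$ and $d(C)$ between a single-codeword upper bound (from $c$ itself) and a rectangular-defining-set lower bound. Writing $p_1=\varphi^{-1}_{\alpha_1,\overline{X_1^{h_1}a}}\in\F(r_1)$ and $p_2=\varphi^{-1}_{\alpha_2,\overline{X_2^{h_2}b}}\in\F(r_2)$, separation of variables yields $(p_1p_2)(\alpha_1^{j_1},\alpha_2^{j_2})=p_1(\alpha_1^{j_1})\,p_2(\alpha_2^{j_2})$, so
\[
\varphi_{(\alpha_1,\alpha_2),c}(X_1,X_2)=\varphi_{\alpha_1,p_1}(X_1)\,\varphi_{\alpha_2,p_2}(X_2)=\overline{X_1^{h_1}X_2^{h_2}\,ab}.
\]
Hence $M(\varphi_{(\alpha_1,\alpha_2),c})$ is obtained from $M(ab)$ by a cyclic shift of rows by $h_1$ and columns by $h_2$, an operation under which the BCH-based $\Delta$ is manifestly invariant, and $|\overline{Z(\varphi_{(\alpha_1,\alpha_2),c})}|=|\overline{Z(ab)}|$ because no root of unity is zero.

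Since $a\mid X_1^{r_1}-1$ and $b\mid X_2^{r_2}-1$, the CP-matrix $M(ab)$ satisfies the hypothesis of Corollary~\ref{propiedades matrices rectangulares} with both internal shifts equal to $0$, giving $\Delta(M(ab))=|\overline{Z(ab)}|$. Combining this with Theorem~\ref{boundDFTnvar} and Corollary~\ref{corTheo} produces the upper half of the chain,
\[
\Delta(C)\leq d(C)\leq\omega(c)=|\overline{Z(ab)}|=\Delta(M(ab)).
\]
For the matching lower bound, $c(\alpha_1^{j_1},\alpha_2^{j_2})=p_1(\alpha_1^{j_1})\,p_2(\alpha_2^{j_2})$ vanishes iff the $j_1$-th coefficient of $\overline{X_1^{h_1}a}$ or the $j_2$-th coefficient of $\overline{X_2^{h_2}b}$ is zero, so the afforded matrix $M:=M(\D_{(\alpha_1,\alpha_2)}(C))$ has rectangular support $\supp(\overline{X_1^{h_1}a})\times\supp(\overline{X_2^{h_2}b})$. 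By Remark~\ref{CP=rectang + afforded} and Lemma~\ref{en rectangulares B-mad=da}, $M$ is a CP-matrix with $\B\mad(M)=\Delta(M)$, and cyclic invariance gives $\Delta(M)=\Delta(M(ab))$; the identification $\Delta_{(\alpha_1,\alpha_2)}(C)=\B\mad(M)$ (the last theorem of Section~\ref{sec5}), together with $\Delta(C)\geq\Delta_{(\alpha_1,\alpha_2)}(C)$, closes the chain and forces $\Delta(M(ab))=\Delta(C)=d(C)$.

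For the ``moreover'' part, observe that $\varphi^{-1}_{\alpha_i,g_i}\in\F(r_i)$ with $g_1=\overline{X_1^{h_1}a}$ and $g_2=\overline{X_2^{h_2}b}$ is equivalent to $g_i(\alpha_i^{-j})\in\F$ for every $j\in\Z_{r_i}$. Writing $\beta_i=\alpha_i^{k_i}$ with $\gcd(k_i,r_i)=1$, the map $j\mapsto k_ij$ permutes $\Z_{r_i}$, so $\{g_i(\beta_i^{-j}):j\in\Z_{r_i}\}=\{g_i(\alpha_i^{-j}):j\in\Z_{r_i}\}\subseteq\F$; hence $\varphi^{-1}_{\beta_i,g_i}\in\F(r_i)$ and $C_{(\beta_1,\beta_2)}$ is indeed an ideal of $\F(r_1,r_2)$. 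The preceding argument then applies verbatim with $(\beta_1,\beta_2)$ replacing $(\alpha_1,\alpha_2)$ (the right-hand side $\Delta(M(ab))$ does not depend on the chosen primitive roots), yielding $\Delta(C_{(\beta_1,\beta_2)})=d(C_{(\beta_1,\beta_2)})=\Delta(M(ab))=d(C)$. The main obstacle is confirming the rectangularity of $\supp(M)$; once that is in hand, Lemma~\ref{en rectangulares B-mad=da} and Corollary~\ref{propiedades matrices rectangulares} supply everything else.
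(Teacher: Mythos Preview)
Your argument is correct and follows essentially the same route as the paper: identify $\varphi_{(\alpha_1,\alpha_2),c}$ as the shifted product $\overline{X_1^{h_1}X_2^{h_2}ab}$, use Corollary~\ref{propiedades matrices rectangulares} on the CP-matrix to get $\Delta(M(ab))=|\overline{Z(ab)}|$, and use Lemma~\ref{en rectangulares B-mad=da} on the (rectangular) afforded matrix of $C$ to get $\B\mad(M)=\Delta(M)=\Delta(M(ab))$. The only difference is cosmetic: the paper packages your sandwich $\Delta(M(ab))\leq\Delta(C)\leq d(C)\leq\omega(c)=\Delta(M(ab))$ as an invocation of Theorem~\ref{caracterizacion codigos multi distancia real} (your two halves are exactly its conditions \textit{(2a)} and \textit{(2b)}), and for the ``moreover'' part it cites an external remark where you give the explicit permutation argument $j\mapsto k_ij$ showing $\varphi^{-1}_{\beta_i,g_i}\in\F(r_i)$.
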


\begin{proof}
Set $g(X_1,X_2)=\overline{X_1^{h_1}a(X_1)\cdot X_2^{h_2}b(X_2)}$ and $\overline{\alpha}=(\alpha_1,\alpha_2)$. By definition of the discrete Fourier transform, it is easy to see that the particular factorization of $g$ implies that $\varphi^{-1}_{(\alpha_1,\alpha_2),g}=\varphi^{-1}_{\alpha_1,\overline{X_1^{h_1}a}}\cdot \varphi^{-1}_{\alpha_2,\overline{X_2^{h_2}b}}$. On the other hand, it is clear that $M(g)$ is a CP-matrix satisfying the hypothesis of Corollary~\ref{propiedades matrices rectangulares}. This, in turn, implies that statement \textit{2(b)} of Theorem~\ref{caracterizacion codigos multi distancia real} is satisfied.

Let $M$ be the matrix afforded by $\D_{\overline{\alpha}}(C)$. Since $C=\left(\varphi^{-1}_{\overline{\alpha},g}\right)$ then $\supp(M)=\supp(M(g))$, hence $M$ is also a CP-matrix and $\Delta(M)=\Delta(M(g))$. By Lemma~\ref{en rectangulares B-mad=da}, $\B\mad(M)=\Delta(M)$ and so statement \textit{2(a)} of Theorem~\ref{caracterizacion codigos multi distancia real} is also satisfied. Thus $\Delta(C)=d(C)$. The final assertion is a direct consequence of \cite[Remark 2]{BBSAMC} together with the fact that, under these hypothesis, all afforded matrices are CP-matrices.
\end{proof}

Now, we may apply all known criteria for univariate polynomials to have inverse of the discrete Fourier transform in an specific quotient ring.  The following corollary concretizes the proposed construction. 
It comes from \cite[Remark 2]{BBSAMC} and the theorem above.

\begin{corollary}\label{corolario construccion con a() y b()}
 Let $\mathbb{K}$ be an intermediate field $\F\subseteq \mathbb{K} \subseteq \Le$, $a=a(X_1)\in \mathbb{K}(r_1)$ and $b=b(X_2)\in \mathbb{K}(r_2)$ be such that $a\mid X_1^{r_1}-1$ and $b\mid X_2^{r_2}-1$. If there exist $(\alpha_1,\alpha_2)\in U$, $h_1\in \Z_{r_1}$ and $h_2\in\Z_{r_2}$  for which 
$\left[\left(\overline{X_1^{h_1}a}\right)(\alpha_1^i)\right]^q =\left(\overline{X_1^{h_1}a}\right)(\alpha_1^i)$, 
for all $i\in \{0,\dots,r_1-1\}$ and $\left[\left(\overline{X_2^{h_2}b}\right)(\alpha_2^j)\right]^q =
\left(\overline{X_2^{h_2}b}\right)(\alpha_2^j)$, for all $j\in \{0,\dots,r_2-1\}$, then the 
family of abelian codes 
$$\left\{C_{(\beta_1,\beta_2)}=\left(\varphi^{-1}_{\beta_1,\overline{X_1^{h_1}a}}\cdot 
\varphi^{-1}_{\beta_2,\overline{X_2^{h_2}b}}\right)\tq \beta_1\in U_{r_1}\text{ and }\beta_2\in U_{r_2}\right\}$$
in $\F(r_1,r_2)$ verifies $\Delta\left(M(ab)\right)=\Delta(C_{(\beta_1,\beta_2)})=d(C_{(\beta_1,\beta_2)})$.
\end{corollary}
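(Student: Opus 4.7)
The plan is to reduce this corollary directly to the theorem immediately above by verifying the two transfer conditions on the inverse discrete Fourier transforms. The only ingredient needed is the well-known criterion, recorded in \cite[Remark 2]{BBSAMC}, that for $f\in \Le(n)$ one has $\varphi^{-1}_{\alpha,f}\in \F(n)$ if and only if every value $f(\alpha^i)$, with $i\in\Z_n$, is fixed by the Frobenius automorphism $x\mapsto x^q$; equivalently $[f(\alpha^i)]^q=f(\alpha^i)$ for every $i$, which amounts to $f(\alpha^i)\in \F$, since $\F$ is the fixed field of that Frobenius inside $\Le$.

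I would apply this criterion first to $f=\overline{X_1^{h_1}a}\in \mathbb{K}(r_1)\subseteq \Le(r_1)$ and $\alpha=\alpha_1$: the stated hypothesis $[(\overline{X_1^{h_1}a})(\alpha_1^i)]^q=(\overline{X_1^{h_1}a})(\alpha_1^i)$ for all $i\in\{0,\ldots,r_1-1\}$ yields $\varphi^{-1}_{\alpha_1,\overline{X_1^{h_1}a}}\in \F(r_1)$; the symmetric argument with $\overline{X_2^{h_2}b}$ and $\alpha_2$ yields $\varphi^{-1}_{\alpha_2,\overline{X_2^{h_2}b}}\in \F(r_2)$. With these two facts in hand, every hypothesis of the preceding theorem is now in place: the field containment $\F\subseteq \mathbb{K}\subseteq \Le$, the divisibilities $a\mid X_1^{r_1}-1$ and $b\mid X_2^{r_2}-1$, together with the distinguished pair $(\alpha_1,\alpha_2)\in U$ and shifts $h_1,h_2$ making both inverse transforms rational over $\F$.

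At that point I would simply invoke the ``moreover'' clause of the theorem above: for every $(\beta_1,\beta_2)\in U_{r_1}\times U_{r_2}$, the product $\varphi^{-1}_{\beta_1,\overline{X_1^{h_1}a}}\cdot \varphi^{-1}_{\beta_2,\overline{X_2^{h_2}b}}$ generates an ideal $C_{(\beta_1,\beta_2)}$ of $\F(r_1,r_2)$ satisfying $\Delta(M(ab))=\Delta(C_{(\beta_1,\beta_2)})=d(C_{(\beta_1,\beta_2)})$, which is exactly the conclusion to be proved. I do not expect a substantive obstacle here: the entire argument is bookkeeping, and the one subtle step --- identifying the stated power-of-$q$ equations with the Frobenius-fixed condition characterising $\F$-rationality of the inverse transform --- is already packaged in the cited remark. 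Should a self-contained argument be preferred, one would just unwind the inverse DFT formula $\varphi^{-1}_{\alpha,f}(X)=\tfrac{1}{n}\sum_i f(\alpha^{-i})X^i$ to see that each coefficient lies in $\F$ precisely when each value $f(\alpha^i)$ does, and then conclude as above.
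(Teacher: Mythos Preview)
Your proposal is correct and follows exactly the approach indicated in the paper: the corollary is stated there as coming from \cite[Remark 2]{BBSAMC} together with the preceding theorem, which is precisely the reduction you carry out. Your unpacking of the Frobenius-fixed criterion and the appeal to the ``moreover'' clause of Theorem~\ref{construccion con TFD inversa} are the intended steps.
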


The following example shows how to use Corollary~\ref{corolario construccion con a() y b()}.

\begin{example}\label{ejemplo 3 por 45}\rm{
 Set $q=2$, $r_1=3$ and $r_2=45$ (so $n=135$). Fix $\alpha_1\in U_3$ and $\alpha_2 \in U_{45}$. Consider the polynomials $a=X+1$ and $b=Y^{40}+Y^{39}+Y^{38}+ Y^{36}+Y^{35}+Y^{32}+Y^{30}+Y^{25}+Y^{24}+Y^{23}+Y^{21}+ Y^{20}+Y^{17}+ Y^{15}+Y^{10}+Y^9+ Y^8+Y^6+Y^5+Y^2+1$. Then $a\mid X^3-1$. Note that $\supp\left(X\cdot a(X)\right)=\{1,2\}=C_2(1)$ modulo $3$. By \cite[Lemma 1]{BBSAMC} we have 
that $h_1=1$ works. Now, the polynomial $b$ appears in \cite[Example 5]{BBSAMC} where it was mentioned that $b\mid x^{45}-1$ in $\mathbb{F}_2[x]$ (so that $\mathbb{K}=\mathbb{F}_2$). In that example, it is shown that, for $\alpha_2 \in U_{45}$ (for instance, the one with minimal polynomial $Y^{12}+Y^3+1$), since $b(1)=1$ and $b(\alpha_2^3)=\alpha_2^{30}$, then $(Y^5b)(1)=1$, $(Y^5b)\left(\alpha_2^3\right)=(\alpha_2^3)^5\alpha_2^{30}=\alpha_2^{45}=1$. So $h_2=5$ will work because $(Y^5b)\left(\alpha_2^6\right)=(Y^5b)\left(\alpha_2^{12}\right)=(Y^5b)\left(\alpha_2^{24}\right)=1$; note that $C_2(3)=\{3,6,12,24\}$ modulo $45$.    Now set $C=(\varphi^{-1}_{\alpha_1,Xa}\cdot \varphi^{-1}_{\alpha_2,Y^5b})\subset \mathbb{F}_{2}(r_1,r_2)= \mathbb{F}_{2}(3,45)$. Then $D_{(\alpha_1,\alpha_2)}(C)=C_2(1)\times \left(C_2(1)\cup C_2(3)\cup  C_2(9)\cup C_2(21)\right)$.

 One  may check that $10=\Delta(M(ab))$; so that $d(C)=10$ and $\dim_{\mathbb{F}_2}(C)=87$.
}\end{example}

The next example shows that from a code satisfying the conditions of Theorem~\ref{caracterizacion codigos multi distancia real}, we can obtain a code with better parameters by making slight modifications on the defining set in such a way that the new code verifies the same conditions, but it has higher dimension, for example.

\begin{example}\label{ejemplo mejorar la dimension 3 por 45}\rm{
 Set $q=2$, $r_1=3$ and $r_2=45$. Fix $\alpha_1\in U_3$ and $\alpha_2 \in U_{45}$. Consider the code $C$ in 
Example~\ref{ejemplo 3 por 45}; that is $D_{(\alpha_1,\alpha_2)}(C)=C_2(1)\times \left(C_2(1)\cup C_2(3)\cup  C_2(9)\cup C_2(21)\right)$ and set $g=Xa\cdot Y^5b$. As one may check there are three subsets determining $\Delta(M(g))$; to witt
 \begin{eqnarray*}
  S_1&=&\{(1,1),\,(1,2),\,(1,3),\,(1,4),\,(2,1),\,(2,2),\,(2,3),\,(2,4)\}, \\ 
  S_2&=&\{(1,16),\,(1,17),\,(1,18),\,(1,19),\,(2,16),\,(2,17),\,(2,18),\,(2,19)\}\;\text{and}\\
  S_3&=&\{(1,31),\,(1,32),\,(1,33),\,(1,34),\,(2,31),\,(2,32),\,(2,33),\,(2,34)\}.
 \end{eqnarray*}
If one computes $\Delta(M(g))$ by considering $S_1$ then clearly $C_2(1)\times C_2(21)$ will have no influence in the computation. Hence one may construct the new code $C'$ for which $D_{(\alpha_1,\alpha_2)}(C')=C_2(1)\times \left(C_2(1)\cup C_2(3)\cup  C_2(9)\right)$ such that  $\Delta(C')=\Delta(C)=\Delta(M(g))$.

Note that the matrix afforded by $D=D_{(\alpha_1,\alpha_2)}(C')$ is also a CP-matrix and so $\B\mad(M(D))=\Delta(M(D))=10$. Since $C$ is a subcode of $C'$ then $c=\varphi^{-1}_{\alpha_1,g}\in C'$ and, clearly, $g$ satisfies conditions \textit{(2a)} and \textit{(2b)} of Theorem~\ref{caracterizacion codigos multi distancia real} for $C'$; hence $\Delta(C')=\Delta(M(g))=d(C')=10=d(C)=\Delta(C)$. Since $\dim_{\mathbb{F}_2}(C')=95$ and $\dim_{\mathbb{F}_2}(C)=87$, $C'$ is a code with better parameters than those of $C$.
}\end{example}

Next application comes from \cite[Corollary 6]{BBSAMC}.

\begin{corollary} \label{construccion distancia verdadera con  irreducibles}
 Let $\mathbb{K}$ be an intermediate field $\F\subseteq \mathbb{K} \subseteq \Le$ and $a=a(X_1)\in \mathbb{K}(r_1)$ 
be such that $a\mid X_1^{r_1}-1$ with $\varphi^{-1}_{\alpha_1,\overline{X_1^{h_1}a}}\in \F(r_1)$, for some $\alpha_1\in U_{r_1}$ and $h_1\in \Z_{r_1}$. 
 
 Let  $g$ be an irreducible factor of $X_2^{r_2}-1$ in $\mathbb{K}[X_2]$ with defining set $D_{\alpha_2}(g)$, 
for some $\alpha_2\in U_{r_2}$. Set $b=(X_2^n-1)/g$.  If there are positive integers $j,t$ such that $b(\alpha_2^j)=\alpha_2^t$ and $\gcd \left(j,\frac{r_2}{\gcd(q-1,r_2)}\right)\mid t$, then there exists $h_2\in \Z_{r_2}$ such that the abelian code  $C=\left(\varphi^{-1}_{\alpha_1,\overline{X_1^{h_1}a}}\cdot \varphi^{-1}_{\alpha_2,\overline{X_2^{h_2}b}}\right)$ in $\F(r_1,r_2)$ verifies $\Delta\left(M(ab)\right)=\Delta(C)=d(C)$.
\end{corollary}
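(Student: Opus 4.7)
The strategy is to reduce the whole statement to a direct application of Theorem~\ref{construccion con TFD inversa}. That theorem concludes precisely $\Delta(M(ab))=\Delta(C)=d(C)$ for the code $C=(\varphi^{-1}_{\alpha_1,\overline{X_1^{h_1}a}}\cdot \varphi^{-1}_{\alpha_2,\overline{X_2^{h_2}b}})$ in $\F(r_1,r_2)$, provided that $a\mid X_1^{r_1}-1$, $b\mid X_2^{r_2}-1$, and that the two univariate inverse transforms $\varphi^{-1}_{\alpha_1,\overline{X_1^{h_1}a}}$ and $\varphi^{-1}_{\alpha_2,\overline{X_2^{h_2}b}}$ actually lie in the base field rings $\F(r_1)$ and $\F(r_2)$. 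So the whole task reduces to checking these two ``rationality'' conditions.

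The first of them is handed to us explicitly: the hypothesis includes $\alpha_1\in U_{r_1}$ and $h_1\in\Z_{r_1}$ with $\varphi^{-1}_{\alpha_1,\overline{X_1^{h_1}a}}\in \F(r_1)$, and $a\mid X_1^{r_1}-1$ by assumption. Nothing else needs to be shown for the first factor.

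For the second factor we need to produce $h_2\in\Z_{r_2}$ such that $\varphi^{-1}_{\alpha_2,\overline{X_2^{h_2}b}}\in \F(r_2)$. This is precisely the content of the cited univariate criterion \cite[Corollary 6]{BBSAMC}. Its setting is: $g$ an irreducible factor of $X_2^{r_2}-1$ in $\mathbb{K}[X_2]$, $b=(X_2^{r_2}-1)/g$, and positive integers $j,t$ satisfying $b(\alpha_2^j)=\alpha_2^t$ together with the arithmetic divisibility $\gcd(j, r_2/\gcd(q-1,r_2))\mid t$; all of these appear verbatim in our hypotheses. Thus the cited corollary supplies the desired $h_2$, and after a routine check that $b\mid X_2^{r_2}-1$ (immediate from $g\mid X_2^{r_2}-1$ and $b=(X_2^{r_2}-1)/g$), we have $\varphi^{-1}_{\alpha_2,\overline{X_2^{h_2}b}}\in \F(r_2)$.

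Once both rationality conditions are in hand, we invoke Theorem~\ref{construccion con TFD inversa} to conclude $\Delta(M(ab))=\Delta(C)=d(C)$. The only substantive step is the application of the external univariate result; the rest is a bookkeeping match between its hypotheses and ours. I expect no real obstacle beyond ensuring that the parameters $(\alpha_2,j,t,h_2)$ in \cite[Corollary 6]{BBSAMC} are being used consistently with those in our statement.
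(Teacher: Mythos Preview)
Your proposal is correct and matches the paper's own approach exactly: the paper simply notes that this corollary ``comes from \cite[Corollary 6]{BBSAMC}'' together with Theorem~\ref{construccion con TFD inversa}, and gives no further argument. You have spelled out precisely this reduction, identifying that the only nontrivial hypothesis to secure is the existence of $h_2$ with $\varphi^{-1}_{\alpha_2,\overline{X_2^{h_2}b}}\in \F(r_2)$, which is supplied by the cited univariate criterion.
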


Our last application of this section is the following result that comes from \cite[Corollary 7]{BBSAMC}.

\begin{corollary}
  Let $a=a(X_1)\in \Le(r_1)$ be such that $a\mid X_1^{r_1}-1$ with $\varphi^{-1}_{\alpha_1,\overline{X_1^{h_1}a}}\in \mathbb{F}_2(r_1)$, for some $\alpha_1\in U_{r_1}$ and $h_1\in \Z_{r_1}$, and suppose $r_2=2^m-1$, for some $m\in \N$. Then there exist at least $\frac{\phi\left(r_2\right)}{m}$ binary codes $C$ of length $n=r_1r_2$ such that  $\Delta\left(M(ab
  )\right)=\Delta(C)=d(C)$.
\end{corollary}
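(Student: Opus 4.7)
The plan is to apply Corollary~\ref{construccion distancia verdadera con irreducibles} specialized to $q=2$ and $\mathbb{K}=\mathbb{F}_2$, and to obtain the lower bound $\phi(r_2)/m$ by counting the number of distinct admissible second-variable factors $b$. The polynomial $a(X_1)$ and its data $(\alpha_1,h_1)$ are given, so the whole argument reduces to producing, for each of at least $\phi(r_2)/m$ choices, an irreducible factor $g\mid X_2^{r_2}-1$ in $\mathbb{F}_2[X_2]$ together with a primitive root $\alpha_2\in U_{r_2}$ and a pair $(j,t)$ witnessing the hypothesis of that corollary.

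First I would record that, since $r_2=2^m-1$, one has $2^m\equiv 1\pmod{r_2}$, and any $m'<m$ with $2^{m'}\equiv 1\pmod{r_2}$ would force $r_2\le 2^{m'}-1<2^m-1$, a contradiction. Hence the multiplicative order of $2$ modulo $r_2$ is exactly $m$. Consequently, the $2$-cyclotomic coset $C_2(k)$ modulo $r_2$ of any integer $k$ with $\gcd(k,r_2)=1$ has size exactly $m$. As there are $\phi(r_2)$ such $k$'s in $\{1,\dots,r_2-1\}$, they split into $\phi(r_2)/m$ cosets, each producing a distinct irreducible factor of $X_2^{r_2}-1$ in $\mathbb{F}_2[X_2]$ of degree $m$, whose roots are primitive $r_2$-th roots of unity in $\mathbb{F}_{2^m}$.

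Next, fix any such irreducible factor $g$ and let $\alpha_2\in U_{r_2}$ be any root of $g$; set $b=(X_2^{r_2}-1)/g$. Since $X_2^{r_2}-1$ is separable over $\mathbb{F}_2$ (because $\gcd(r_2,2)=1$), $\alpha_2$ is not a root of $b$, so $b(\alpha_2)\ne 0$. Because $\alpha_2$ generates $\mathbb{F}_{2^m}^{*}$, we may write $b(\alpha_2)=\alpha_2^t$ for some $t$. Taking $j=1$ gives $\gcd(j,r_2/\gcd(q-1,r_2))=\gcd(1,r_2)=1$, which divides $t$ trivially; thus the hypothesis of Corollary~\ref{construccion distancia verdadera con irreducibles} is met. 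That corollary then produces an $h_2\in\Z_{r_2}$ such that the binary abelian code
$C_g=\left(\varphi^{-1}_{\alpha_1,\overline{X_1^{h_1}a}}\cdot \varphi^{-1}_{\alpha_2,\overline{X_2^{h_2}b}}\right)\subseteq \mathbb{F}_2(r_1,r_2)$
satisfies $\Delta(M(ab))=\Delta(C_g)=d(C_g)$.

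Finally, I would observe that different irreducible factors $g$ yield different codes, since the set of non-roots of $b$ among the $r_2$-th roots of unity determines, together with the data of $a$, the defining set of $C_g$ with respect to $(\alpha_1,\alpha_2)$; distinct $g$'s give disjoint supports in the $X_2$-factor. Collecting the $\phi(r_2)/m$ choices of $g$ therefore produces at least $\phi(r_2)/m$ distinct binary codes of length $n=r_1r_2$ with the desired equality. I expect the only delicate point is the counting of admissible $g$'s; the rest reduces cleanly to already-proved results, because the choice $j=1$ is what makes the divisibility condition in Corollary~\ref{construccion distancia verdadera con irreducibles} automatic in the binary case.
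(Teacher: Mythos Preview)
Your approach is essentially the paper's: the paper proves this corollary by invoking the cyclic analogue \cite[Corollary~7]{BBSAMC} and combining it with Corollary~\ref{construccion distancia verdadera con  irreducibles}, and you have unpacked precisely that argument. The order computation $\mathrm{ord}_{r_2}(2)=m$, the count of $\phi(r_2)/m$ degree-$m$ irreducible factors with primitive roots, and the verification via $j=1$ (which makes the divisibility hypothesis of Corollary~\ref{construccion distancia verdadera con  irreducibles} automatic since $\gcd(q-1,r_2)=1$ when $q=2$) are all correct.

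The only soft spot is the final paragraph on distinctness. You write that ``the set of non-roots of $b$ among the $r_2$-th roots of unity determines \ldots\ the defining set of $C_g$ with respect to $(\alpha_1,\alpha_2)$,'' but these are different objects: the defining set with respect to $(\alpha_1,\alpha_2)$ is the complement of $\supp(X_1^{h_1}a)\times\supp(X_2^{h_2}b)$, governed by the \emph{coefficient supports} of the shifted polynomials, not by the zero set of $b$. Moreover, you chose a different $\alpha_2$ for each $g$, so defining sets taken with respect to different primitive roots cannot be compared directly to conclude the ideals differ. If you want an intrinsic distinctness argument, compare the root sets $Z(C_g)\subseteq R$, which do not depend on $\bar\alpha$; alternatively, defer this point to the cited univariate result, as the paper effectively does.
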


In the following examples we take the advantage of the information for cyclic codes  from the tables appearing in \cite{BBSAMC} 

\begin{example}\label{extender codigos a partir de minimales}\rm{
 We first show in Table~\ref{tabela1} some abelian codes of lenght $7\times 15=105$ constructed from a list of divisors $a_i$ of $X_1^{7}-1$ in $\mathbb{F}_2[X_1]$ and divisors $b_j$ of $X_2^{15}-1$ in $\mathbb{F}_2[X_2]$  as in Corollary~\ref{construccion distancia verdadera con  irreducibles}. The divisors are $a_1=1+X_1$, $a_2= 1+X_1+X_1^3$, $a_3=1+X_1^2+X_1^3$, $b_1=\frac{X_2^{15}-1}{1+X_2+X_2^2}$, $b_2=\frac{X_2^{15}-1}{1+X_2+X_2^4}$ and $b_3=\frac{X_2^{15}-1}{1+X_2^3+X_2^4}$.

\begin{table}[h]
\[\begin{array}{|l|l|l|l|l|l|} \hline 
a&h_1& b&h_2&  \text{Dimension} & \Delta=d \\  \hline 
 a_2 & 1 & b_1 & 1   & 30  & 8 \\  \hline 
a_2 & 1 & b_2 & 1   & 24 &  16\\  \hline 
 a_2 & 1 & b_3 & 3   & 24 & 16 \\  \hline 
  a_3 & 3 & b_1 & 1   & 30 & 8 \\  \hline 
a_3 & 3 & b_2 & 1   & 24 &  16  \\  \hline 
 a_3 & 3 & b_3 & 3   & 24 &  16  \\  \hline 
  a_1a_3 & 0 & b_1 & 1   & 40 & 6 \\  \hline 
a_1a_3 & 0 & b_2 & 1   & 32 & 12 \\  \hline 
 a_1a_3 & 0 & b_3 & 3   & 32 & 12 \\  \hline 
   a_2a_3 & 0 & b_1 & 1   & 70 & 2 \\  \hline 
a_2a_3 & 0 & b_2 & 1   & 56 & 4 \\  \hline 
 a_2a_3 & 0 & b_3 & 3   & 56 & 4 \\  \hline 
\end{array}\]

\caption{Abelian Codes in $\mathbb F_2(7,15)$. \label{tabela1}}
\end{table}

In Table~\ref{tabela2} we also have  abelian codes of lenght $105$, but in this case we consider $r_1=5$ and $r_2=21$. 
Here we choose only one divisor of $X_1^{5}-1$ in $\mathbb{F}_2[X_1]$; the $5$-th cyclotomic polynomial, $\Phi_5$. The other divisors $b'_j$ of $X_2^{21}-1$ in $\mathbb{F}_2[X_2]$ from which we may construct abelian codes as in Corollary~\ref{construccion distancia verdadera con  irreducibles} are $b'_1=\frac{X_2^{21}-1}{1+X_2+X_2^2}$, $b'_2=\frac{X_2^{21}-1}{1+X_2+X_2^3}$,  $b'_3=\frac{X_2^{21}-1}{1+X_2^2+X_2^3}$, $b'_4=\frac{X_2^{21}-1}{1+X_2+X_2^2+X_2^4+X_2^6}$ and $b'_5=\frac{X_2^{21}-1}{1+X_2^2+X_2^4+X_2^5+X_2^6}$.
 
\begin{table}[h]
\[\begin{array}{|l|l|l|l|l|l|} \hline 
a&h_1& b&h_2&  \text{Dimension} & \Delta=d \\  \hline 
 \Phi_5 & 0 & b'_1 & 1   &70  & 2 \\  \hline 
\Phi_5 & 0 & b'_2 & 1   & 60 &  3\\  \hline 
 \Phi_5 & 0 & b'_3 & 3   & 60 & 3 \\  \hline 
  \Phi_5 & 0 & b'_4 & 1   & 40 & 6 \\  \hline 
\Phi_5 & 0 & b'_5 & 1   & 40 &  6  \\  \hline 
\end{array}\]

\caption{Abelian Codes in $\mathbb F_2(5,21)$. \label{tabela2}}
\end{table}
}\end{example}

\subsection{Application 2: True distance in BCH multivariate codes}

In \cite[Definition 33]{BBCS2}, the notion of BCH multivariate code appears. Let us recall this definition focused on the bivariate case.

 \begin{definition}
Let $\bar\gamma \subseteq \{1,2\}$ and $\bar\delta=\{(\delta_k)_{k\in \bar\gamma} \tq 2\leq  \delta_k\leq r_k\}$. An abelian code $C$ in $\F(r_1,r_2)$ is a \textbf{bivariate BCH code of designed distance $\bar\delta$} if there exists a list of positive integers $\bar b=(b_k)_{k\in\bar\gamma}$ such that
\[\D_{\overline{\alpha}}(C)=\bigcup_{k\in\bar\gamma} \bigcup_{l=0}^{\delta_k-2}\bigcup_{\bf{i}\in I(k,\overline{b_k+l})}Q(\bf{i})\]
for some $\overline{\alpha} \in U$, where $\{\overline{b_k},\dots,\overline{b_k+\delta_k-2}\}$ is a list of consecutive integers modulo $r_k$ and $I(k,u)=\{\bf i\in I\tq \bf i(k)=u\}$.

The BCH multivariate codes are denoted $B_q(\overline{\alpha},\bar\gamma,\bar\delta,\bar b)$.
\end{definition}

 Let $C$ be an abelian code in $\F(r_1,r_2)$ with $M=M\left(\D_{\overline{\alpha}}(C)\right)$ the matrix afforded by its defining set with respect to some $\bar{\alpha}=(\alpha_1,\alpha_2)\in U$. If $M$ satisfies $\supp\left(M\right)=\pi_1\left(\supp(M)\right)\times \pi_2\left(\supp(M)\right)$ then $\overline{\D_{\overline{\alpha}}(C)}=\pi_1\left(\supp(M)\right)\times \pi_2\left(\supp(M)\right)$. We set $S_1=\pi_1\left(\supp(M)\right)$ and $S_2= \pi_2\left(\supp(M)\right)$.  Then, one may consider the cyclic codes $C_1$ and $C_2$ with defining sets $D_1=\Z_{r_1}\setminus S_1$ and $D_2=\Z_{r_2}\setminus S_2$ with respect to $\alpha_1$ and $\alpha_2$, respectively (note that it may happen $\D_{\overline{\alpha}}(C)\neq D_1\times D_2$).
 
 Now suppose that the code $C$ is an abelian code as described in Theorem~\ref{construccion con TFD inversa} keeping the notation for the polynomials $a$ and $b$ and having in mind Remark~\ref{CP=rectang + afforded}. By the proof of this theorem one also may deduce that viewing $\varphi^{-1}_{\alpha_1,\overline{X_1^{h_1}a}}$ in $\F(r_1)$ and $\varphi^{-1}_{\alpha_2,\overline{X_2^{h_2}b}}$ in $\F(r_2)$ it happens that $C_1=\left(\varphi^{-1}_{\alpha_1,\overline{X_1^{h_1}a}}\right)\subseteq \F(r_1)$ and $C_2=\left(\varphi^{-1}_{\alpha_2,\overline{X_2^{h_2}b}}\right)\subseteq \F(r_2)$. It is also clear that the cyclic codes $C_1$ and $C_2$ verify that their minimum distances  equal their respective maximum BCH bounds, as $a$ and $b$ satisfy the conditions in \cite[Corollary 5]{BBSAMC}. 

 In some sense we may consider $C_1$ and $C_2$ as ``projected codes'', but clearly $C$ is not a product of them under any classical algebraic operation; however, for their defining sets the equality under the product occurs, and all properties of ds-bounds are related, as the following results show.
 
 \begin{lemma} \label{BCH multivariate code}
 Under the same notation from previous paragraphs,  let $C$ be an abelian code in $\F(r_1,r_2)$, with 
$M=M\left(\D_{\overline{\alpha}}(C)\right)$  and suppose $\supp\left(M\right)=\pi_1\left(\supp(M)\right)\times \pi_2\left(\supp(M)\right)$. Consider $D_1=\Z_{r_1}\setminus \pi_1\left(\supp(M)\right)$,  $D_2=\Z_{r_2}\setminus \pi_2\left(\supp(M)\right)$  and let $C_i$ be the 
cyclic code with $\D_{\alpha_i}(C_i)=D_i$, for $i\in \{1,2\}$. Then
 \begin{enumerate}
  \item For any set $\B$ of ds-bounds, $\Delta_{\B,\overline{\alpha}}(C)=\Delta_{\B,\alpha_1}(C_1)\cdot \Delta_{\B,\alpha_2}(C_2)$.
  \item $C$ is a nonzero BCH multivariate code if and only if $C_1$ and $C_2$ are BCH cyclic codes in the classical sense (see \cite{MWS}).
 \end{enumerate}
 
 Moreover, if case (2) holds, with $C_i=\left(\alpha_i,\delta_i,b_i\right)$, for $i\in\{1,2\}$, then $C=B_q\left((\alpha_1,
\alpha_2),\{1,2\},\{\delta_1,\delta_2\},\{b_1,b_2\}\right)$.
 \end{lemma}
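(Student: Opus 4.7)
The plan is to exploit the product structure of $M$ forced by the hypothesis $\supp(M)=S_1\times S_2$, where $S_i=\pi_i(\supp(M))$: under this condition every nonzero row of $M$ is the characteristic vector of $S_2$ and every nonzero column is the characteristic vector of $S_1$. I would first invoke Lemma~\ref{en rectangulares B-mad=da} to obtain $\B\mad(M)=\Delta_\B(M)$ and combine it with the theorem at the end of Section~\ref{sec5}, which gives $\Delta_{\B,\overline\alpha}(C)=\B\mad(M(\varphi_{\overline\alpha,e}))=\B\mad(M)$ for $e$ the generating idempotent of $C$. Hence $\Delta_{\B,\overline\alpha}(C)=\Delta_\B(M)$.

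For (1), I would then compute $\Delta_\B(M)$ directly from Definition~\ref{Bappdistmatrix}. Since $\supp_1(M)=S_1$ and $\Z_{r_1}\setminus S_1=D_1$, we get $\omega_1(M)=\max\{\overline\delta(D_1)\mid\delta\in\B\}$; since every $H_M(1,j)$ with $j\in S_1$ has support $S_2$, we get $\epsilon_1(M)=\max\{\overline\delta(D_2)\mid\delta\in\B\}$. A symmetric argument yields $\Delta_1(M)=\Delta_2(M)$, so
\[\Delta_\B(M)=\max\{\overline\delta(D_1)\mid\delta\in\B\}\cdot\max\{\overline\delta(D_2)\mid\delta\in\B\}.\]
On the cyclic side, Corollary~\ref{calculo dis aparent codigo ciclico} together with Definition~\ref{appdistbound} gives $\Delta_{\B,\alpha_i}(C_i)=\max\{\overline\delta(D_i)\mid\delta\in\B\}$, because the coefficient vector of $\varphi_{\alpha_i,e_i}$ has support $S_i=\Z_{r_i}\setminus D_i$. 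Multiplying the two factors matches the expression above, proving (1).

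For (2), I would unpack the BCH multivariate definition with $\overline\gamma=\{1,2\}$. The key orbit identity is $\bigcup_{\bf{i}\in I(1,u)}Q(\bf{i})=C_q(u)\times\Z_{r_2}$ (and analogously for $k=2$): any $(uq^i,vq^i)\in Q(u,v)$ has first coordinate in $C_q(u)$, and as $v$ ranges over $\Z_{r_2}$ the map $v\mapsto vq^i$ is a bijection since $\gcd(q,r_2)=1$. Consequently,
\[\D_{\overline\alpha}\bigl(B_q(\overline\alpha,\{1,2\},\{\delta_1,\delta_2\},\{b_1,b_2\})\bigr)=\bigl(\widetilde D_1\times\Z_{r_2}\bigr)\cup\bigl(\Z_{r_1}\times\widetilde D_2\bigr),\]
with $\widetilde D_k=\bigcup_{l=0}^{\delta_k-2}C_q(b_k+l)$. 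On the other hand, the hypothesis on $\supp(M)$ is equivalent to $\D_{\overline\alpha}(C)=(D_1\times\Z_{r_2})\cup(\Z_{r_1}\times D_2)$. Matching the two expressions, $C$ is a nonzero BCH multivariate code iff each $D_k$ has the shape $\widetilde D_k$, iff each $D_k$ is the $q$-closure of a block of consecutive integers, iff both $C_1$ and $C_2$ are classical BCH cyclic codes. The ``moreover'' clause drops out of this identification: if $C_i=(\alpha_i,\delta_i,b_i)$, then $D_i=\widetilde D_i$ and the two defining sets literally coincide.

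The only step demanding care is the orbit identity $\bigcup_{\bf{i}\in I(1,u)}Q(\bf{i})=C_q(u)\times\Z_{r_2}$ and its symmetric counterpart for $k=2$; this is what lets the defining set of a bivariate BCH code be rewritten as a clean union of rectangles. Once it is granted, (1) is a direct rewriting of Definition~\ref{Bappdistmatrix} and (2) becomes pure bookkeeping on defining sets and $q$-cyclotomic cosets.
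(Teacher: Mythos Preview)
Your argument is correct, and in two places it is actually more self-contained than the paper's.

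For (1), the paper simply invokes Corollary~\ref{propiedades matrices rectangulares} together with Remark~\ref{CP=rectang + afforded}. Since that corollary is stated only for $\B=\{\delta_{BCH}\}$, your direct unpacking of Definition~\ref{Bappdistmatrix} (combined with Lemma~\ref{en rectangulares B-mad=da} and the theorem closing Section~\ref{sec5}) is the cleaner way to treat an arbitrary $\B$. The underlying content is identical: the product hypothesis forces every nonzero row of $M$ to have support $S_2$ and every nonzero column to have support $S_1$, so $\omega_j(M)$ and $\epsilon_j(M)$ are precisely the two cyclic $\B$-apparent distances.

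For (2), the paper argues only the forward implication, and by contradiction: assuming $C$ is multivariate BCH with $1\in\bar\gamma$, it fixes $t\in D_1$, observes that the full column $\{t\}\times\Z_{r_2}\subseteq\D_{\overline\alpha}(C)$, and shows this column cannot be entirely absorbed by the $k=2$ part of the BCH defining set (using $C\neq 0$ to find a $u\in\Z_{r_2}$ outside $\bigcup_l C_q(\overline{b_2+l})$), forcing $t\in C_q(\overline{b_1+l})$ for some $l$. Your route via the orbit identity $\bigcup_{\mathbf i\in I(1,u)}Q(\mathbf i)=C_q(u)\times\Z_{r_2}$ rewrites the BCH defining set directly as $(\widetilde D_1\times\Z_{r_2})\cup(\Z_{r_1}\times\widetilde D_2)$ and matches it against $(D_1\times\Z_{r_2})\cup(\Z_{r_1}\times D_2)$; this delivers both implications and the ``moreover'' clause at once. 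One small caution: the multivariate BCH definition permits $\bar\gamma\subsetneq\{1,2\}$, in which case one $\widetilde D_i$ is absent and the corresponding $D_i$ is empty (so $C_i$ is the whole ring); you should remark that this degenerate case still fits your rectangle decomposition with $\widetilde D_i=\emptyset$, rather than tacitly assuming $\bar\gamma=\{1,2\}$ from the outset.
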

\begin{proof}
 Assertion \textit{(1)} comes from Corollary~\ref{propiedades matrices rectangulares}, having in mind Remark~\ref{CP=rectang + afforded}.
 
 Now we prove assertion \textit{(2)}. First, suppose that $C$ is a multivariate BCH code. Assume that $1\in \bar \gamma$, and let $B=\{\overline{b_1},\dots,\overline{b_1+\delta_1-2}\}$ be its list of consecutive integers 
modulo $r_1$. Consider a $q$-cyclotomic coset $T\subseteq D_1$ and take $t\in T$. Since $t\in D_1$ 
then the set $(t,\Z_{r_2})\subseteq \D_{\overline{\alpha}}(C) $ (with the obvious meaning). 
 
 If $(t,\Z_{r_2})\cap Q(\overline{b_1+l},\Z_{r_2})\neq \emptyset$, for some $l\in\{0,\dots,\delta_1-2\}$, then we are done. Otherwise, it must happen  $2\in \bar \gamma$ and $(t,\Z_{r_2})\subset \bigcup_{l=0}^{\delta_2-2}Q(\Z_{r_1},\overline{b_2+l})$. Since $C\neq 0$ then  we may take an element $u\in \Z_{r_2}\setminus \bigcup_{l=0}^{\delta_2-2}C_q(\overline{b_2+l})$ and clearly $(t,u)\not\in \bigcup_{l=0}^{\delta_2-2}Q(\Z_{r_1},\overline{b_2+l})$, a contradiction.
 
 The final assertion comes immediately from the fact that $M$ is a CP-matrix.
\end{proof}

\begin{theorem} \label{construccion bch bi con ciclicos}
Let $\mathbb{K}$ be an intermediate field $\F\subseteq \mathbb{K} \subseteq \Le$, $a=a(X_1)\in \mathbb{K}(r_1)$ and 
$b=b(X_2)\in \mathbb{K}(r_2)$ be such that $a\mid X_1^{r_1}-1$ and $b\mid X_2^{r_2}-1$. If there exist $(\alpha_1,\alpha_2)\in U$, 
$h_1\in \Z_{r_1}$ and $h_2\in\Z_{r_2}$  for which $\varphi^{-1}_{\alpha_1,\overline{X_1^{h_1}a}}\in \F(r_1)$ and 
$\varphi^{-1}_{\alpha_2,\overline{X_2^{h_2}b}}\in \F(r_2)$, with at least one of the inverses different from $1$,  
then there exists in $\F(r_1,r_2)$ a family of permutation equivalent BCH multivariate  codes $\left\{C_{\overline{\beta}}
=B_q(\overline{\beta},\bar \gamma,\bar \delta,\bar b)\tq \overline{\beta}\in U\right\}$   such that 
\begin{enumerate}
 \item $\bar \gamma\subseteq\{1,2\}$  and
  \begin{enumerate}
   \item If $\supp(a)=\Z_{r_1}$ then $1\not \in \bar \gamma$.
   \item If $\supp(b)=\Z_{r_2}$ then $2\not \in \bar \gamma$.
  \end{enumerate}
 \item $\bar \delta=\left\{\delta_k\tq k\in \bar \gamma\right\}$ with $\delta_1=\Delta(M(a))$ and $\delta_2=\Delta(M(b))$.
\item $\displaystyle{\prod_{k\in\bar \gamma}}\delta_k=\Delta(C_{\overline{\beta}})=d(C_{\overline{\beta}})$, for each 
$\bar\beta\in U$.
  \item $\varphi^{-1}_{\beta_1,\overline{X_1^{h_1}a}}\cdot \varphi^{-1}_{\beta_2,\overline{X_2^{h_2}b}}=\varphi^{-1}_{\overline{\beta},\overline{X_1^{h_1}a}\overline{X_2^{h_2}b}}\in C_{\overline{\beta}}$, where $\overline{\beta}=(\beta_1,\beta_2)$.
\end{enumerate}
\end{theorem}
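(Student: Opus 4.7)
The plan is to exhibit the family explicitly as $C_{\overline{\beta}} = \left(\varphi^{-1}_{\beta_1,\overline{X_1^{h_1}a}}\cdot\varphi^{-1}_{\beta_2,\overline{X_2^{h_2}b}}\right)$ for $\overline{\beta}=(\beta_1,\beta_2)\in U$, and to verify conditions (1)--(4) in turn. The starting point is Theorem~\ref{construccion con TFD inversa}: under the stated hypotheses it produces precisely this family, guarantees pairwise permutation equivalence, and delivers the chain of equalities $\Delta(M(ab)) = \Delta(C_{\overline{\beta}}) = d(C_{\overline{\beta}})$ for every $\overline{\beta}\in U$. Property (4) is then immediate from the multiplicative behaviour of the inverse discrete Fourier transform on products of polynomials with separated variables.

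The next step is to show that every $C_{\overline{\beta}}$ is a BCH multivariate code with the structure stated in (1) and (2). Since $M(\overline{X_1^{h_1}a}\cdot\overline{X_2^{h_2}b})$ is a CP-matrix, Remark~\ref{CP=rectang + afforded} gives a rectangular complement $\overline{\D_{\overline{\beta}}(C_{\overline{\beta}})} = S_1\times S_2$ with $S_1 = \supp(\overline{X_1^{h_1}a})$ and $S_2 = \supp(\overline{X_2^{h_2}b})$. Lemma~\ref{BCH multivariate code}(2) then reduces the problem to checking that the projected cyclic codes $C_k = \left(\varphi^{-1}_{\beta_k,\overline{X_k^{h_k}a_k}}\right)$ (with $a_1=a$, $a_2=b$) are classical BCH codes; this is guaranteed by \cite[Corollary 5]{BBSAMC}, which yields $d(C_k)=\Delta(M(a_k))$ and identifies this number as the BCH bound of $C_k$. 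Defining $\bar\gamma = \{k : D_k = \Z_{r_k}\setminus S_k \neq \emptyset\}$ then gives the conditions in (1) (if $\supp(a_k)=\Z_{r_k}$ then $D_k=\emptyset$ and $k$ is excluded), and the hypothesis that at least one inverse transform is different from $1$ forces $\bar\gamma\neq\emptyset$; hence $C_{\overline{\beta}}$ is a nonzero BCH multivariate code with designed distances $\delta_k = \Delta(M(a_k))$, proving (2) and the final identification of $C_{\overline{\beta}}$ as $B_q(\overline{\beta},\bar\gamma,\bar\delta,\bar b)$ via Lemma~\ref{BCH multivariate code}.

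For (3), I would invoke Corollary~\ref{propiedades matrices rectangulares} applied to the CP-matrix $M(ab)$, which gives $\Delta(M(ab)) = \Delta(M(a))\cdot\Delta(M(b))$. Combined with the equalities from the first step, this yields $\prod_{k\in\bar\gamma}\delta_k = \Delta(C_{\overline{\beta}}) = d(C_{\overline{\beta}})$; indices outside $\bar\gamma$ correspond to a full-support polynomial $a_k$ and hence to $\Delta(M(a_k))=\overline{\delta}_{BCH}(\emptyset)=1$, so they genuinely contribute no factor to the product. The main obstacle I expect is the careful identification of the projected cyclic codes $C_k$ as classical BCH codes through \cite[Corollary 5]{BBSAMC}, together with a clean treatment of the degenerate case in which one inverse transform equals $1$ (where the corresponding projected cyclic code is the full ring and the index drops out of $\bar\gamma$). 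Everything else reduces to bookkeeping with the CP-matrix structure and the multiplicativity of $\Delta$ on such matrices.
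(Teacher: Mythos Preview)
Your proposal has a genuine gap, exactly at the point you flagged as the main obstacle. You define $C_{\overline{\beta}}=\left(\varphi^{-1}_{\beta_1,\overline{X_1^{h_1}a}}\cdot\varphi^{-1}_{\beta_2,\overline{X_2^{h_2}b}}\right)$ and then try to show it is a BCH multivariate code by applying Lemma~\ref{BCH multivariate code}(2), which requires the projected cyclic codes $C_k=\left(\varphi^{-1}_{\beta_k,\overline{X_k^{h_k}a_k}}\right)$ to be classical BCH codes. But \cite[Corollary 5]{BBSAMC} does not assert this: it only says that $d(C_k)$ equals the maximum BCH bound of $C_k$, which is a statement about the minimum distance, not about the shape of the defining set. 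In general $C_k$ is \emph{not} a BCH code. A concrete instance is Example~\ref{ejemplo 3 por 45}: there the projected code in $\mathbb{F}_2(45)$ has defining set $C_2(1)\cup C_2(3)\cup C_2(9)\cup C_2(21)$, which is not the union of cosets of a string of consecutive integers (the BCH code $B_2(\alpha_2,5,1)$ has the smaller defining set $C_2(1)\cup C_2(3)$). Hence your $C_{\overline{\beta}}$ need not be a BCH multivariate code, and statements~(1) and~(2) fail for it.

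The paper's proof avoids this by \emph{not} taking $C_{\overline{\beta}}$ to be the code from Theorem~\ref{construccion con TFD inversa}. Instead it invokes \cite[Theorem~2]{BBSAMC}, which for each factor produces a genuine BCH cyclic code $B(a)=B_q(\beta_1,\delta_1,b_1)$ and $B(b)=B_q(\beta_2,\delta_2,b_2)$ \emph{containing} the respective inverse transforms, with designed distances $\delta_1=\Delta(M(a))$ and $\delta_2=\Delta(M(b))$. One then \emph{defines} $C_{\overline{\beta}}$ by $\overline{\D_{\overline{\beta}}(C_{\overline{\beta}})}=\overline{\D_{\beta_1}(B(a))}\times\overline{\D_{\beta_2}(B(b))}$; this is a (possibly strictly) larger code than yours, and by construction its projected codes are $B(a)$ and $B(b)$, so Lemma~\ref{BCH multivariate code}(2) applies directly. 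Statement~(4) then holds because the product of inverse transforms lies in the smaller code, hence in $C_{\overline{\beta}}$; and~(3) follows since $\Delta_{\overline{\beta}}(C_{\overline{\beta}})=\delta_1\delta_2=\Delta(M(ab))=\left|\overline{Z(ab)}\right|$, so Theorem~\ref{caracterizacion codigos multi distancia real} gives $\Delta(C_{\overline{\beta}})=d(C_{\overline{\beta}})$. The passage through Theorem~\ref{construccion con TFD inversa} that you proposed is unnecessary here and, more importantly, produces the wrong code.
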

\begin{proof} 
 Set $\bar b=\left\{b_k\tq k\in\bar \gamma\right\}$. First, suppose $\supp(M(a))=\supp(a)\neq \Z_{r_1}$ and $\supp(M(b))=\supp(b)\neq \Z_{r_2}$. Then by \cite[Theorem 2]{BBSAMC}, for each 
$\overline{\beta}\in U$, there exist BCH cyclic codes $B(a)=B_q(\beta_1,\delta_1,b_1)$ and $B(b)=B_q(\beta_2,\delta_2,b_2)$ such that $\delta_1=\Delta(M(a))$, $\delta_2=\Delta(M(b))$, $\varphi^{-1}_{\beta_1,\overline{X_1^{h_1}a}} \in B(a)$ and $ \varphi^{-1}_{\beta_2,\overline{X_2^{h_2}b}}\in B(b)$. Now let $C=C_{\overline{\beta}}$ be the abelian code with  $\overline{\D_{\overline{\beta}}(C)}=\overline{\D_{\beta_1}(B(a))}\times \overline{\D_{\beta_2}(B(b))}$ and set $M=M\left(\D_{\overline{\beta}}(C)\right)$. It is clear that $M$ is a CP-matrix and, moreover, following the notation of Lemma~\ref{BCH multivariate code}, $D_1=\D_{\beta_1}(B(a))$ and $D_2=\D_{\beta_2}(B(b))$.  Then $C$ is a bivariate BCH code with $\bar \gamma=\{1,2\}$, $\bar \delta=\{\delta_1,\delta_2\}$ and $b=\{b_1,b_2\}$. Now by statement \textit{(1)} of Lemma~\ref{BCH multivariate code}, we have $\Delta_{\overline{\beta}}(C)=\delta_1\delta_2$ and clearly statement \textit{ (5)} of this theorem holds.
  
  It remains to prove the equality $\Delta(C_{\overline{\beta}})=d(C_{\overline{\beta}})$. On the one hand, we have $\Delta_{\overline{\beta}}(C)=\Delta(M(a))\Delta(M(b))=\Delta(M(ab))$ and, on the other hand, by hypothesis,   $\Delta(M(ab))=\left|\overline{Z(ab)}\right|$. Hence, by applying Theorem~\ref{caracterizacion codigos multi distancia real},  we are done.
\end{proof}

We may take, again, advantage from cyclic codes to transform a given abelian code $C=(g)$, with $d(C)=\Delta(C)$ into another abelian code with higher dimension, as in Example~\ref{ejemplo mejorar la dimension 3 por 45}, until to get a new BCH code.

\begin{example}\rm{
 We continue with the code $C$ from Example~\ref{ejemplo 3 por 45}. Recall that $q=2$, $r_1=3$, $r_2=45$ and we have fixed $\alpha_1\in U_3$ and $\alpha_2 \in U_{45}$. We have polynomials $a=X+1$ and $b=Y^{40}+Y^{39}+Y^{38}+ Y^{36}+Y^{35}+Y^{32}+Y^{30}+Y^{25}+Y^{24}+Y^{23}+Y^{21}+ Y^{20}+Y^{17}+ Y^{15}+Y^{10}+Y^9+ Y^8+Y^6+Y^5+Y^2+1$ such that $a\mid X_1^3-1$ and $h_1=1$ works, and $b\mid X_2^{45}-1$ and $h_2=5$ works in the sense of Theorem~\ref{construccion bch bi con ciclicos}. Hence the hypothesis of this theorem are satisfied.
 
 Now we have to follow the proof to construct our multivariate BCH code. The proof of Theorem~\ref{construccion bch bi con ciclicos}  uses a construction from \cite[Theorem 2]{BBSAMC}. Clearly $B(a)$ is the BCH code in $\mathbb{F}_2(3)$ with defining set $D_{\alpha_1}(B(a))=C_2(1)$. On the other hand, by \cite[Example 8]{BBSAMC}, the code $B(b)$ has defining set $D_{\alpha_2}(B(b))=C_2(1)\cup C_2(3)$ so that it is a BCH code. In fact, $B(b)=B_2(\alpha_2,5,1)$, following the usual notation for BCH codes.
 
 Thus $C_{(\alpha_1,\alpha_2)}=B_2\left((\alpha_1,\alpha_2),\{1,2\},\{2,5\},\{1,1\}\right)$, $d\left(C_{(\alpha_1,\alpha_2)}\right)=10$ and $\dim_{\mathbb{F}_2}\left(C_{(\alpha_1,\alpha_2)}\right)=58$. This code has better parameters than the code $C$ from Example~\ref{ejemplo 3 por 45} and  $C'$  from Example~\ref{ejemplo mejorar la dimension 3 por 45}.
}\end{example}

We finish by extending Corollary~\ref{construccion distancia verdadera con  irreducibles} to bivariate BCH codes.

\begin{corollary}
 Let $\mathbb{K}$ be an intermediate field $\F\subseteq \mathbb{K} \subseteq \Le$ and $a=a(X_1)\in \Le(r_1)$ be 
such that $a\mid X_1^{r_1}-1$, with $\varphi^{-1}_{\alpha_1,\overline{X_1^{h_1}a}}\in \F(r_1)$, for some $\alpha_1\in U_{r_1}$ 
and $h_1\in \Z_{r_1}$. 
 
 Let  $g$ be an irreducible factor of $X_2^{r_2}-1$ in $\mathbb{K}[X_2]$ with defining set $D_{\alpha_2}(g)$, 
for some $\alpha_2\in U_{r_2}$. Set $b=(X_2^n-1)/h$.  If there are positive integers $j,t$ such that 
$b(\alpha_2^j)=\alpha_2^t$ and $\gcd \left(j,\frac{r_2}{\gcd(q-1,r_2)}\right)\mid t$ then
there exists a bivariate BCH code  $C=B_q(\overline{\alpha},\bar \gamma,\bar \delta,\bar b)$ 
in $\F(r_1,r_2)$ verifying $\Delta\left(M(ab)\right)=\Delta(C)=d(C)$, for certain $\overline{\alpha},
\bar \gamma,\bar \delta,\bar b$.
\end{corollary}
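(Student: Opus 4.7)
The plan is to combine two earlier ingredients. First I would note that all the hypotheses on the second variable are already tailored to invoke \cite[Corollary 6]{BBSAMC} (the univariate version): $g$ is irreducible, $b = (X_2^{r_2}-1)/g$, and the gcd condition $\gcd(j, r_2/\gcd(q-1,r_2)) \mid t$ holds for the given integers $j,t$ with $b(\alpha_2^j)=\alpha_2^t$. That corollary yields an exponent $h_2 \in \Z_{r_2}$ such that $\varphi^{-1}_{\alpha_2,\overline{X_2^{h_2}b}} \in \F(r_2)$. Combined with the hypothesis already provided for $a$, I have simultaneously $\varphi^{-1}_{\alpha_1,\overline{X_1^{h_1}a}} \in \F(r_1)$ and $\varphi^{-1}_{\alpha_2,\overline{X_2^{h_2}b}} \in \F(r_2)$, which is exactly the input needed by Theorem~\ref{construccion bch bi con ciclicos}.

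Next I would check the small side condition of that theorem, namely that at least one of the two inverse transforms differs from $1$. This is immediate: since $g$ is a nonconstant irreducible divisor of $X_2^{r_2}-1$, the polynomial $b$ has degree strictly less than $r_2$, so $\overline{X_2^{h_2}b}$ is a proper divisor of $X_2^{r_2}-1$ and the associated idempotent in $\F(r_2)$ cannot be $1$. With this verified, Theorem~\ref{construccion bch bi con ciclicos} delivers, for any $\overline{\beta}\in U$, a bivariate BCH code $C_{\overline{\beta}} = B_q(\overline{\beta},\bar\gamma,\bar\delta,\bar b)$, where $\bar\gamma \subseteq \{1,2\}$ is determined by the non-full supports of $a$ and $b$, $\delta_1 = \Delta(M(a))$, $\delta_2 = \Delta(M(b))$, and $\prod_{k\in\bar\gamma}\delta_k = \Delta(C_{\overline{\beta}}) = d(C_{\overline{\beta}})$.

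To finish, I would translate $\prod_{k\in\bar\gamma}\delta_k$ into $\Delta(M(ab))$ via Corollary~\ref{propiedades matrices rectangulares}, which says that for a CP-matrix $P=M(ab)$ satisfying the divisibility conditions on $a$ and $b$ one has $\Delta(M(ab)) = \Delta(M(a))\cdot \Delta(M(b))$; any variable whose polynomial has full support contributes a trivial factor of $1$ and is harmlessly dropped from $\bar\gamma$. Chaining these equalities gives the desired $\Delta(M(ab)) = \Delta(C) = d(C)$ for $C = C_{(\alpha_1,\alpha_2)}$.

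The main obstacle I anticipate is not any single computation but the bookkeeping of parameters: one has to check that the exponent $h_2$ produced by the univariate corollary is really the one feeding into the BCH structure, and that the BCH designed distances $\delta_1, \delta_2$ on the two variables match the apparent distances of $M(a)$ and $M(b)$. This is precisely the content absorbed by Lemma~\ref{BCH multivariate code}, which identifies the bivariate BCH structure with the direct product of two univariate BCH codes whenever the afforded matrix is a CP-matrix; once that identification is invoked, the rest is a straightforward concatenation of the quoted results.
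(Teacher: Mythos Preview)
Your proposal is correct and follows essentially the same route as the paper, which simply says the result ``comes immediately from Corollary~\ref{construccion distancia verdadera con  irreducibles} together with Theorem~\ref{construccion bch bi con ciclicos}.'' You have merely unpacked the first reference (which itself rests on \cite[Corollary 6]{BBSAMC} to produce the exponent $h_2$) and then fed the resulting data into Theorem~\ref{construccion bch bi con ciclicos}, adding the explicit check of the side condition and the identification $\Delta(M(ab))=\prod_{k\in\bar\gamma}\delta_k$; this is more detailed than the paper but not a different argument.
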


\begin{proof}
 Comes immediately from Corollary~\ref{construccion distancia verdadera con  irreducibles} together with Theorem~\ref{construccion bch bi con ciclicos}.
\end{proof}

\begin{example}\rm{
 We shall extend to bivariate BCH codes those abelian codes on Table~\ref{tabela1} from 
Example~\ref{extender codigos a partir de minimales}. Recall that we had a list of divisors $a_i$ of $X_1^{7}-1$ in 
$\mathbb{F}_2[X_1]$ and divisors $b_j$ of $X_2^{15}-1$ in $\mathbb{F}_2[X_2]$, namely, $a_1=1+X_1$, $a_2= 1+X_1+X_1^3$, 
$a_2=1+X_1^2+X_1^3$, $b_1=\frac{X_2^{15}-1}{1+X_2+X_2^2}$, $b_2=\frac{X_2^{15}-1}{1+X_2+X_2^4}$ and 
$b_3=\frac{X_2^{15}-1}{1+X_2^3+X_2^4}$ from which we constructed the mentioned table. 
 
 Now, one may check easily that the codes determined by $a_2$, $a_3$ and $a_1a_3$ are all BCH. Specifically, 
$B(a_2)=B_2\left(\alpha_1,4,5\right)$, $B(a_3)=B_2\left(\alpha_1,4,0\right)$ and $B(a_1a_3)=B_2\left(\alpha_1,3,5\right)$, while the code determined by  $a_2a_3$ is all $\mathbb{F}_2(7)$.
On the other hand, as it is shown in \cite[Example 9]{BBSAMC} one may construct from $b_1$ the code $B(b_1)=B_2(\alpha_1,2,0)$, of dimension $14$, from $b_2$ the code $B_2(\alpha_1,4,13)$ of dimension 10, and from $b_3$  the code $B_2(\alpha_1,4,0)$ which also has dimension 10.
 
 Thus, Table~\ref{tabela3} is the new table of bivariate BCH codes in $\mathbb{F}_2(7,15)$:

 \begin{table}[h]
\[\begin{array}{|l|l|l|l|} \hline 
\bar \gamma& \bar{b} &  \text{Dimension} & \Delta=d \\  \hline 
 \{1,2\}& \{5,0\}    & 42  & 8 \\  \hline 
 \{1,2\} & \{5,13\}    & 40 &  16\\  \hline 
 \{1,2\} & \{5,0\}   & 40 & 16 \\  \hline 
  \{1,2\} & \{0,0\}   & 42 & 8 \\  \hline 
\{1,2\} & \{0,13\}   & 40 &  16  \\  \hline 
 \{1,2\} & \{0,0\}   & 40 &  16  \\  \hline 
  \{1,2\} & \{5,0\}  & 56 & 6 \\  \hline 
\{1,2\}  & \{5,13\}   & 40 & 12 \\  \hline 
 \{1,2\} & \{5,0\}   & 40 & 12 \\  \hline 
   \{2\} & \{0\} & 98 & 2 \\  \hline 
\{2\}  & \{13\}   & 70 & 4 \\  \hline 
 \{2\} & \{0\}   & 70 & 4 \\  \hline 
\end{array}\]
\caption{Bivariate BCH codes in $\mathbb{F}_2(7,15)$:\label{tabela3}}
\end{table}
 \vspace{1cm}

In the case of codes in $\mathbb{F}_2(5,21)$, the code determined by $a=\Phi_5$ is $\mathbb{F}_2(5)$, so  we construct $B(b'_1)=B_2\left(\alpha_2,2,0\right)$, $B(b'_2)=B_2\left(\alpha_2,3,19\right)$, $B(b'_3)=B_2\left(\alpha_2,3,1\right)$, $B(b'_4)=B_2\left(\alpha_2,6,17\right)$ and $B(b'_5)=B_2\left(\alpha_2,6,0\right)$. Table~\ref{tabela4} is the new table of bivariate BCH codes in $\mathbb{F}_2(5,21)$.
 
 \begin{table}[h]
\[\begin{array}{|l|l|l|l|} \hline 
\bar \gamma& \bar{b} &  \text{Dimension} & \Delta=d \\  \hline 
 \{2\}& \{0\}    & 100  & 2 \\  \hline 
 \{2\} & \{19\}    & 75 &  3\\  \hline 
 \{2\} & \{1\}   & 75 & 3 \\  \hline 
  \{2\} & \{17\}   & 55 & 6 \\  \hline 
\{2\} & \{0\}   & 55 &  6  \\  \hline 
\end{array}\]
\caption{Bivariate BCH codes in $\mathbb{F}_2(5,21)$.\label{tabela4}}
\end{table}
}\end{example}

\section{Conclusion}
We have developed a technique to extend any bound for the minimum distance of cyclic codes constructed from its defining sets (ds-bounds) to abelian (or multivariate) codes through the notion of $\B$-apparent distance. We used this technique to improve the searching for new bounds for abelian codes having unknown minimum distance. We have also studied conditions for an abelian code to verify that its  $\B$-apparent distance reaches its (true) minimum distance and we have constructed some tables of such codes as an application.

%




\end{document}